\chardef\bslash=`\\ 
\newtheorem[{}\it]{thm}{Theorem}[section]
\newtheorem{cor}[thm]{Corollary}
\newtheorem{lem}[thm]{Lemma}
\newtheorem{prop}[thm]{Proposition}
\theoremstyle{remark} 
\theoremstyle{definition}
\newtheorem{defn}{\textsc{Definition}}[section]
\newtheorem{rem}{Remark}[section]
\newtheorem*[{}\it]{notation}{Notation}
\newtheorem*[{}\it]{rest}{\textsc{Theorem}}
\newtheorem*[{}\it]{quest}{\textsc{Question}}
\newtheorem*[{}\it]{projone}{\texttt{Project 1 (Future Work)}}
\newtheorem*[{}\it]{projtwo}{\texttt{Project 2 (Current Work)}}
\newtheorem*[{}\it]{projthree}{\texttt{Project 3 (Current Work)}}
\newtheorem*[{}\it]{projfour}{\texttt{Project 4 (Future Work)}}
\newtheorem*[{}\it]{projfive}{\texttt{Project 5 (Future Work)}}
\newtheorem*[{}\it]{proofoflemma}{Proof of Lemma}
\title[]{On Collision Invariants for Linear Scattering}
\author[L. Saint-Raymond]{Laure Saint-Raymond}\thanks{L. Saint-Raymond, {\em Department of Mathematics, Harvard University, and} \\ {\em D\'{e}partement de Math\'{e}matiques et Applications, \'{E}cole Normale Sup\'{e}rieure, Paris} \href{mailto:laure@math.harvard.edu}{(laure@math.harvard.edu)}}
\author[M. Wilkinson]{Mark Wilkinson}\thanks{M. Wilkinson, {\em Courant Institute of Mathematical Sciences, New York University} \href{mailto:mwilkins@cims.nyu.edu}{(mwilkins@cims.nyu.edu)}}
\newcommand{\ov}{\overline}
\newcommand{\tempb}{\multicolumn{1}{c|}{I_{4}-2\widehat{\gamma}(\psi)\otimes \widehat{\gamma}(\psi)}}
\newcommand{\tempd}{\multicolumn{1}{|c}{I_{2}}}
\newcommand{\boundellipse}[3]
{(#1) ellipse (#2 and #3)
}
\begin{document}

\maketitle

\begin{abstract}
\noindent In this article, we extend the result of \textsc{Boltzmann} \cite{boltzmann2012wissenschaftliche} on characterisation of collision invariants from the case of hard disks to a class of two-dimensional compact, strictly-convex particles.
\end{abstract}
%

\vspace{3mm}

\section{Introduction}

Understanding the statistical behaviour of dynamical systems comprised of identical interacting particles has been a well-studied problem since the work of \textsc{Boltzmann} \cite{boltzmann2012wissenschaftliche} in the kinetic theory of rarified gases. By studying the precise way in which particles scatter from each other following a collision, one is able to derive information about {\em macroscopic} properties of the system, such as the evolution of the local density of the gas or local propagation of heat. A great portion of the kinetic theory literature is devoted to the study of systems in which the identical particles are perfect spheres. However, it is a very natural question to understand in what ways the statistical properties of systems of non-spherical particles differ from those composed of their perfectly spherical counterparts.

In this article, we offer a preliminary contribution to the extension of the theory of the Boltzmann equation from hard spheres to general hard particles. In the first part of this work, we study the physical dynamics of compact, strictly-convex bodies which do not interpenetrate. Moreover, we restrict our attention to systems of two identical particles, thereby considering binary particle interactions alone. The first important step in studying such systems is to construct suitable physical boundary conditions for a dynamics (by means of {\em scattering maps}) when the two hard particles collide, in order that trajectories in phase space may be defined globally in time. By `physical' boundary conditions, we mean that (i) the particles should not interpenetrate following collision, and (ii) there should also be conservation of total linear momentum, angular momentum and kinetic energy of the two particles through any collision event. However, it is important to note here that, according to \textsc{Wilkinson} \cite{wilk}, it is not possible to construct a family of scattering matrices corresponding to the collision of two non-spherical particles which conserves their total linear momentum, angular momentum and kinetic energy. Nevertheless, with the extension of Boltzmann's equation to systems of non-spherical particles in mind, we construct and study families of scattering matrices for two particle systems which conserve total linear momentum and kinetic energy of the colliding particles.

The second and principal part of this paper is devoted to the important topic of {\em collision invariants} for non-spherical particle scattering in kinetic theory. To illustrate the importance of collision invariants, we turn very briefly to the case of hard particles with spherical symmetry in $\mathbb{R}^{d}$ (which are hard disks in the case $d=2$, but hard spheres in the case $d=3$) and the classical Boltzmann equation.
\subsection{The Boltzmann Equation and Collision Invariants}
It is well known that the Boltzmann equation for the 1-particle density function $f=f(x, v, t)$ given by
\begin{equation}\label{boltzmanneq}
\frac{\partial f}{\partial t}+(v\cdot \nabla_{x})f=\mathcal{C}(f, f) \quad \text{for}\hspace{2mm} (x, v)\in\mathbb{R}^{d}\times\mathbb{R}^{d}
\end{equation}
is a candidate PDE to describe the statistical properties of systems of $N$ hard particles with spherical symmetry in the Boltzmann-Grad limit as $N\rightarrow \infty$ and $\varepsilon\rightarrow 0$ with $N\varepsilon^{d-1}=1$, where $\varepsilon>0$ denotes the radius of any given particle. The unique family of scattering matrices $\{\sigma_{n}\}_{n\in\mathbb{S}^{d-1}}$ which resolves a collision between two spherical particles, in such a way that properties (i) and (ii) above are satisfied, are the reflection matrices
\begin{equation*}
\sigma_{n}:=I-2\widehat{\gamma}_{n}\otimes\widehat{\gamma}_{n}\in\mathrm{O}(2d),
\end{equation*}
with $\widehat{\gamma}_{n}:=\frac{1}{\sqrt{2}}[n, -n]$, where $n\in\mathbb{S}^{d-1}$ denotes the direction connecting the centres of mass of the two spheres at collision. The {\em collision operator} $\mathcal{C}(f, f)$ that appears in \eqref{boltzmanneq} is given by
\begin{equation}\label{collisionoperator}
\mathcal{C}(f, f):=\frac{1}{2}\int_{\mathbb{R}^{d}}\int_{\mathbb{S}^{d-1}_{+}}|(v-\ov{v})\cdot n|\big(f(x, v_{n}', t)f(x, \ov{v}_{n}', t)-f(x, v, t)f(x, \ov{v}, t)\big)\,dn d\ov{v},
\end{equation} 
where the `post-collisional' velocities $[v_{n}', \ov{v}_{n}']:=\sigma_{n}[v, \ov{v}]\in\mathbb{R}^{2d}$ are
\begin{equation*}
v_{n}'=v-[(v-\ov{v})\cdot n]n \quad\text{and}\quad \ov{v}_{n}'=\ov{v}+[(v-\ov{v})\cdot n]n.
\end{equation*}

In order to derive laws for the local conservation of mass, linear momentum and kinetic energy associated to the Boltzmann equation, one must consider velocity averages of solutions of \eqref{boltzmanneq} with respect to an appropriate integrable function $\phi:\mathbb{R}^{d}\rightarrow\mathbb{R}$, and in turn use elementary properties of the family of Boltzmann scattering matrices $\{\sigma_{n}\}_{n\in\mathbb{S}^{d-1}}$. Indeed, one can show formally that
\begin{equation*}
\frac{\partial}{\partial t}\int_{\mathbb{R}^{d}}\phi f\,dv+\nabla_{x}\cdot\int_{\mathbb{R}^{d}}\phi fv\,dv = \frac{1}{4}\int_{\mathbb{R}^{d}}\mathcal{C}(f, f)(\phi(v)+\phi(\ov{v})-\phi(v_{n}')-\phi(\ov{v}_{n}'))\,dv,
\end{equation*}
whence
\begin{equation*}
\frac{\partial}{\partial t}\int_{\mathbb{R}^{d}}\phi f\,dv+\nabla_{x}\cdot\int_{\mathbb{R}^{d}}\phi fv\,dv=0
\end{equation*}
if $\phi$ satisfies the identity
\begin{equation}\label{fi}
\phi(v_{n}')+\phi(\ov{v}_{n}')=\phi(v)+\phi(\ov{v}),
\end{equation}
for all $V=[v, \ov{v}]\in\mathbb{R}^{2d}$ and $n\in\mathbb{S}^{d-1}$. By choosing $\phi=\phi(v)$ to be $1, v$ or $\frac{1}{2}|v|^{2}$, one recovers PDE expressing the local conservation of mass, linear momentum and kinetic energy for $f$, respectively. 

Another important observation in the theory of the Boltzmann equation is that the entropy map
\begin{equation*}
f\mapsto\int_{\mathbb{R}^{d}}\int_{\mathbb{R}^{d}}f\log{f}\,dxdv
\end{equation*}
is a formal Lyapunov functional for the dynamics generated by \eqref{boltzmanneq}, since it can be shown that
\begin{equation}\label{entrop}
\int_{\mathbb{R}^{d}}\mathcal{C}(f, f)\log{f}\,dv=-\frac{1}{4}\int_{\mathbb{R}^{d}}\mathcal{C}(f, f)\log\left(\frac{f'\ov{f}'}{f\ov{f}}\right)\,dv\leq 0
\end{equation}
with equality holding if and only if $f$ is a Maxwellian distribution $f_{\mathrm{M}}$,
\begin{equation*}
f_{\mathrm{M}}(v)=\frac{\rho}{(2\pi\Theta)^{d/2}}\exp\left(-\frac{|v-u|^{2}}{2\Theta}\right) \quad \text{for some}\hspace{2mm} \rho, \Theta>0 \hspace{2mm} \text{and}\hspace{2mm} u\in\mathbb{R}^{d}.
\end{equation*}
In order to demonstrate that all minimisers of the entropy functional \eqref{entrop} (amongst a suitable class of admissible functions) are indeed Maxwellia, one also needs to characterise all solutions $\phi:\mathbb{R}^{d}\rightarrow\mathbb{R}$ of the functional equation \eqref{fi}. Knowledge of all collision invariants also provides us with the nullspace of $\mathcal{L}_{f_{M}}$, the linearisation of the collision operator \eqref{collisionoperator} about a global Maxwellian $f_{M}$, which is crucial when it comes to investigating the behaviour of perturbations of equilibrium solutions of the Boltzmann equation \eqref{boltzmanneq}. Moreover, characterisation of collision invariants is important for establishing rigorous connections between the Boltzmann kinetic equation and the Euler and Navier-Stokes equations of fluid dynamics: see \textsc{Bardos, Golse and Levermore} \cite{MR1115587, MR1213991} for more on such ideas. 

Under various assumptions on $\phi$, it has been shown in the work of many authors (for instance \textsc{Boltzmann} \cite{boltzmann2012wissenschaftliche} for the $C^{1}$ case, \textsc{Gr\"{o}nwall} \cite{MR1503514} for the $C^{0}$ case, \textsc{Cercignani} \cite{MR1049048} for the Maxwellian-weighted $L^{2}$ case, and \textsc{Arkeryd} \cite{MR0339666} for the $L^{1}_{\mathrm{loc}}$ case) that if a scalar function $\phi:\mathbb{R}^{d}\rightarrow\mathbb{R}$ satisfies $\phi(v_{n}')+\phi(\ov{v}_{n}')=\phi(v)+\phi(\ov{v})$ for all $V=[v, \ov{v}]\in\mathbb{R}^{2d}$ and $n\in\mathbb{S}^{d-1}$, it is necessarily of the form
\begin{equation*}
\phi(v)=a+b\cdot v+c|v|^{2},
\end{equation*}
for some constants $a, b_{1}, ..., b_{d}, c\in\mathbb{R}$. Any such function $\phi$ is known as a collision invariant, as the value of the map $[v, \ov{v}]\mapsto\phi(v)+\phi(\ov{v})$ does not change when `pre-collisional velocities' are changed to their `post-collisional' values by $\sigma_{n}$ for any $n\in\mathbb{S}^{d-1}$. In this article, we will focus our efforts on establishing the analogue of this result when the particles in the underlying dynamical system are no longer perfectly spherical.

Although the motivation for studying collision invariants can be found at the kinetic level, we make no further study of the Boltzmann equation in the sequel. In all that follows, we focus our attention solely at the level of particles.
\subsection{Informal Statements of Main Results}
As it takes quite some effort to set up precise statements of the main results of this article, we state them at first in a somewhat informal manner. For simplicity, we work in two spatial dimensions in all the sequel, i.e. we consider the motion of two-dimensional particles evolving in the whole space $\mathbb{R}^{2}$. However, all results in this article can be extended to the case of three-dimensional particles evolving in the whole space $\mathbb{R}^{3}$.

We study the dynamics of systems of non-spherical particles $\mathsf{P}$ consisting of two identical compact, strictly-convex subsets with analytic boundaries, i.e. $\partial\mathsf{P}$ is of class $C^{\omega}$. Naturally, we stipulate that at no time should the particles interpenetrate. As such, we must construct a dynamics on a suitable phase space of {\em hard particles} (see section \ref{derivationofmodel} below for the precise definition of `hard particle phase space'). The dynamics of the hard particles is governed by Euler's Laws of Motion, the analogue of Newton's Laws for continuum rigid bodies. The first result of this article concerns the existence of solutions to Euler's equations for their evolution which conserve the total linear momentum and kinetic energy of initial data for all time, and which also ensure non-penetration of the particles for all time. Informally stated, we establish the following result:
\begin{thm}\label{poorflow}
Consider two identical compact, strictly-convex particles with analytic boundary. There exist global-in-time classical solutions to Euler's equations of motion on the phase space of all particle configurations for which there is no particle interpenetration. Moreover, these classical solutions conserve the total linear momentum and kinetic energy of any given initial datum for all time.
\end{thm}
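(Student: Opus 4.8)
The plan is to build the flow by concatenating free rigid-body motions with instantaneous scattering events at contact, and then to verify that the procedure continues for all time and propagates the two invariants. I first fix notation for the \emph{hard particle phase space}: a configuration is $q=(x_{1},\vartheta_{1},x_{2},\vartheta_{2})\in(\mathbb{R}^{2}\times\mathbb{S}^{1})^{2}$ with a velocity $\dot{q}=(v_{1},\omega_{1},v_{2},\omega_{2})\in(\mathbb{R}^{2}\times\mathbb{R})^{2}$, the admissible set being those $q$ for which the two rigid copies of $\mathsf{P}$ have disjoint interiors. I encode this constraint by a gap function $g(q)$, equal to the Euclidean distance between the two bodies when they are separated and to minus their penetration depth otherwise. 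Because $\partial\mathsf{P}$ is analytic and strictly convex, two admissible bodies in contact meet at a single point: their intersection is convex and, by strict convexity, contains no segment. Near the contact manifold $\{g=0\}$ the closest-point map depends analytically on $q$ (analytic implicit function theorem), so $g$ is $C^{1}$ there with $\nabla g\neq 0$; explicitly, $\nabla g$ packages the common normal $n$ at the contact point together with the torque of $n$ about each centre of mass. Away from contact, the Newton--Euler equations for two free planar rigid bodies integrate in closed form — each centre of mass translates uniformly, each body spins at constant angular velocity — so the free flow $\Phi^{t}$ is explicit and manifestly conserves total linear momentum, kinetic energy (and also angular momentum).

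Given an admissible datum $(q,\dot{q})$, put $\tau=\inf\{t>0:\ g(\Phi^{t}(q,\dot{q}))=0\}$. If $\tau=+\infty$ the free flow is already a global classical solution. Otherwise the bodies touch at time $\tau$, and since $\tau$ is the first contact time $\langle\nabla g,\dot{q}^{-}\rangle\le 0$ there. I resolve the collision by the reflection of $\dot{q}$ in the mass metric $M=\mathrm{diag}(mI_{2},J,mI_{2},J)$ across the hyperplane tangent to $\{g=0\}$,
\[
\dot{q}^{+}=\dot{q}^{-}-2\,\frac{\langle\nabla g,\dot{q}^{-}\rangle}{\langle M^{-1}\nabla g,\nabla g\rangle}\,M^{-1}\nabla g,
\]
all quantities evaluated at $q(\tau)$; this is one admissible member of the families of momentum- and energy-conserving scattering maps alluded to in the introduction. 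Three short checks identify it as the right rule: (a) it flips the normal component, $\langle\nabla g,\dot{q}^{+}\rangle=-\langle\nabla g,\dot{q}^{-}\rangle\ge 0$, so $g$ is non-decreasing immediately after $\tau$ and non-penetration persists on a right-neighbourhood of $\tau$; (b) being an $M$-orthogonal reflection it preserves $\dot{q}^{\top}M\dot{q}$, that is, total kinetic energy; (c) $g$ is invariant under translating both bodies by a common vector, hence $\nabla_{x_{1}}g+\nabla_{x_{2}}g=0$, so the total-linear-momentum functional $\dot{q}\mapsto m v_{1}+m v_{2}$ annihilates $M^{-1}\nabla g$ and is therefore unchanged by the reflection. (Total angular momentum is in general not conserved, consistently with \cite{wilk}; only the two invariants named in the statement are promised.)

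Concatenating free flights with these reflections yields a trajectory on a maximal interval $[0,T_{\infty})$, where $\tau_{1}<\tau_{2}<\cdots$ are the successive contact times and $T_{\infty}=\lim_{k}\tau_{k}$; by construction non-penetration holds on each flight and the impact law holds at each $\tau_{k}$, so this is a classical solution on $[0,T_{\infty})$. Two things remain: that the reflection is genuinely applicable at every $\tau_{k}$ — contact always at one point with $\nabla g\neq 0$ and $\langle\nabla g,\dot{q}^{-}\rangle\le 0$, never a degenerate higher-order tangency — and that $T_{\infty}=+\infty$. The first, apart from the tangency issue, follows from strict convexity and analyticity as above. I expect the no-accumulation statement $T_{\infty}=+\infty$ to be the main obstacle, and it is genuinely harder than for hard spheres: for two disks the gap is convex along every free trajectory, so after one collision the particles separate forever, whereas for tumbling non-spherical bodies rotation can drive the gap back to zero repeatedly. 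The available tools are conservation of kinetic energy — which bounds all linear and angular speeds and confines every contact to a fixed compact region of relative-configuration space — the affine evolution of the relative configuration between collisions, and the analyticity of $t\mapsto g(\Phi^{t}(\cdot))$ along each flight; with these I would argue that the increasing sequence $(\tau_{k})$ cannot cluster. Should a fully quantitative bound resist, the fallback is to show that the set of initial data leading either to a finite-time accumulation of collisions or to a degenerate tangency is a Borel set of measure zero, so that the constructed flow is defined for all $t\ge 0$ off a negligible set — which is already the sense in which the analogous hard-sphere dynamics is ``global in time''. In either case, since both total linear momentum and total kinetic energy are preserved by every free flight and by every reflection, they are preserved along the whole trajectory, which completes the argument.
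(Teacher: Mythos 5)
Your skeleton---free rigid motion between contacts, an $M$-orthogonal reflection of the velocity across the tangent space of the contact manifold at each impact, and the three checks (non-penetration, energy, momentum)---is essentially the paper's own construction: Proposition \ref{linearscat} and its corollary produce exactly the matrix $\sigma_{\beta}=M^{-1}(I-2\widehat{\gamma}_{\beta}\otimes\widehat{\gamma}_{\beta})M$, and your conservation checks (b), (c) are the same computations. But there are two genuine problems. The decisive one is the step you yourself flag and leave open: you never prove that the impact times $\tau_{1}<\tau_{2}<\cdots$ do not accumulate, and without that there is no global classical solution. This is not a routine estimate---energy conservation bounds all speeds but does not by itself exclude infinitely many grazing impacts in finite time for tumbling non-spherical bodies---and your fallback (discarding a Borel null set of initial data) proves a strictly weaker statement than the theorem, which asserts existence for \emph{every} datum in $\mathcal{D}_{2}$. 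The paper closes this gap by invoking the existence theory of \textsc{Ballard} (theorem 10 of \cite{ballard}), whose hypotheses are precisely why $\partial\mathsf{P}_{\ast}$ is taken analytic rather than merely $C^{1}$ or $C^{\infty}$: analyticity of the data is what forces $\mathcal{T}(Z_{0})$ to be finite. Your proof is incomplete until you either import that theorem or supply an argument of comparable strength.

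Second, your parenthetical identification of $\nabla g$ as ``the common normal $n$ at the contact point together with the torque of $n$ about each centre of mass'' describes the paper's vector $\widehat{\eta}_{\beta}$, with components built from $n_{\beta}$, $p_{\beta}^{\perp}\cdot n_{\beta}$ and $q_{\beta}^{\perp}\cdot n_{\beta}$. Reflecting across its orthogonal complement gives the matrix $u_{\beta}$ of section \ref{connydym}, which the paper shows is \emph{not} a scattering map: there exist configurations $\beta_{\ast}$ and strictly pre-collisional velocities $V_{\ast}\in\Sigma_{\beta_{\ast}}^{-}\setminus\Sigma_{\beta_{\ast}}^{0}$ fixed by $u_{\beta_{\ast}}$, so your check (a) would hold vacuously while the bodies interpenetrate. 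The normal to $\partial\mathcal{D}_{2}$ that actually governs pre- versus post-collisional velocities is $\gamma_{\beta}$ of definition \ref{collnorm}, computed from the normal $N_{\beta}$ to the locus of closest approach $\mathcal{C}^{\ov{\vartheta}}_{\vartheta}$ with moment arms $r_{\beta}$ and $r_{\beta}-d_{\beta}e(\psi)$; for non-circular particles this is a genuinely different direction in $\mathbb{R}^{6}$, and the difference is exactly the trade-off between conserving angular momentum and enforcing non-penetration identified in \cite{wilk}. Relatedly, even with the correct normal, $\langle\nabla g,\dot q^{+}\rangle\geq 0$ with equality (a grazing impact) does not by itself yield non-penetration on a right-neighbourhood of $\tau$; this degenerate case is also absorbed into the appeal to \cite{ballard}.
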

The precise version of \textsc{Theorem} \ref{poorflow} is stated as \textsc{Theorem} \ref{flow} below. The proof of this result makes use of the general existence theory of \textsc{Ballard} \cite{ballard} for dynamics of rigid bodies with non-penetration constraints. However, in order to invoke his theory one must first construct scattering matrices which resolve collisions between two compact, strictly-convex sets in such a way that total linear momentum and kinetic energy are conserved. This construction is performed in section \ref{scatteringmaps} below. The reader might notice that the statement of \textsc{Theorem} \ref{poorflow} does not claim that total {\em angular momentum} is conserved for all time by classical solutions (whose precise definition is given in \ref{classicalsolutions} below). In fact, it has been shown in \textsc{Wilkinson} \cite{wilk} that classical solutions of Euler's equations which conserve total linear momentum, angular momentum and kinetic energy of initial data for all time do not exist for all possible initial data. It is for this reason we confine our attention in this article to dynamics which conserve only linear momentum and kinetic energy, since the notion of scattering map and classical solutions to Euler's equations of motion are intimately related to one another. Let us also draw attention to the fact that it may, at first glance, seem that our choice of dynamics is somewhat arbitrary, since one can construct distinct families of solution operators $\{T_{t}\}_{t\in\mathbb{R}}$ associated to Euler's equations which conserve total linear momentum and kinetic energy for all time. We justify our particular choice of dynamics $\{T_{t}\}_{t\in\mathbb{R}}$ in section \ref{compare} below.

While the spatial collision configuration of two hard disks can be characterised by the single angle that the line connecting their centres of mass makes with a given reference line, we note that an element $\beta$ of the three-torus $\mathbb{T}^{3}$ is required to characterise the spatial collision configuration of two compact, strictly-convex particles which are not disks. To see this, one might wish to consult figure 2 below. With this in mind, we present an informal statement of the main result of this article.

\begin{thm}\label{poorawesome}
Suppose a measurable map $\varphi:\mathbb{R}^{2}\times\mathbb{R}\times\mathbb{S}^{1}\rightarrow \mathbb{R}$ satisfies the functional identity for collision invariants given by 
\begin{equation*}
\varphi(v_{\beta}', \omega_{\beta}', \vartheta)+\varphi(\ov{v}_{\beta}', \ov{\omega}_{\beta}', \ov{\vartheta})=\varphi(v, \omega, \vartheta)+\varphi(\ov{v}, \ov{\omega}, \ov{\vartheta})
\end{equation*}
for every $V=[v, \ov{v}, \omega, \ov{\omega}]\in\mathbb{R}^{6}$ and all $\beta\in\mathbb{T}^{3}$, where $[v_{\beta}', \ov{v}_{\beta}', \omega_{\beta}', \ov{\omega}_{\beta}']\in\mathbb{R}^{6}$ denotes the post-collisional values of the vector $V$ corresponding to the spatial configuration $\beta$. Then $\varphi$ is necessarily of the form
\begin{equation*}
\varphi(v, \omega, \vartheta):=a(\vartheta)+b\cdot v+c\left(m|v|^{2}+J\omega^{2}\right),
\end{equation*}
for some constants $b_{1}, b_{2}, c\in\mathbb{R}$ and some measurable function $a:\mathbb{S}^{1}\rightarrow\mathbb{R}$.
\end{thm}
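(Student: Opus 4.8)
The plan is to follow the structure of the classical hard-disk argument — reduce the functional equation to a Pexider equation and then solve it — the genuinely new ingredient being a transitivity statement for the group generated by the scattering reflections. First I would record the structural properties of the scattering maps constructed in Section~\ref{scatteringmaps}: for each $\beta=(\vartheta,\bar\vartheta,\psi)\in\mathbb{T}^{3}$ the post-collisional map $V\mapsto V'_{\beta}$ is a linear involution of $\mathbb{R}^{6}$, is orthogonal for the kinetic-energy inner product $\langle\cdot,\cdot\rangle_{m}$ (weights $(m,m,m,m,J,J)$), and fixes the two-plane carrying the total linear momentum $m(v+\bar v)$; concretely it is the $\langle\cdot,\cdot\rangle_{m}$-reflection in the hyperplane $\{u_{\beta}=0\}$, where $u_{\beta}$ is the relative normal velocity at the contact point. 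Writing $W$ for the four-dimensional $\langle\cdot,\cdot\rangle_{m}$-orthogonal complement of the momentum plane and $\ell_{\beta}\in W$ for the pole of this reflection, an explicit computation in terms of the support function $h$ of the particle shows that, in suitable coordinates on $W$ and up to the normalisation of Section~\ref{scatteringmaps}, $\ell_{(\vartheta,\bar\vartheta,\psi)}$ is proportional to $\big(\cos\psi,\,\sin\psi,\,-(m/J)\,h'(\psi-\vartheta),\,-(m/J)\,h'(\psi+\pi-\bar\vartheta)\big)$.

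The crux — and the main obstacle, absent in the hard-disk case where $W$ is two-dimensional and the statement is trivial — is the following. Fix orientations $(\vartheta,\bar\vartheta)$ and put $G(V):=\varphi(v,\omega,\vartheta)+\varphi(\bar v,\bar\omega,\bar\vartheta)$; the hypothesis makes $G$ invariant under every reflection $S_{(\vartheta,\bar\vartheta,\psi)}$, $\psi\in\mathbb{S}^{1}$, hence under the subgroup $\Gamma$ of $\mathrm{O}(W)$ that they generate. Since $\mathsf{P}$ is not a disk, $h'$ has a nonvanishing Fourier mode of order $\ge 2$, and a short Fourier computation then shows that the analytic pole curve $\psi\mapsto\ell_{(\vartheta,\bar\vartheta,\psi)}$ is contained in an affine hyperplane of $W$ only when $\vartheta-\bar\vartheta$ lies in a fixed finite subset of $\mathbb{S}^{1}$; this is where strict convexity and analyticity of $\partial\mathsf{P}$ enter. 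For $(\vartheta,\bar\vartheta)$ outside this exceptional set the pole curve is non-degenerate, so by the usual Wronskian/osculating-frame argument the vectors $\ell,\ell',\ell'',\ell'''$ span $W$ at a generic $\psi$; computing brackets of the rotations $S_{\psi}S_{\psi'}$ then shows the Lie algebra they generate is all of $\mathfrak{so}(W)\cong\mathfrak{so}(4)$, and one concludes (by the orbit theorem) that $\Gamma$ acts transitively on every kinetic-energy sphere of $W$. Therefore $G$ is constant on each joint level set of the total momentum $m(v+\bar v)$ and the total kinetic energy $m|v|^{2}+m|\bar v|^{2}+J\omega^{2}+J\bar\omega^{2}$: with $q(v,\omega):=m|v|^{2}+J\omega^{2}$, this reads $G(V)=\widetilde G_{\vartheta,\bar\vartheta}\big(m(v+\bar v),\,q(v,\omega)+q(\bar v,\bar\omega)\big)$ for some measurable $\widetilde G_{\vartheta,\bar\vartheta}$.

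To conclude, I would set $\bar v=0$ and $\bar\omega=0$: then $\varphi(v,\omega,\vartheta)$ depends on $(v,\omega)$ only through $P:(v,\omega)\mapsto(mv,q(v,\omega))$ (so $\varphi$ is automatically even in $\omega$), and $P$ maps $\mathbb{R}^{2}\times\mathbb{R}$ onto the convex cone $R=\{(p,E):E\ge|p|^{2}/m\}$. Writing $\varphi(v,\omega,\vartheta)=F_{\vartheta}(P(v,\omega))$ and substituting back yields the Pexider equation $F_{\vartheta}(u)+F_{\bar\vartheta}(w)=\widetilde G_{\vartheta,\bar\vartheta}(u+w)$ for $u,w\in R$. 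Solving it — either directly, via the classical theory of measurable solutions of the Pexider equation, or after first bootstrapping $\varphi$ to be smooth in the velocity variables by the techniques of \textsc{Arkeryd}~\cite{MR0339666} and then differentiating — forces $F_{\vartheta}$ to be affine, $F_{\vartheta}(p,E)=A(\vartheta)+\beta\cdot p+\gamma E$, with the linear part $(\beta,\gamma)\in\mathbb{R}^{2}\times\mathbb{R}$ common to all orientations (the finitely many exceptional $\bar\vartheta$ attached to a given $\vartheta$ cause no difficulty). Tracing back, $\varphi(v,\omega,\vartheta)=A(\vartheta)+m\beta\cdot v+\gamma(m|v|^{2}+J\omega^{2})$, which is the asserted form with $b:=m\beta$, $c:=\gamma$ and the measurable $a:=A=\varphi(0,0,\cdot)$. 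I expect the transitivity statement — controlling the group generated by the one-parameter family of scattering reflections, and pinning down exactly which particle geometries make the pole curve degenerate — to be the hard part.
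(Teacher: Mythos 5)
Your overall strategy is the paper's: pass to the group generated by the scattering reflections, show it acts transitively on the energy--momentum spheres, and then reduce to a Cauchy/Pexider equation. But two steps that you treat as routine are, respectively, the technical heart of the paper and a point the paper goes out of its way to handle correctly, and as sketched neither goes through.

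First, the non-degeneracy of the pole curve. Your explicit formula for the pole, with first two components $(\cos\psi,\sin\psi)$, is wrong for a non-circular particle: the velocity components of the collision normal $\gamma_{\beta}$ are the \emph{exclusion normal} $N^{\ov{\vartheta}}_{\vartheta}(\psi)$ (the normal to the locus of closest approach, see \eqref{Nvector}), which equals $e(\psi)$ only for disks; similarly the last two components are moment arms $(r_{\beta}-d_{\beta}e(\psi))^{\perp}\cdot N_{\beta}$ and $-r_{\beta}^{\perp}\cdot N_{\beta}$, not values of $h'$. Consequently the ``short Fourier computation'' showing the pole curve lies in a hyperplane only for finitely many $\vartheta-\ov{\vartheta}$ is not something you have actually set up, and it is far from obvious: the paper's Proposition \ref{spanny} proves the corresponding statement (with \emph{no} exceptional set) only under the additional hypothesis $\mathsf{P}_{\ast}\in\mathcal{P}(\mathbb{Z}_{2}^{2})$, and the proof is a several-page argument that passes through the dynamics (non-penetration of the material contact point forces $\xi^{\ov{\vartheta}}_{\vartheta}(\psi)\cdot W=0$ for all $\psi$) and then constructs an explicit basis $\{\xi_{1},\xi_{2},\xi_{3},\xi_{4}\}$ using the two reflection symmetries of the reference particle. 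You cannot simply assert this step; it is where strict convexity, analyticity \emph{and} the symmetry hypothesis of Theorem \ref{awesome} actually enter.

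Second, the transitivity mechanism. Computing Lie brackets of the near-identity rotations $S_{\psi}S_{\psi'}$ and invoking ``the orbit theorem'' naturally yields that the \emph{closure} of the generated group is $\mathrm{SO}(4)$, i.e.\ that orbits are dense in $\mathbb{S}^{3}$. That is not enough here: $\varphi$ is only measurable, so constancy of $\Phi_{\varphi}$ on a dense orbit does not give constancy on the sphere, and the reduction $\Phi_{\varphi}=\widetilde{\Phi_{\varphi}}(\text{momentum},\text{energy})$ needs the orbits to be \emph{exactly} the energy--momentum level sets. The paper is explicit that this is the point of the appendix (Theorem \ref{claude}): one must upgrade the Eaton--Perlman density statement to genuine transitivity of the group itself, which Viterbo does via arcwise connectedness of the even-word subgroup, the Kuranishi--Yamabe--Goto theorem, and closedness of maximal connected subgroups of $\mathrm{SO}(n)$. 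If you want to keep your bracket computation you must invoke the orbit theorem in its strong form (orbits of a family of diffeomorphisms are immersed submanifolds, hence open, hence everything), and say so. The final Pexider step is essentially the paper's Truesdell--Muncaster argument and is fine in spirit, though you should note that the equation holds only on the cone $\{E\geq|p|^{2}/m\}$, which is exactly why the paper uses the orthogonal-vectors trick rather than quoting the Pexider theorem on all of $\mathbb{R}^{3}$; and your dismissal of the exceptional orientation pairs, while probably repairable by chaining through good partners, is asserted rather than argued.
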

The precise statement of this result appears as \textsc{Theorem} \ref{awesome} below.
\subsection{Structure of the Article}
In section \ref{hardspherechar}, we revisit the case of hard disk scattering and present a new proof of characterisation collision invariants. We derive the equations of motion for the physical evolution of hard particles in section \ref{derivationofmodel}. The concept of scattering map and regularity of solutions of Euler's equations are intimately linked, so in sections \ref{scatteringmaps} and \ref{connydym} we construct families of scattering maps and, in turn, classical solutions to Euler's equations of motion. In the final part of the paper, namely section \ref{mainsection}, we characterise collision invariants for compact, strictly-convex non-spherical particles. New results by C. Viterbo on generators of orthogonal groups of matrices, which allow us to establish the proof of \textsc{Theorem} \ref{poorawesome}, are stored in the appendix \ref{algebra}.

\section{Characterisation of Collision Invariants for Hard Disks: A New and Simple Method}\label{hardspherechar}
Before we embark upon the problem of characterising collision invariants for general convex particle scattering maps, it will be helpful to recall the theory which has been established in the case of spherical particles (or, more appropriately in our two-dimensional setting, particles which are disks). Our approach to this problem appears to be new, and has the advantage of requiring no regularity or integrability conditions on the collision invariant $\phi$, only that it be measurable. Although we only discuss scattering of hard disks in $\mathbb{R}^{2}$ in this section, all our results also hold for the scattering of hard spheres in $\mathbb{R}^{3}$.
\subsection{State-of-the-art of Previously-established Results}
For any $\psi\in\mathbb{S}^{1}$, consider the associated Boltzmann scattering map $\sigma_{\psi}:\mathbb{R}^{4}\rightarrow\mathbb{R}^{4}$ for two hard disks given by
\begin{equation}\label{boltzscat}
\sigma_{\psi}[V]:=\left(I-2\widehat{\gamma}_{\psi}\otimes\widehat{\gamma}_{\psi}\right)V \quad \text{with}\hspace{2mm} V=[v, \ov{v}]\in\mathbb{R}^{4},
\end{equation}
where 
\begin{displaymath}
\widehat{\gamma}_{\psi}:=\frac{1}{\sqrt{2}}\left[
\begin{array}{c}
e(\psi) \\ -e(\psi)
\end{array}
\right],
\end{displaymath}
with $e(\psi):=(\cos\psi, \sin\psi)\in\mathbb{R}^{2}$ and $\psi$ denotes the angle that the line connecting the centres of mass of the colliding disks makes with the positive $x$-axis. One can check that for every choice of $\psi\in\mathbb{S}^{1}$, the scattering map $\sigma_{\psi}$ conserves total linear momentum, angular momentum and kinetic energy of any given velocity vector $V\in\mathbb{R}^{4}$. Under the assumptions that $\phi:\mathbb{R}^{2}\rightarrow\mathbb{R}$ be in $L^{1}_{\mathrm{loc}}(\mathbb{R}^{2})$ and satisfy the functional equation
\begin{equation}\label{diskcollinv}
\phi(v_{\psi}')+\phi(\ov{v}_{\psi}')=\phi(v)+\phi(\ov{v})
\end{equation}
pointwise almost everywhere on $\mathbb{R}^{4}\times\mathbb{S}^{1}$, where the post-collisional velocities $v_{\psi}'$ and $\ov{v}_{\psi}'$ are given in terms of $\sigma_{\psi}[V]=(\sigma_{\psi}[V]_{1}, ..., \sigma_{\psi}[V]_{4})$ as
\begin{displaymath}
v_{\psi}':=\left(
\begin{array}{c}
\sigma_{\psi}[V]_{1}\\
\sigma_{\psi}[V]_{2}
\end{array}
\right)\quad\text{and}\quad \ov{v}_{\psi}':=\left(
\begin{array}{c}
\sigma_{\psi}[V]_{3} \\
\sigma_{\psi}[V]_{4}
\end{array}
\right),
\end{displaymath}
it has been shown by \textsc{Arkeryd} (\cite{MR0339666}, lemma 2.8) that $\phi$ is necessarily of the form $\phi(v)=a+b\cdot v+c|v|^{2}$ almost everywhere for some constants $a, b_{1}, b_{2}, c\in\mathbb{R}^{2}$. Our new proof of characterisation of collision invariants covers the case where $\phi$ is only measurable on $\mathbb{R}^{2}$, as opposed to being of class $L^{1}_{\mathrm{loc}}(\mathbb{R}^{2})$. On the other hand, we ask that the identity \eqref{diskcollinv} hold for all $\psi\in\mathbb{S}^{1}$ and for all $V\in\mathbb{R}^{4}$. In order to produce the most general result possible, one would need to extend our argument to the case where \eqref{diskcollinv} holds for almost every $\psi\in\mathbb{S}^{1}$ and almost every $V\in\mathbb{R}^{4}$, as opposed to everywhere on $\mathbb{S}^{1}$ and $\mathbb{R}^{4}$, respectively. We do not attempt do this here.
\subsection{Orbits of Scattering Groups on $\mathbb{R}^{4}$}
In order to motivate our new group-theoretic approach in the case of general strictly-convex particles, let us rewrite identity \eqref{diskcollinv} as 
\begin{equation}\label{goody}
\Phi_{\phi}(\sigma_{\psi}[V])=\Phi_{\phi}(V) 
\end{equation}
for $V\in\mathbb{R}^{4}$ and $\psi\in\mathbb{S}^{1}$, where
\begin{equation*}
\Phi_{\phi}(V):=\phi(v)+\phi(\ov{v}),
\end{equation*}
with $V=[v, \ov{v}]\in\mathbb{R}^{4}$, assuming that $\phi$ be only measurable and, without loss of generality, that $\phi(0)=0$ and thus $\Phi_{\phi}(0)=0$. In particular, identity \eqref{goody} implies that for any fixed choice of $V$ and any collection of angles $\psi_{1}, ..., \psi_{k}\in\mathbb{S}^{1}$, one has
\begin{equation*}
\Phi_{\phi}\left(\sigma_{\psi_{k}}\circ ... \circ \sigma_{\psi_{1}}[V]\right)=\Phi_{\phi}(V),
\end{equation*}
namely that the map $\Phi_{\phi}$ is constant on the left group orbits $GV\subset\mathbb{R}^{4}$ for any given $V\in\mathbb{R}^{4}$, where $G\subseteq\mathrm{O}(4)$ is the group generated by the 1-parameter family of reflection matrices $\{I-2\widehat{\gamma}_{\psi}\otimes\widehat{\gamma}_{\psi}\,:\,\psi\in\mathbb{S}^{1}\}$, namely
\begin{equation}\label{scatgroup}
G:=\left\langle\left\{I-2\widehat{\gamma}_{\psi}\otimes\widehat{\gamma}_{\psi}\,:\,\psi\in\mathbb{S}^{1}\right\}\right\rangle.
\end{equation}
Let us now find the group orbits $GV$ for any $V\in\mathbb{R}^{4}$. For $\mathsf{e}>0$ and $\mathsf{p}\in\mathbb{R}^{2}$ satisfying $\mathsf{e}^{2}>|\mathsf{p}|^{2}/2$, we define $\mathsf{M}(\mathsf{e}, \mathsf{p})$ to be the subset of $\mathbb{R}^{4}$ given by
\begin{displaymath}
\mathsf{M}(\mathsf{e}, \mathsf{p}):=\left\{Y\in\mathbb{R}^{4}\,:\,|Y|^{2}=\mathsf{e}^{2}\hspace{2mm}\text{and}\hspace{2mm}\left(
\begin{array}{c}
Y_{1}+Y_{3} \\ Y_{2}+Y_{4}
\end{array}\right)=\mathsf{p}\right\},
\end{displaymath}
which is evidently homeomorphic to $\mathbb{S}^{1}$. When $\mathsf{e}^{2}=|\mathsf{p}|^{2}/2$, $\mathsf{M}(\mathsf{e}, \mathsf{p})$ is a singleton and when $\mathsf{e}^{2}<|\mathsf{p}|^{2}/2$, one can check $\mathsf{M}(\mathsf{e}, \mathsf{p})$ is empty. It is clear that when $V\in\mathbb{R}^{4}$ is given, the Boltzmann scattering matrix $\sigma_{\psi}$ maps $\mathsf{M}(\mathsf{e}, \mathsf{p})$ to itself for any $\psi\in\mathbb{S}^{1}$, where $\mathsf{e}=|V|$ and $\mathsf{p}=(V_{1}+V_{3}, V_{2}+V_{4})$. 
\subsection{Reduction to Canonical Form}\label{diskcanon}
As the sets $\mathsf{M}(\mathsf{e}, \mathsf{p})$ are homeomorphic to $\mathbb{S}^{1}$ for $\mathsf{e}^{2}>|\mathsf{p}|^{2}$, we can expect to reduce our study of scattering groups acting on $\mathsf{M}(\mathsf{e}, \mathsf{p})$ to the study of some other group acting on $\mathbb{S}^{1}$. To show this, we reduce our problem to a kind of canonical form. Indeed, for $\mathsf{e}^{2}>|\mathsf{p}|^{2}/2$, we consider the bijection $h_{\mathsf{e}, \mathsf{p}}:\mathsf{M}(\mathsf{e}, \mathsf{p})\rightarrow \mathbb{S}^{1}$ given by
\begin{displaymath}
h_{\mathsf{e}, \mathsf{p}}[V]:=\frac{1}{\sqrt{(V_{1}-V_{3})^{2}+(V_{2}-V_{4})^{2}}}\left(
\begin{array}{cc}
V_{1}-V_{3} \\
V_{2}-V_{4}
\end{array}
\right) \quad \text{for}\hspace{2mm} V\in\mathsf{M}(\mathsf{e}, \mathsf{p}),
\end{displaymath}
with inverse given by
\begin{displaymath}
h_{\mathsf{p}, \mathsf{e}}^{-1}[\zeta]:=\frac{1}{2}\left(
\begin{array}{c}
\sqrt{2\mathsf{e}^{2}-|\mathsf{p}|^{2}}\zeta_{1}+\mathsf{p}_{1} \\
\sqrt{2\mathsf{e}^{2}-|\mathsf{p}|^{2}}\zeta_{2}+\mathsf{p}_{2} \\
\mathsf{p}_{1}-\sqrt{2\mathsf{e}^{2}-|\mathsf{p}|^{2}}\zeta_{1} \\
\mathsf{p}_{2}-\sqrt{2\mathsf{e}^{2}-|\mathsf{p}|^{2}}\zeta_{2}
\end{array}
\right)\quad \text{for}\hspace{2mm}\zeta=(\zeta_{1}, \zeta_{2})\in\mathbb{S}^{1}.
\end{displaymath}
One has that $\sigma_{\psi}\in\mathbb{R}^{4\times 4}$ maps $V$ to $(I-2\widehat{\gamma}_{\psi}\otimes\widehat{\gamma}_{\psi})V$ if and only if the matrix 
\begin{equation*}
s_{\psi}:=I-2e(\psi)\otimes e(\psi)\in\mathbb{R}^{2\times 2}
\end{equation*}
maps $h_{\mathsf{e}, \mathsf{p}}[V]$ to $(I-2e(\psi)\otimes e(\psi))h_{\mathsf{e}, \mathsf{p}}[V]$. Thus, if the group $\langle \{s_{\psi}\,:\,\psi\in\mathbb{S}^{1}\}\rangle\subseteq \mathrm{O}(2)$ acts transitively on the circle $\mathbb{S}^{1}$, it will follow immediately that the group orbit $GV$ is identically equal to $\mathsf{M}(\mathsf{e}, \mathsf{p})$. This is indeed the case, as the following elementary result shows.
\begin{prop}
The group $\langle\{I-2e(\psi)\otimes e(\psi)\,:\,\psi\in\mathbb{S}^{1}\}\rangle\subseteq\mathrm{O}(2)$ acts transitively on $\mathbb{S}^{1}$.
\end{prop}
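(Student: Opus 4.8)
The plan is to compute explicitly the action of each generator $s_\psi := I - 2e(\psi)\otimes e(\psi)$ on the circle, recognise it as an orthogonal reflection, and then deduce transitivity --- either directly from a single generator, or, in a form more suggestive of the general convex case, from the observation that products of pairs of generators already exhaust $\mathrm{SO}(2)$.

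First I would record that $s_\psi$ is the reflection of $\mathbb{R}^{2}$ in the line through the origin orthogonal to $e(\psi)$: it negates $e(\psi)$ and fixes $e(\psi + \pi/2)$. Writing an arbitrary point of $\mathbb{S}^{1}$ as $\zeta = e(\alpha)$ and decomposing it in the orthonormal basis $\{e(\psi), e(\psi+\pi/2)\}$ as $e(\alpha) = \cos(\alpha-\psi)\,e(\psi) + \sin(\alpha-\psi)\,e(\psi+\pi/2)$, applying $s_\psi$ and using the angle-addition formulae gives
\[
s_\psi\, e(\alpha) = -\cos(\alpha-\psi)\,e(\psi) + \sin(\alpha-\psi)\,e(\psi+\pi/2) = e\big(2\psi + \pi - \alpha\big).
\]
This single computation is the heart of the matter; the only thing one must be careful about is bookkeeping the angles modulo $2\pi$, and noting that $e(\psi)$ and $-e(\psi)$ give the same matrix $s_\psi$, in accordance with the $\psi \mapsto \psi + \pi$ ambiguity in the formula.

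Given this, transitivity is immediate. For any two points $\zeta = e(\alpha)$ and $\zeta' = e(\alpha')$ of $\mathbb{S}^{1}$, solving the congruence $2\psi + \pi - \alpha \equiv \alpha' \pmod{2\pi}$, that is, taking $\psi = (\alpha + \alpha' - \pi)/2$, produces a \emph{single} generator with $s_\psi\zeta = \zeta'$. Equivalently, iterating the displayed formula shows $s_{\psi_2}\circ s_{\psi_1}$ acts on $\mathbb{S}^{1}$ as the rotation $e(\alpha)\mapsto e\big(\alpha + 2(\psi_2-\psi_1)\big)$; letting $\psi_1,\psi_2$ range over $\mathbb{S}^{1}$ shows the group contains all of $\mathrm{SO}(2)$, which plainly acts transitively on $\mathbb{S}^{1}$.

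There is no genuine obstacle here: the statement is elementary, reducing to one trigonometric identity and the solution of a linear congruence. Its purpose is to serve as the exactly solvable model case; in the non-spherical setting the scalar parameter $\psi$ gets replaced by $\beta \in \mathbb{T}^{3}$ and $\mathrm{O}(2)$ by a higher-dimensional orthogonal group, and the analogous transitivity statement will instead rely on the generation results for orthogonal groups due to Viterbo that are collected in the appendix.
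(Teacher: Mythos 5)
Your proof is correct and is essentially the paper's own argument: both exhibit a \emph{single} generator carrying $e(\alpha)$ to $e(\alpha')$, namely the reflection associated to the bisecting direction (your $\psi=(\alpha+\alpha'-\pi)/2$ is exactly the paper's $I-2e(\psi')^{\perp}\otimes e(\psi')^{\perp}$ with $\psi'=(\alpha+\alpha')/2$, since $e(\psi')^{\perp}=e(\psi'+\pi/2)$). You merely carry out explicitly the trigonometric verification the paper leaves as ``one can check,'' and the additional remark that pairs of generators exhaust $\mathrm{SO}(2)$ is a harmless bonus.
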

\begin{proof}
For any two points $\zeta_{1}=e(\psi_{1})$ and $\zeta_{2}=e(\psi_{2})$ for $\psi_{1}, \psi_{2}\in\mathbb{S}^{1}$, we set $\psi':=(\psi_{1}+\psi_{2})/2\in\mathbb{S}^{1}$. One can check that $\zeta_{2}=(I-2e(\psi')^{\perp}\otimes e(\psi')^{\perp})\zeta_{1}$, and so we are done.
\end{proof}
Transforming back to $\mathbb{R}^{4}$, we immediately infer that the orbits of points of $\mathbb{R}^{4}$ under the action of the scattering group $G$ in \eqref{scatgroup} above are given by
\begin{displaymath}
GV=\left\{
\begin{array}{ll}
\mathsf{M}(\mathsf{e}, \mathsf{p}) & \quad \text{if} \hspace{2mm} \mathsf{e}^{2}>\frac{|\mathsf{p}|^{2}}{2}\vspace{2mm}\\
\left\{[\frac{1}{2}\mathsf{p}, \frac{1}{2}\mathsf{p}]\right\} & \quad \text{if}\hspace{2mm} \mathsf{e}^{2}=\frac{|\mathsf{p}|^{2}}{2}
\end{array}
\right.
\end{displaymath}
Since $\Phi_{\phi}$ is constant on each left orbit $GV$, it follows that
\begin{equation*}
\Phi_{\phi}(V)=\widetilde{\Phi_{\phi}}(v+\ov{v}, |v|^{2}+|\ov{v}|^{2})
\end{equation*}
for some new measurable function $\widetilde{\Phi_{\phi}}:\mathbb{R}^{2}\times\mathbb{R}\rightarrow\mathbb{R}$. One may then check (using the fact that $\phi(0)=0$) that $\widetilde{\Phi_{\phi}}$ satisfies the identity
\begin{equation}\label{usemecauchy}
\widetilde{\Phi_{\phi}}(v, |v|^{2})+\widetilde{\Phi_{\phi}}(\ov{v}, |\ov{v}|^{2})=\widetilde{\Phi_{\phi}}(v+\ov{v}, |v|^{2}+|\ov{v}|^{2}) \quad \text{for all}\hspace{2mm}V=[v, \ov{v}]\in\mathbb{R}^{4}.
\end{equation}
It is at this point we appeal to results on the characterisation of solutions to Cauchy's Functional Equation (see, for instance, the book of \textsc{Kuczma} \cite{MR2467621}). 
\subsection{Results on Cauchy's Functional Equation}
We recall that, under the assumption $f:\mathbb{R}^{2}\rightarrow \mathbb{R}$ be a measurable function, any solution of the functional identity
\begin{equation}\label{cauchy}
f(x)+f(y)=f(x+y)\quad \text{for all}\hspace{2mm}x, y\in\mathbb{R}^{2}
\end{equation}
is necessarily of the form $f(x)=cx$ for some $c\in\mathbb{R}$. We remark in passing that one cannot weaken the assumption that $\phi$ is measurable, if one wishes to avoid dealing with `pathological' solutions of Cauchy's functional equation. Indeed, by dropping the assumption of measurability and assuming the axiom of choice, it has been shown by \textsc{Hamel} \cite{hamel} that there exist discontinuous solutions of \eqref{cauchy}. 

One can use the fact that all measurable solutions of \eqref{cauchy} are of the form $f(x)=cx$ to characterise all measurable maps satisfying the functional equation \eqref{usemecauchy} for $\widetilde{\Phi_{\phi}}$ above. We now quote a result contained in \textsc{Truesdell and Muncaster} (\cite{MR554086}, pages 72--73 and pages 88--89), whose proof we revisit in detail in section \ref{itsdone}. 
\begin{prop}
Suppose that a measurable map $\Phi:\mathbb{R}^{2}\times\mathbb{R}\rightarrow\mathbb{R}$ satisfies the identity
\begin{equation}\label{funcid}
\Phi(v, |v|^{2})+\Phi(\ov{v}, |\ov{v}|^{2})=\Phi(v+\ov{v}, |v|^{2}+|\ov{v}|^{2})
\end{equation}
for all $v, \ov{v}\in\mathbb{R}^{2}$. It follows that $\Phi$ is necessarily of the form $\Phi(v, |v|^{2})=b\cdot v + c|v|^{2}$ for some constants $b_{1}, b_{2}, c\in\mathbb{R}$.
\end{prop}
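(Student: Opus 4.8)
The plan is to reduce the ``curved'' two-dimensional equation \eqref{funcid} to a pair of one-variable functional equations of Jensen and quadratic type, whose measurable solutions are classical. Write $\phi(v):=\Phi(v,|v|^{2})$, which is measurable on $\mathbb{R}^{2}$, and note that setting $\ov{v}=0$ in \eqref{funcid} gives $\Phi(0,0)=0$, hence $\phi(0)=0$. The first observation is that whenever $v\perp\ov{v}$ one has $|v|^{2}+|\ov{v}|^{2}=|v+\ov{v}|^{2}$, so \eqref{funcid} reduces to the \emph{orthogonal additivity} relation $\phi(v)+\phi(\ov{v})=\phi(v+\ov{v})$. Decomposing an arbitrary point as $(x,y)=(x,0)+(0,y)$ with $(x,0)\perp(0,y)$ then gives the pointwise identity $\phi(x,y)=g_{1}(x)+g_{2}(y)$, where $g_{1}(x):=\phi(x,0)$ and $g_{2}(y):=\phi(0,y)$; a short Fubini argument (restrict $\phi$ to a horizontal, resp.\ vertical, line along which it is measurable and use the identity together with the fact that $g_{2}(y_{0})$, resp.\ $g_{1}(x_{0})$, is a constant) shows that $g_{1}$ and $g_{2}$ are measurable one-variable functions vanishing at $0$.

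Next I would extract a coupling between $g_{1}$ and $g_{2}$. Substituting $v=\frac{1}{2}p+u$ and $\ov{v}=\frac{1}{2}p-u$ into \eqref{funcid}, one has $v+\ov{v}=p$ and $|v|^{2}+|\ov{v}|^{2}=\frac{1}{2}|p|^{2}+2|u|^{2}$, so the right-hand side depends on $u$ only through $|u|$; in particular $\phi(\frac{1}{2}p+u)+\phi(\frac{1}{2}p-u)$ is unchanged when $u=(t,0)$ is replaced by $u=(0,t)$. Writing $q_{i}=p_{i}/2$ and inserting $\phi=g_{1}+g_{2}$, the $g_{2}$-terms (resp.\ $g_{1}$-terms) cancel and one is left with
\[
g_{1}(q_{1}+t)+g_{1}(q_{1}-t)-2g_{1}(q_{1})=g_{2}(q_{2}+t)+g_{2}(q_{2}-t)-2g_{2}(q_{2})
\]
for all $q_{1},q_{2},t\in\mathbb{R}$. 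Since the two sides involve disjoint variables, each second difference is independent of its base point, and the two one-variable functions so obtained coincide; evaluating at $q_{i}=0$ with $g_{i}(0)=0$ identifies this common function with $g_{1}(t)+g_{1}(-t)=g_{2}(t)+g_{2}(-t)$. Hence each $g:=g_{i}$ satisfies $g(s+t)+g(s-t)-2g(s)=g(t)+g(-t)$ for all $s,t\in\mathbb{R}$.

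Finally I would solve this one-dimensional equation. Adding it to its instance at $-s$ isolates the even part $g_{e}:=\frac{1}{2}(g(\cdot)+g(-\cdot))$, which then satisfies the quadratic (Jordan--von Neumann) equation $g_{e}(s+t)+g_{e}(s-t)=2g_{e}(s)+2g_{e}(t)$; subtracting isolates the odd part $g_{o}$, which satisfies Jensen's equation $g_{o}(s+t)+g_{o}(s-t)=2g_{o}(s)$ together with $g_{o}(0)=0$. Invoking the classical characterisation of the measurable solutions of these two equations (see Kuczma \cite{MR2467621}), we get $g_{e}(s)=cs^{2}$ and $g_{o}(s)=\beta s$, so each $g_{i}(s)=c_{i}s^{2}+\beta_{i}s$; the coincidence of even parts established above forces $c_{1}=c_{2}=:c$. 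Therefore $\phi(x,y)=g_{1}(x)+g_{2}(y)=c(x^{2}+y^{2})+\beta_{1}x+\beta_{2}y$, that is, $\Phi(v,|v|^{2})=b\cdot v+c|v|^{2}$ with $b=(\beta_{1},\beta_{2})$, as claimed.

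I expect the main obstacle to be organisational rather than deep: one must keep careful track of which identities hold pointwise everywhere (so that the algebraic manipulations of the second paragraph are legitimate) versus where measurability is genuinely used (only at the very end, to exclude Hamel-type pathologies of the kind mentioned after \eqref{cauchy}), and the choice of the substitution $v=\frac{1}{2}p\pm u$ followed by the two coordinate directions for $u$ --- which is what decouples $g_{1}$ from $g_{2}$ --- is the step requiring the most ingenuity. Verifying measurability of $g_{1}$ and $g_{2}$ via Fubini is a minor but necessary technical point.
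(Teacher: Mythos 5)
Your proof is correct, but it takes a genuinely different route from the paper's. Both arguments begin by exploiting the special case $v\perp\ov{v}$, where the constraint $|v|^{2}+|\ov{v}|^{2}=|v+\ov{v}|^{2}$ turns \eqref{funcid} into an unconstrained additivity relation. From there the paper (following Truesdell--Muncaster) sets $\ov{v}=-v$ to express the even part of $\phi(v):=\Phi(v,|v|^{2})$ through $\Phi(0,\cdot)$, shows via orthogonal pairs that $s\mapsto\Phi(0,s)$ satisfies Cauchy's equation on $[0,\infty)$ (hence equals $cs$), and then subtracts $c|v|^{2}$ to obtain an odd, measurable, orthogonally additive remainder, invoking the lemma that such functions on $\mathbb{R}^{2}$ are linear. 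You instead use orthogonal additivity to split $\phi$ coordinatewise as $g_{1}(x)+g_{2}(y)$, and then extract the remaining content of \eqref{funcid} through the symmetric substitution $v=\frac{1}{2}p\pm u$, which decouples into the one-variable identity $g(s+t)+g(s-t)-2g(s)=g(t)+g(-t)$; the even/odd split then lands on the classical quadratic and Jensen equations, whose measurable solutions are $cs^{2}$ and $\beta s$. Your route avoids the Truesdell--Muncaster lemma on orthogonally additive odd maps in favour of the most standard one-dimensional equations from Kuczma, at the price of the coordinate splitting and the base-point-independence observation; it is somewhat longer but more self-contained, and the Fubini argument you sketch for the measurability of $g_{1},g_{2}$ (almost every slice of a measurable function is measurable, and the slices differ from $g_{1}$, resp.\ $g_{2}$, by a constant) is exactly what is needed. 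The identification $c_{1}=c_{2}$ via the equality of even parts is the one step a careless reader might miss, and you handle it correctly.
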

%
Using the additional observation that any constant function is also a collision invariant, it quickly follows that if a measurable function $\phi:\mathbb{R}^{2}\rightarrow\mathbb{R}$ satisfies the identity
\begin{equation*}
\phi(v_{\psi}')+\phi(\ov{v}_{\psi}')=\phi(v)+\phi(\ov{v}) \quad \text{for all}\hspace{2mm}v\in\mathbb{R}^{3}\hspace{2mm}\text{and}\hspace{2mm}\psi\in\mathbb{S}^{1},
\end{equation*}
then it is necessarily of the form $\phi(v)=a+b\cdot v+c|v|^{2}$. As such, one can view the problem of characterisation of collision invariants as the problem of classifying all scalar invariants of a given group action (namely that of the scattering group $G$) on Euclidean space $\mathbb{R}^{4}$. It appears that this perspective on the problem is new. In particular, we emphasise that we placed only minimal assumptions on $\phi$, namely that it be only measurable on $\mathbb{R}^{2}$. It is this group-theoretic perspective on the problem we adopt in order to prove the main result of this article, namely \textsc{Theorem} \ref{poorawesome} (restated precisely as \textsc{Theorem} \ref{awesome} below). We now leave the case of hard disks to study general compact, strictly-convex sets with $C^{\omega}$ boundaries.

\section{Dynamics of Compact, Strictly-convex Particles}\label{derivationofmodel}
Although collision invariants themselves have no relationship to particle dynamics, what constitute pre- and post-collisional velocities at collision is, however, inherently a dynamical issue. It is for this reason we must address the dynamics of particles in this article. As collision invariants only involve two-particle interactions, we study in all the sequel the evolution of two compact, strictly convex sets $t\mapsto \mathsf{P}(t)$ and $t\mapsto\ov{\mathsf{P}}(t)$ in the plane $\mathbb{R}^{2}$ which do not interpenetrate. We assume that their boundary curves are of class $C^{\omega}$, and that the motion of $\mathsf{P}$ and $\ov{\mathsf{P}}$ takes place in the absence of external forces. We subsequently refer to compact, strictly-convex subsets of $\mathbb{R}^{2}$ as {\em hard particles}. As there are no externally-imposed forces in our systems under consideration, the evolution of the sets $\mathsf{P}(t)$ and $\ov{\mathsf{P}}(t)$ {\em before collision} is determined by their initial states, namely their initial spatial configurations (centres of mass and orientations) and initial velocities (both linear and angular). In order to construct a `physical' evolution for these two hard particles on $\mathbb{R}^{2}$, we appeal to {\em Euler's Laws of Motion} for continuum rigid body classical mechanics. We recall that Euler's laws are the appropriate extension of Newton's laws of motion to the study of continuum rigid bodies. We refer the reader to \textsc{Truesdell} (\cite{MR1162744}) for more on this topic.

Let us now set up the basic objects with which we work throughout this article. Suppose that $\mathsf{P}_{\ast}\subset\mathbb{R}^{2}$ is a compact, strictly-convex set with boundary of class $C^{\omega}$. Moreover, suppose that its centre of mass lies at the origin, i.e.
\begin{equation*}
\int_{\mathsf{P}_{\ast}}y\,dy = 0.
\end{equation*} 
We shall subsequently call any such set a {\em reference particle}. When an arbitrary centre of mass $x\in\mathbb{R}^{2}$ and orientation $\vartheta\in\mathbb{S}^{1}$ have been given, we write the $x$-translate and $\vartheta$-rotation of $\mathsf{P}_{\ast}$ as
\begin{equation*}
\mathsf{P}(x, \vartheta):=R(\vartheta)\mathsf{P}_{\ast}+x, 
\end{equation*}
where $R(\alpha)\in\mathrm{SO}(2)$ is the rotation matrix
\begin{displaymath}
R(\alpha):=\left(
\begin{array}{cc}
\cos\alpha & -\sin\alpha \\
\sin\alpha & \cos\alpha
\end{array}
\right).
\end{displaymath}
The evolution of the sets $\mathsf{P}(t)$ and $\ov{\mathsf{P}}(t)$ is expressed by
\begin{equation*}
\mathsf{P}(t):=R(\vartheta(t))\mathsf{P}_{\ast}+x(t)\quad\text{and}\quad \ov{\mathsf{P}}(t):=R(\ov{\vartheta}(t))\mathsf{P}_{\ast}+\ov{x}(t),
\end{equation*}
with the centres of mass $x(t), \ov{x}(t)\in\mathbb{R}^{2}$ and orientations $\vartheta(t), \ov{\vartheta}(t)\in\mathbb{S}^{1}$ being related to the linear velocities $v(t), \ov{v}(t)\in\mathbb{R}^{2}$ and angular speeds $\omega(t), \ov{\omega}(t)\in\mathbb{R}$ by the formal differential relations
\begin{equation}\label{diffrelone}
\frac{dx}{dt}=v \qquad \text{and} \qquad \frac{d\ov{x}}{dt}=\ov{v},
\end{equation}
together with
\begin{equation}\label{diffreltwo}
\frac{d\vartheta}{dt}=\omega \qquad \text{and} \qquad \frac{d\ov{\vartheta}}{dt}=\ov{\omega}.
\end{equation}
We gather the spatial and velocity data into single phase vectors $z$ and $\ov{z}$ given by
\begin{equation*}
z(t)=[x(t), \vartheta(t), v(t), \omega(t)]\in\mathcal{M}:=\mathbb{R}^{2}\times\mathbb{S}^{1}\times\mathbb{R}^{2}\times\mathbb{R},
\end{equation*}
and also
\begin{equation*}
\ov{z}(t)=[\ov{x}(t), \ov{\vartheta}(t), \ov{v}(t), \ov{\omega}(t)]\in\mathcal{M}:=\mathbb{R}^{2}\times\mathbb{S}^{1}\times\mathbb{R}^{2}\times\mathbb{R}.
\end{equation*}
We define the single phase vector which characterises the state of the whole system at time $t\in\mathbb{R}$ by $Z(t):=[z(t), \ov{z}(t)]\in\mathcal{M}^{2}$. As we do not wish that $\mathsf{P}(t)\cap\ov{\mathsf{P}}(t)$ have positive 2-dimensional Lebesgue measure for any time $t$, we stipulate that the range of the maps $t\mapsto Z(t)$ belong to the phase space $\mathcal{D}_{2}\equiv\mathcal{D}_{2}(\mathsf{P}_{\ast})$ defined by
\begin{equation*}
\mathcal{D}_{2}(\mathsf{P}_{\ast}):=\left\{Z\in\mathcal{M}^{2}\,:\,\mathrm{card}\,\mathsf{P}(x, \vartheta)\cap \mathsf{P}(\ov{x}, \ov{\vartheta})\leq 1\right\},
\end{equation*}
where $Z=[z, \ov{z}]$, with $z=[x, \vartheta, v, \omega]$ and $\ov{z}=[\ov{x}, \ov{\vartheta}, \ov{v}, \ov{\omega}]$. As it will be useful in what follows, we define the associated spatial projection operator $\Pi_{1}:\mathcal{D}_{2}\rightarrow\mathbb{R}^{4}\times\mathbb{T}^{2}$ by the rule
\begin{equation*}
\Pi_{1}Z:=[x, \ov{x}, \vartheta, \ov{\vartheta}] \quad \text{when}\hspace{2mm} Z=[z, \ov{z}]\in \mathcal{D}_{2}.
\end{equation*}
We also define the velocity projection operator $\Pi_{2}:\mathcal{D}_{2}\rightarrow \mathbb{R}^{6}$ by the rule
\begin{equation*}
\Pi_{2}Z:=[v, \ov{v}, \omega, \ov{\omega}] \quad \text{when}\hspace{2mm} Z=[z, \ov{z}]\in \mathcal{D}_{2}.
\end{equation*}
In order to be completely correct, we note that the differential relations \eqref{diffrelone} and \eqref{diffreltwo} only hold in general at those times $t\in\mathbb{R}$ for which $\mathsf{P}(t)\cap\mathsf{P}(t)=\varnothing$, i.e. the two-sided derivative limits in \eqref{diffrelone} and \eqref{diffreltwo} hold at those times $t$ when $\mathsf{P}(t)$ and $\ov{\mathsf{P}}(t)$ are not in collision with one another. At this point, it will prove helpful to make the following definition.
\begin{defn}
For any $Z_{0}\in\mathcal{D}_{2}$ and a map $Z:\mathbb{R}\rightarrow\mathcal{D}_{2}$ satisfying $Z(0)=Z_{0}$, we define the associated set of {\em collision times} $\mathcal{T}(Z_{0})\subseteq\mathbb{R}$ to be 
\begin{equation*}
\mathcal{T}(Z_{0}):=\left\{t\in\mathbb{R}\,:\,\mathrm{card}\,\mathsf{P}(t)\cap \ov{\mathsf{P}}(t)=1\right\}.
\end{equation*}
\end{defn}
In order to derive the equations of motion which govern the particles $\mathsf{P}(t)$ and $\ov{\mathsf{P}}(t)$, we first of all consider a class of 1-parameter families of operators $\{T_{t}\}_{t\in\mathbb{R}}$ ($T_{t}:\mathcal{D}_{2}\rightarrow\mathcal{D}_{2}$ for each $t\in\mathbb{R}$) for which the maps $t\mapsto \Pi_{1}T_{t}Z_{0}$ and $t\mapsto\Pi_{2}T_{t}Z_{0}$ have `reasonable' analytical properties. Indeed, in order to make concrete the primary objects of interest in this article, we make the following important definition. 
\begin{defn}\label{hpf}
We shall call a family of operators $\{T_{t}\}_{t\in\mathbb{R}}$ with $T_{t}:\mathcal{D}_{2}\rightarrow\mathcal{D}_{2}$ for each $t\in\mathbb{R}$ a \textbf{hard particle flow} on $\mathcal{D}_{2}$ if and only if for any $Z_{0}\in\mathcal{D}_{2}$, the map $t\mapsto \Pi_{1}T_{t}Z_{0}$ continuous and both left- and right-differentiable on $\mathbb{R}$ and the map $t\mapsto \Pi_{2}T_{t}Z_{0}$ is lower semi-continuous and left-differentiable on $\mathbb{R}$. Moreover, we stipulate that both $t\mapsto \Pi_{1}T_{t}Z_{0}$ and $t\mapsto\Pi_{2}T_{t}Z_{0}$ be differentiable at all times $t$ for which $T_{t}Z_{0}\in\mathcal{D}_{2}\setminus\partial\mathcal{D}_{2}$.
\end{defn}
The class of hard particle flows on $\mathcal{D}_{2}$ is evidently a rather large one. A basic question in classical mechanics is the following:  ``Which hard particle flows on $\mathcal{D}_{2}$ can one consider to be {\em physical}?'' To answer this question, and to specify in precise mathematical terms what we mean by {\em physical}, we appeal to Euler's Laws of Motion. When deriving an appropriate set of ODEs that govern the evolution of the phase map $t\mapsto Z(t)$, we divide our considerations into two cases, namely those times during which the dynamics is collision free, and those times at which a collision takes place. 
\subsection{Deriving the Equations of Motion when $\mathsf{P}(t)\cap\ov{\mathsf{P}}(t)=\varnothing$}\label{senseone}
Suppose a hard particle flow $\{T_{t}\}_{t\in\mathbb{R}}$ on the phase space $\mathcal{D}_{2}$ has been given. This flow gives rise naturally to a map $U:\mathbb{R}^{2}\times\mathbb{R}\times\mathcal{D}_{2}\rightarrow\mathbb{R}^{2}$ which provides the instantaneous linear velocity of any material point $x$ in $\mathbb{R}^{2}$ at any time $t$, once an initial condition $Z_{0}\in\mathcal{D}_{2}$ has been provided. Indeed, recall that if the centre of mass $x(t)$ of a planar rigid body $\mathsf{P}(t)$ translates with linear velocity $v(t)$, and $\mathsf{P}(t)$ rotates with angular speed $\omega(t)$, then the linear velocity of any other point on the body is expressed by the formula
\begin{equation*}
v(y, t)=v(t)+\omega(t)(y-x(t))^{\perp} \quad \text{for}\hspace{2mm}y\in R(\vartheta(t))\mathsf{P}_{\ast}+x(t),
\end{equation*}
where $y^{\perp}:=(-y_{2}, y_{1})$ for any given $y=(y_{1}, y_{2})\in\mathbb{R}^{2}$. As such, the map $U$ is given explicitly in terms of $\{T_{t}\}_{t\in\mathbb{R}}$ by
\begin{displaymath}
U(x, t; Z_{0})=\left\{
\begin{array}{ll}
v(t)+\omega(t)(x-x(t))^{\perp} & \quad \text{if}\hspace{2mm}x\in\mathsf{P}(t), \vspace{2mm}\\
\ov{v}(t)+\ov{\omega}(t)(x-\ov{x}(t))^{\perp} & \quad \text{if}\hspace{2mm}x\in\ov{\mathsf{P}}(t), \vspace{2mm}\\
0 & \quad \text{otherwise},
\end{array}
\right.
\end{displaymath}
where $\Pi_{1}T_{t}Z_{0}=[x(t), \ov{x}(t), \vartheta(t), \ov{\vartheta}(t)]$ and $\Pi_{2}T_{t}Z_{0}=[v(t), \ov{v}(t), \omega(t), \ov{\omega}(t)]$. Since $\{T_{t}\}_{t\in\mathbb{R}}$ is a hard particle flow (definition \ref{hpf}), it follows that $t\mapsto U(x, t, Z_{0})$ is a differentiable function at all $t$ for which $T_{t}Z_{0}\in\mathcal{D}_{2}\setminus\partial\mathcal{D}_{2}$.

We appeal to Euler's laws of motion in order to partition the class of hard particle flows into `physical' and `unphysical' flows. We henceforth assume that the motion of the hard particles $\mathsf{P}$ and $\ov{\mathsf{P}}$ takes place in the absence of external forces. Consider any $Z_{0}\in\mathcal{D}_{2}$ for which $\mathcal{T}(Z_{0})\neq \mathbb{R}$, and let us restrict our attention to the open set $I(Z_{0})\subseteq\mathbb{R}$ on which $\mathsf{P}(t)\cap\ov{\mathsf{P}}(t)=\varnothing$, i.e. where the map $t\mapsto \Pi_{2}T_{t}Z_{0}$ is differentiable. We now consider {\bf Euler's First Law of Motion} (\textsc{Truesdell} \cite{MR1162744}), which states that for any smooth evolution of smooth subsets $t\mapsto\Omega(t)\subseteq\mathbb{R}^{2}$, a physical hard particle flow should satisfy
\begin{equation}\label{efl}
\frac{d}{dt}\int_{\Omega(t)}U(x, t; Z_{0})\,dx=0 \quad \text{for} \hspace{2mm} t\in I(Z_{0}).
\end{equation}
Since we are free to choose the family of testing sets $\{\Omega(t)\,:\,t\in I(Z_{0})\}$ as we wish, we first pick it to be a family of smooth open sets such that $\mathsf{P}(t)\subset\Omega(t)$ together with $\Omega(t)\cap \ov{\mathsf{P}}(t)=\varnothing$ for all $t\in I(Z_{0})$. Similarly, we can also choose $\Omega(t)$ to contain particle $\ov{\mathsf{P}}(t)$ alone. As $t\mapsto T_{t}Z_{0}$ is differentiable on $I(Z_{0})$, identity \eqref{efl} reduces under these two choices to the ODEs
\begin{equation}\label{firstodes}
m\frac{dv}{dt}=0\quad \text{and}\quad m\frac{d\ov{v}}{dt}=0,
\end{equation}
where $m=\int_{\mathsf{P}_{\ast}}\,dy$ is the mass of the reference particle $\mathsf{P}_{\ast}$. This implies in particular that the total linear momentum of the initial datum $Z_{0}$ is conserved on $I(Z_{0})$. Thus, in the absence of external forces and collisions, Euler's first law simply reduces to the {\em conservation of linear momentum}.

It is now we turn to {\bf Euler's Second Law of Motion} (\textsc{Truesdell} \cite{MR1162744}), which states that
\begin{equation*}
\frac{d}{dt}\int_{\Omega(t)}(x-a)^{\perp}\cdot U(x, t; Z_{0})\,dx=0.
\end{equation*}
By appropriate choices of $\Omega(t)$, we discover that Euler's second law of motion reduces to
\begin{equation*}
\frac{d}{dt}\big(-m(a-x(t))^{\perp}\cdot v(t)+J\omega(t)\big)=\frac{d}{dt}\big(-m(a-\ov{x}(t))^{\perp}\cdot\ov{v}(t)+J\ov{\omega}(t)\big)=0,
\end{equation*}
where $J:=\int_{\mathsf{P}_{\ast}}|y|^{2}\,dy$ is the moment of inertia of the reference particle $\mathsf{P}_{\ast}$. By appealing to the ODEs \eqref{firstodes} derived above, we may infer that
\begin{equation*}
\frac{d\omega}{dt}=0\quad \text{and}\quad \frac{d\ov{\omega}}{dt}=0.
\end{equation*}
Therefore, it is clear that Euler's first and second laws together imply the conservation of linear and angular momentum for $T_{t}Z_{0}$ on $I(Z_{0})$. Importantly, one may check that Euler's first and second law imply that total kinetic energy is conserved in time, in the sense that
\begin{equation*}
\frac{d}{dt}\int_{\mathbb{R}^{2}}| U(x, t; Z_{0})|^{2}\,dx=0 \quad \text{for all}\hspace{2mm} t\in I(Z_{0}).
\end{equation*}
With this discussion in place, we now specify in precise terms what we mean by a {\em classical solution} to the ODEs derived from Euler's laws.


\subsection{A Dynamical System and its Boundary Conditions}\label{dynam}
Due to the possibility of particle collisions, we cannot expect the velocity maps $t\mapsto\Pi_{2}T_{t}Z_{0}$ to be differentiable both on the left and on the right on $\mathbb{R}$. As such, we separate out the information contained in Euler's ODEs into its left- and right-limits. We consider the following class of dynamical system, namely the evolution of two identical compact, strictly-convex sets $\mathsf{P}$ and $\ov{\mathsf{P}}$ (which are translations and rotations of the reference particle $\mathsf{P}_{\ast}$), whose phase trajectory $t\mapsto Z(t)\in\mathcal{D}_{2}$ satisfies the system of one-sided ODEs
\begin{displaymath}
\frac{d}{dt_{-}}\left[
\begin{array}{c}
x \\ \vartheta \\ v \\ \omega
\end{array}\right]=\left[
\begin{array}{c}
v_{-} \\ \omega_{-} \\ 0 \\ 0
\end{array}\right] \quad \text{and} \quad \frac{d}{dt_{-}}\left[
\begin{array}{c}
\ov{x} \\ \ov{\vartheta} \\ \ov{v} \\ \ov{\omega} 
\end{array}
\right]=\left[
\begin{array}{c}
\ov{v}_{-} \\ \ov{\omega}_{-} \\ 0 \\ 0
\end{array}
\right],
\end{displaymath}
in the classical sense for all $t\in\mathbb{R}$, where
\begin{equation*}
v_{-}(t):=\lim_{h\rightarrow 0-}\frac{x(t+h)-x(t)}{h} \qquad \text{and} \qquad \omega_{-}(t):=\lim_{h\rightarrow 0-}\frac{\vartheta(t+h)-\vartheta(t)}{h},
\end{equation*}
and similarly for the barred variables $\ov{v}_{-}$ and $\ov{\omega}_{-}$. We also ask that $t\mapsto Z(t)$ satisfies the system
\begin{displaymath}
\frac{d}{dt_{+}}\left[
\begin{array}{c}
x \\ \vartheta \\ v \\ \omega
\end{array}\right]=\left[
\begin{array}{c}
v_{+} \\ \omega_{+} \\ 0 \\ 0
\end{array}\right] \quad \text{and} \quad \frac{d}{dt_{+}}\left[
\begin{array}{c}
\ov{x} \\ \ov{\vartheta} \\ \ov{v} \\ \ov{\omega} 
\end{array}
\right]=\left[
\begin{array}{c}
\ov{v}_{+} \\ \ov{\omega}_{+} \\ 0 \\ 0
\end{array}
\right],
\end{displaymath}
in the classical sense for $t\in\mathbb{R}\setminus\mathcal{T}(Z_{0})$, where 
\begin{equation*}
v_{+}(t):=\lim_{h\rightarrow 0+}\frac{x(t+h)-x(t)}{h} \qquad \text{and} \qquad \omega_{+}(t):=\lim_{h\rightarrow 0+}\frac{\vartheta(t+h)-\vartheta(t)}{h},
\end{equation*}
and similarly for the barred variables. With this in place, we make the following definition.
\begin{defn}\label{classicalsolutions}
For a given initial datum $Z_{0}\in\mathcal{D}_{2}$, we say that $Z:\mathbb{R}\rightarrow\mathcal{D}_{2}$ is a (global-in-time) {\bf classical solution} of the above system of Euler's equations of motion if and only if $x, \ov{x}, \vartheta, \ov{\vartheta}$ are continuous piecewise linear on $\mathbb{R}$, and $v, \ov{v}, \omega, \ov{\omega}$ are lower-semicontinuous piecewise constant. Moreover, these maps satisfy the above ODEs pointwise on $\mathbb{R}$ for the left-derivatives, and pointwise on $\mathbb{R}\setminus\mathcal{T}(Z_{0})$ for the right-derivatives. Finally, $Z(0)=Z_{0}$.
\end{defn}
Evidently, the system of ODEs above is not enough to determine a family of flow operators $\{T_{t}\}_{t\in\mathbb{R}}$ on $\mathcal{D}_{2}$ uniquely. Indeed, one must specify how to update the dynamics at all collision times $\tau\in\mathcal{T}(Z_{0})$, i.e. for all $\tau$ such that
\begin{equation*}
\mathrm{card}\,\mathsf{P}(x(\tau), \vartheta(\tau))\cap\mathsf{P}(\ov{x}(\tau), \ov{\vartheta}(\tau))= 1.
\end{equation*}
When two compact, strictly-convex nonspherical particles are in contact at a single point, their configuration can be characterised (with respect to the reference particle $\mathsf{P}_{\ast}$) by an element $\beta$ of the 3-torus $\mathbb{T}^{3}$. In order to be able to construct a flow on $\mathcal{D}_{2}$, one must in turn construct an associated family of velocity {\em scattering maps} $\{\sigma_{\beta}\}_{\beta\in\mathbb{T}^{3}}$ on $\mathbb{R}^{6}$, each member of which sends elements in a set of `pre-collisional' velocity vectors to elements in a set of `post-collisional' velocity vectors (see section \ref{prepost} below for the precise definition of these sets). 

Not only this, one would ideally wish the family of flow operators $\{T_{t}\}_{t\in\mathbb{R}}$ on $\mathcal{D}_{2}$ to conserve the total linear momentum, angular momentum and kinetic energy of any given initial datum; consequently, any scattering map $\sigma_{\beta}:\mathbb{R}^{6}\rightarrow\mathbb{R}^{6}$ should also have this property (consult section \ref{algcon} for a discussion of the conserved quantities of the dynamics). However, it is shown in \cite{wilk} that such a family of scattering maps on $\mathbb{R}^{6}$ does not exist.
Nevertheless, we study a class of scattering matrices which gives rise to a hard particle flow on $\mathcal{D}_{2}$ that conserves total linear momentum and kinetic energy of all initial data. Before we can construct any flow associated with the above ODEs on $\mathcal{D}_{2}$, we must first find a convenient way by which to parameterise collision configurations. This is the subject of section \ref{paramet} below.

\subsection{Scattering Maps on $\mathbb{R}^{6}$}\label{scatteringmaps}
Scattering maps are the fundamental objects with which we work in this article. In particular, they must be constructed if one is to employ the existence theory for rigid body mechanics due to \textsc{Ballard} (see, in particular, hypothesis H3 \cite{ballard} p.212). In order to construct scattering maps, we must first find a careful parameterisation of all possible two-particle collision configurations, and then in turn specify what one means by pre- and post-collisional velocity vectors.
\subsubsection{Parameterising Collision Configurations}\label{paramet}
We now parameterise the set of all $Z\in\mathcal{D}_{2}$, up to translation, such that $\mathrm{card}\,\mathsf{P}(x, \vartheta)\cap\mathsf{P}(\ov{x}, \ov{\vartheta})=1$. In this direction, we consider what we call a {\em reference collision configuration} which will allow us to parameterise a general collision configuration of two particles by an element of the 3-torus $\mathbb{T}^{3}$. By considering the plane $\mathbb{R}^{2}$ furnished with polar co-ordinates, we make the problem of describing collision configurations considerably simpler. Indeed, as previously indicated, it will be of some help to consider the centre of mass of the reference particle $\mathsf{P}_{\ast}$ as at the origin of $\mathbb{R}^{2}$, which the polar map
\begin{displaymath}
x(\rho, \psi)=\left\{\begin{array}{lr}
(\rho\cos\psi, \rho\sin\psi) & \text{when}\hspace{2mm} (\rho, \psi)\in (0, \infty)\times \mathbb{S}^{1}, \vspace{2mm}\\
(0, 0) & \text{otherwise},
\end{array}
\right.
\end{displaymath}
co-ordinatises. 
We shall use $\mathsf{P}_{\ast}$ to define reference collision maps which are functions of the polar angle $\psi\in \mathbb{S}^{1}$ and the orientation $\theta\in\mathbb{S}^{1}$ of the particle exterior to the reference particle $\mathsf{P}_{\ast}$, namely $n_{\theta}=n_{\theta}(\psi), N_{\theta}=N_{\theta}(\psi), p_{\theta}=p_{\theta}(\psi), q_{\theta}=q_{\theta}(\psi)$ and $d_{\theta}=d_{\theta}(\psi)$; see Figure 1 below for an illustration of these quantities. They constitute the essential spatial data used to construct post-collisional velocities in a collision between two particles.

We begin by making the following definition.
\begin{defn}
Let $\psi, \theta\in \mathbb{S}^{1}$ be given. The {\bf distance of closest approach} $d_{\theta}(\psi)$ of the centres of mass of $\mathsf{P}_{\ast}$ and $\mathsf{P}(\cdot, \vartheta)$ for the given elevation angle $\psi$ is defined to be 
\begin{equation*}
d_{\theta}(\psi):=\inf\bigg\{d>0\,:\,\mathrm{card}\,\mathsf{P}_{\ast}\cap\left(R(\theta)\mathsf{P}_{\ast}+de(\psi)\right)=0\bigg\},
\end{equation*}
where $e(\psi):=(\cos\psi, \sin\psi)$. 
\end{defn}
With this basic and important quantity defined, we make another important definition.
\begin{defn}\label{reference}
We say that two particles $\mathsf{P}_{1}, \mathsf{P}_{2}\subset\mathbb{R}^{2}$ are in a {\em reference collision configuration} whenever $\mathsf{P}_{i}=\mathsf{P}_{\ast}$ (for some $i\in\{1, 2\}$) and there exist $\theta\in \mathbb{S}^{1}$ and $\psi\in \mathbb{S}^{1}$ such that $\mathsf{P}_{j}=R(\theta)\mathsf{P}_{\ast}+d_{\theta}(\psi)e(\psi)$, for $j\neq i$.
\end{defn} 
The other basic collision configuration quantities are now straightforward to characterise. We define the {\bf collision vector} $p=p_{\theta}(\psi)$ to be the unique element of the set
\begin{equation*}
\mathsf{P}_{\ast}\cap\left(R(\theta)\mathsf{P}_{\ast}+d_{\theta}(\psi)e(\psi)\right),
\end{equation*}
and the {\bf conjugate collision vector} $q=q_{\theta}(\psi)$ by
\begin{equation*}
q_{\theta}(\psi):=p_{\theta}(\psi)-d_{\theta}(\psi)e(\psi).
\end{equation*}
Since $\partial\mathsf{P}_{\ast}=\mathsf{P}_{\ast}\setminus\mathrm{int}(\mathsf{P}_{\ast})$ is a closed $C^{\omega}$ curve in $\mathbb{R}^{2}$ and can therefore be described locally by a smooth polar map $h_{\mathsf{P}_{\ast}}$, one can speak of the (outward) {\bf contact normal} $n=n_{\theta}(\psi)$ to the point of collision $p=p_{\theta}(\psi)$, which is given by
\begin{equation*}
n_{\theta}(\psi):=\frac{h_{\mathsf{P}_{\ast}}'(\alpha_{\theta}(\psi))^{\perp}}{|h_{\mathsf{P}_{\ast}}'(\alpha_{\theta}(\psi))^{\perp}|},
\end{equation*}
where
\begin{equation*}
\alpha_{\theta}(\psi):=\arctan\left(\frac{p_{\theta}(\psi)_{2}}{p_{\theta}(\psi)_{1}}\right).
\end{equation*}
The {\bf exclusion normal} $N_{\theta}=N_{\theta}(\psi)$ is defined to be the (outward) unit normal to the closed $C^{\omega}$ curve $\mathcal{C}_{\theta}$ given by
\begin{equation*}
\mathcal{C}_{\theta}:=\left\{d_{\theta}(\psi)e(\psi)\,:\,\psi\in\mathbb{S}^{1}\right\}.
\end{equation*}
Notice that in the case of hard disks (when $\mathsf{P}_{\ast}=B(0, r)$ for some $r>0$), this curve is simply a circle of radius $2r$, whence $N_{\theta}$ coincides identically with $n_{\theta}$. These basic vectors are illustrated in Figure 1 below. 

\begin{figure}\label{refconfig}
\begin{tikzpicture}[scale=0.6, every node/.style={scale=0.5}]]
\draw[line width=1.5pt, gray!25!white, densely dotted]  (0,0) circle (10cm);
\draw[line width=1.5pt, gray!25!white, densely dotted]  (0,0) circle (9cm);
\draw[line width=1.5pt, gray!25!white, densely dotted]  (0,0) circle (8cm);
\draw[line width=1.5pt, gray!25!white, densely dotted]  (0,0) circle (7cm);
\draw[line width=1.5pt, gray!25!white, densely dotted]  (0,0) circle (6cm);
\draw[line width=1.5pt, gray!25!white, densely dotted]  (0,0) circle (5cm);
\draw[line width=1.5pt, gray!25!white, densely dotted]  (0,0) circle (4cm);
\draw[line width=1.5pt, gray!25!white, densely dotted]  (0,0) circle (3cm);
\draw[line width=1.5pt, gray!25!white, densely dotted]  (0,0) circle (2cm);
\draw[line width=1.5pt, gray!25!white, densely dotted]  (0,0) circle (1cm);
\draw[line width=1.5pt, gray!25!white, densely dotted] (-12, 0) -- (12, 0);
\draw[->, line width=1.5pt, gray!25!white] (0, 0) -- (12, 0);
\draw[line width=1.5pt, gray!25!white, densely dotted] (-9, -9) -- (9, 9);
\draw[line width=1.5pt, gray!25!white, densely dotted] (-9, 9) -- (9, -9);
\draw[line width=1.5pt, gray!25!white, densely dotted] (0, -12) -- (0, 12);

\draw[line width=1.5pt, rotate=135, shift={(4.1 cm, -4 cm)}, dashed, gray!40!white] (-8.1, 0) -- (5.3, 0);

\draw[line width= 1.5pt, dotted, gray!40!white] (0, 0) -- (-8, 0);

\draw[line width= 1.5pt, dotted, gray!40!white] (0, 5.7) -- (-8, 5.7);

\draw[->, line width= 1.5pt] (-8, 0) -- (-8, 5.7);
\draw[<-, line width= 1.5pt] (-8, 0) -- (-8, 5.7);

\draw[line width=1.5pt, pattern=north west lines, pattern color=gray!25!white] \boundellipse{0,0}{5.3}{2};
\draw[line width=1.5pt, rotate=135, shift={(4.1 cm, -4 cm)}] \boundellipse{0,0}{5.3}{2};
\draw (-0.2, -0.2) -- (0.2, 0.2);
\draw (-0.2, 0.2) -- (0.2, -0.2);

\draw[rotate=135, shift={(4.1 cm, -4 cm)}]  (-0.2, -0.2) -- (0.2, 0.2);
\draw[rotate=135, shift={(4.1 cm, -4 cm)}]  (-0.2, 0.2) -- (0.2, -0.2);


\draw[line width= 1.5pt, dashed] (0, 0) -- (2.55, 1.77);

\draw[line width= 1.5pt, dashed] (-0.05, 5.7) -- (2.55, 1.77);

\draw[->, line width=1.5pt,] (2.55, 1.77)--(4, 8);

\draw[line width= 1.5pt, dashed, gray!40!white] (0, 0) -- (-0.05, 11.5);

\node[scale=2pt] at (-9,3) {$\displaystyle d_{\theta}(\psi)$};

\node[scale=2pt] at (5.6, 5.6) {$\theta$};

\node[scale=2pt] at (7.7, 7.7) {$\psi$};

\node[scale=2pt] at (2, 0.4) {$p_{\theta}(\psi)$};

\node[scale=2pt] at (1.8, 4.5) {$q_{\theta}(\psi)$};

\node[scale=2pt] at (4, 8.5) {$n_{\theta}(\psi)$};

\node[scale=2pt] at (0.3, -2.7) {$\mathsf{P}_{\ast}$};

\node[scale=2pt] at (-8, 9) {$\mathsf{P}=R(\theta)\mathsf{P}_{\ast}+d_{\theta}(\psi)e(\psi)$};

\draw[>=to, ->, thick] (10.4, 0) arc (0:90.1:10.4cm);

\draw[>=to, ->, thick] (7.5, 0) arc (0:103:7.5cm);

\end{tikzpicture}
\caption{An example of a reference configuration for $\mathsf{P}_{\ast}$ and $\mathsf{P}=R(\theta)\mathsf{P}_{\ast}+d_{\theta}(\psi)e(\psi)$}
\end{figure}
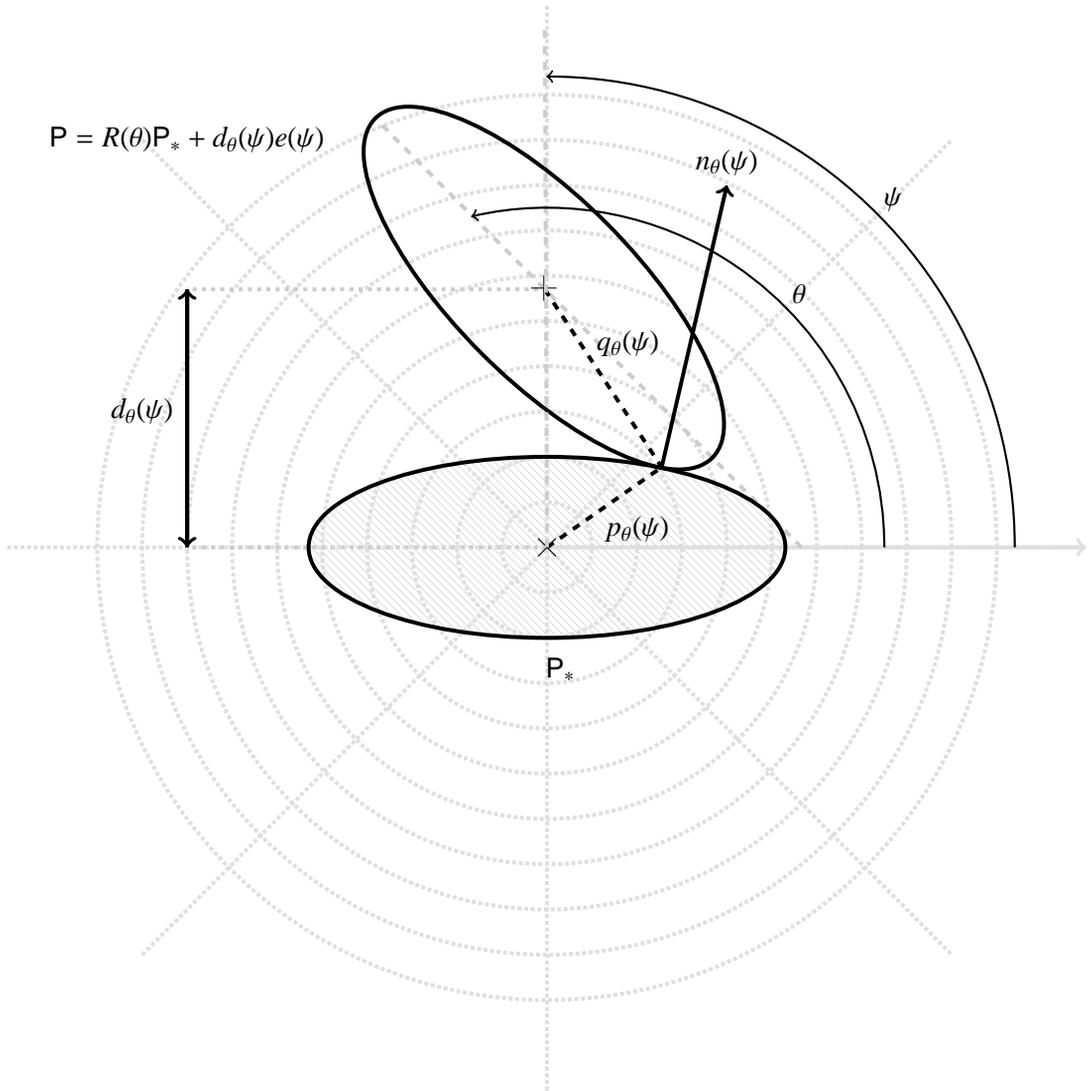
\subsubsection{General Collision Configurations}
When two particles $\mathsf{P}$ and $\ov{\mathsf{P}}$ in the dynamical system described above satisfy $\mathrm{card}\,\mathsf{P}(\tau)\cap\ov{\mathsf{P}}(\tau)=1$ for some $\tau\in\mathbb{R}$, we shall say they are in a {\em general collisional configuration}. Of course, it is not the case that they are necessarily in a {\em reference} collision configuration as defined above in definition \ref{reference}. In order to solve for the post-collisional linear velocities and angular speeds of two particles with arbitrary orientations (described by $\vartheta, \ov{\vartheta}\in \mathbb{S}^{1}$) and arbitrary relative position (described by $\psi\in\mathbb{S}^{1}$), it is expedient to relate general collisional configurations to the reference configuration introduced above. 

If $\mathsf{P}_{\ast}$ remains the standard reference particle, suppose $\mathsf{P}, \ov{\mathsf{P}}$ are of the form
\begin{equation*}
\mathsf{P}=R(\vartheta)\mathsf{P}_{\ast} \quad \text{and} \quad \ov{\mathsf{P}}=R(\ov{\vartheta})\mathsf{P}_{\ast}+x,
\end{equation*}
with $x\in\mathbb{R}^{2}$ such that $\mathrm{card}\,\mathsf{P}\cap\ov{\mathsf{P}}=1$, i.e. $\mathsf{P}$ and $\ov{\mathsf{P}}$ are in a collisional configuration. Thus, there exists an angle of elevation $\psi\in\mathbb{S}^{1}$ and a constant $\varrho=\varrho(\vartheta, \ov{\vartheta}, \psi)>0$ such that 
\begin{equation}\label{rotateme}
\mathsf{P}=R(\vartheta)\mathsf{P}_{\ast} \quad \text{and} \quad \ov{\mathsf{P}}=R(\ov{\vartheta})\mathsf{P}_{\ast}+\varrho(\vartheta, \ov{\vartheta}, \psi)e(\psi).
\end{equation}
In order to write down the appropriate distance of closest approach $d^{\ov{\vartheta}}_{\vartheta}$, together with the analogous collision vector $p^{\ov{\vartheta}}_{\vartheta}$, its conjugate $q^{\ov{\vartheta}}_{\vartheta}$ and the normals $n^{\ov{\vartheta}}_{\vartheta}$ and $N^{\ov{\vartheta}}_{\vartheta}$ in terms of the respective quantities $d_{\theta}, p_{\theta}, q_{\theta}, n_{\theta}$ and $N_{\theta}$ defined above, we perform some rotations. Acting on the system described in \eqref{rotateme} by the rotation matrix
\begin{displaymath}
R(\vartheta)^{T}=\left(
\begin{array}{cc}
\cos\vartheta & \sin\vartheta \\
-\sin\vartheta & \cos\vartheta
\end{array}
\right),
\end{displaymath}
we map $\mathsf{P}$ to $\mathsf{P}_{\ast}$ and $\ov{\mathsf{P}}$ to $R(\ov{\vartheta}-\vartheta)\mathsf{P}_{\ast}+\varrho(\vartheta, \ov{\vartheta}, \psi)e(\psi-\vartheta)$. This transformed system is now in a reference collision configuration. In particular, $\varrho(\vartheta, \ov{\vartheta}, \psi)=d_{\ov{\vartheta}-\vartheta}(\psi-\vartheta)$. Finally, by rotating back to the original configuration described by \eqref{rotateme}, it is clear that the basic collision quantities for two identical particles of orientations $\vartheta, \ov{\vartheta}\in \mathbb{S}^{1}$ whose centres of mass define a line of elevation $\psi$ with respect to the polar axis are the following:
\begin{align}
d_{\beta}=d^{\ov{\vartheta}}_{\vartheta}(\psi):=d_{\ov{\vartheta}-\vartheta}(\psi-\vartheta) \tag{distance between centres of mass} \\
p_{\beta}=p^{\ov{\vartheta}}_{\vartheta}(\psi):=R(\vartheta)p_{\ov{\vartheta}-\vartheta}(\psi-\vartheta) \tag{collision vector} \\
q_{\beta}=q^{\ov{\vartheta}}_{\vartheta}(\psi):=R(\vartheta)q_{\ov{\vartheta}-\vartheta}(\psi-\vartheta) \tag{conjugate collision vector} \\
N_{\beta}=N^{\ov{\vartheta}}_{\vartheta}(\psi):=R(\vartheta)N_{\ov{\vartheta}-\vartheta}(\psi-\vartheta) \tag{exclusion normal}
\end{align}
and
\begin{equation}\label{ndef}
n_{\beta}=n^{\ov{\vartheta}}_{\vartheta}(\psi):=R(\vartheta)n_{\ov{\vartheta}-\vartheta}(\psi-\vartheta). \tag{outward contact normal}
\end{equation}
These are illustrated in Figure 2 below. We work with these five fundamental vectors in all the sequel.
\begin{rem}
As we have done above, we shall often write the quantities such as $d^{\ov{\vartheta}}_{\vartheta}(\psi)$ simply as $d_{\beta}$ with $\beta=(\vartheta, \ov{\vartheta}, \psi)$ when the values of $\vartheta, \ov{\vartheta}, \psi\in \mathbb{S}^{1}$ are understood. It will often be convenient to use the notation $d^{\ov{\vartheta}}_{\vartheta}(\psi)$ whenever we emphasise that the parameters $(\vartheta, \ov{\vartheta})\in\mathbb{T}^{2}$ have been {\em fixed}, and $\psi\mapsto d^{\ov{\vartheta}}_{\vartheta}(\psi)$ is considered a function of $\psi$ alone. In this case, when the values of $(\vartheta, \ov{\vartheta})\in\mathbb{T}^{2}$ are understood, we shall simply write $d(\psi)$. This allows us to make the presentation of our arguments (especially those in section \ref{collnorm}) less cumbersome.
\end{rem}
\begin{figure}\label{generalcol}
\begin{tikzpicture}[scale=0.6, every node/.style={scale=0.5}]]
\draw[line width=1.5pt, gray!20!white, densely dotted]  (0,0) circle (10cm);
\draw[line width=1.5pt, gray!20!white, densely dotted]  (0,0) circle (9cm);
\draw[line width=1.5pt, gray!20!white, densely dotted]  (0,0) circle (8cm);
\draw[line width=1.5pt, gray!20!white, densely dotted]  (0,0) circle (7cm);
\draw[line width=1.5pt, gray!20!white, densely dotted]  (0,0) circle (6cm);
\draw[line width=1.5pt, gray!20!white, densely dotted]  (0,0) circle (5cm);
\draw[line width=1.5pt, gray!20!white, densely dotted]  (0,0) circle (4cm);
\draw[line width=1.5pt, gray!20!white, densely dotted]  (0,0) circle (3cm);
\draw[line width=1.5pt, gray!20!white, densely dotted]  (0,0) circle (2cm);
\draw[line width=1.5pt, gray!20!white, densely dotted]  (0,0) circle (1cm);
\draw[line width=1.5pt, gray!20!white, densely dotted] (-12, 0) -- (12, 0);
\draw[line width=1.5pt, gray!20!white, densely dotted] (-9, -9) -- (10, 10);
\draw[line width=1.5pt, gray!20!white, densely dotted] (-9, 9) -- (9, -9);
\draw[line width=1.5pt, gray!20!white, densely dotted] (0, -12) -- (0, 12);

\draw[<-, line width= 1pt, rotate=75, shift={(5.7cm, 9.9cm)}] (0, 0) -- (5, -3.26);
\draw[->, line width= 1pt, rotate=75, shift={(5.7cm, 9.9cm)}] (0, 0) -- (5, -3.26);

\draw[line width= 1.5pt, rotate=75, dashed, gray!40!white] (0, 0) -- (10, -6.52);
\draw[>=to, ->, line width=1pt, rotate=117, dashed, gray!40!white] (0, 0) -- (5.3, 0);

\draw[line width=1pt, rotate=135, dotted, gray!40!white] (0, 0) -- (11.3, 0);
\draw[line width=1pt, rotate=135, dotted, shift={(-0.3 cm, -5.95 cm)}, gray!40!white] (0, 0) -- (11.3, 0);

\draw[>=to, ->, line width=1pt, rotate=75, shift={(5 cm, -3.26 cm)}, dashed, gray!40!white] (-4.1, 0) -- (5.3, 0);
\draw[line width=1.5pt, rotate=117, pattern=north west lines, pattern color=gray!25!white] \boundellipse{0,0}{5.3}{2};
\draw[line width=1.5pt, rotate=75, shift={(5 cm, -3.26 cm)}] \boundellipse{0,0}{5.3}{2};
\draw (-0.2, -0.2) -- (0.2, 0.2);
\draw (-0.2, 0.2) -- (0.2, -0.2);

\draw[rotate=75, shift={(5 cm, -3.26 cm)}]  (-0.2, -0.2) -- (0.2, 0.2);
\draw[rotate=75, shift={(5 cm, -3.26 cm)}]  (-0.2, 0.2) -- (0.2, -0.2);


\node[scale=2pt] at (1.4, 3.1) {$\vartheta$};

\node[scale=2pt] at (8.3, 3.5) {$\ov{\vartheta}$};

\node[scale=2pt] at (10.7, 4.1) {$\psi$};

\node[scale=2pt] at (0.6, 0.7) {$p^{\ov{\vartheta}}_{\vartheta}(\psi)$};

\node[scale=2pt] at (4.7, 1.8) {$q^{\ov{\vartheta}}_{\vartheta}(\psi)$};

\node[scale=2pt] at (-6.9, 10.5) {$d^{\ov{\vartheta}}_{\vartheta}(\psi)$};


\draw[>=to, ->, thick] (11, 0) arc (0:42:11cm);

\draw[>=to, ->, thick] (8.5, 0) arc (0:52.5:8.5cm);

\draw[>=to, ->, thick] (3, 0) arc (0:116:3cm);

\draw[line width=1pt, rotate=-8, dashed] (0, 0) -- (2.35, 0);
\draw[line width=1pt, dashed] (4.43, 3.98) -- (2.37, -0.29);
\end{tikzpicture}
\caption{A general collision configuration}
\end{figure}
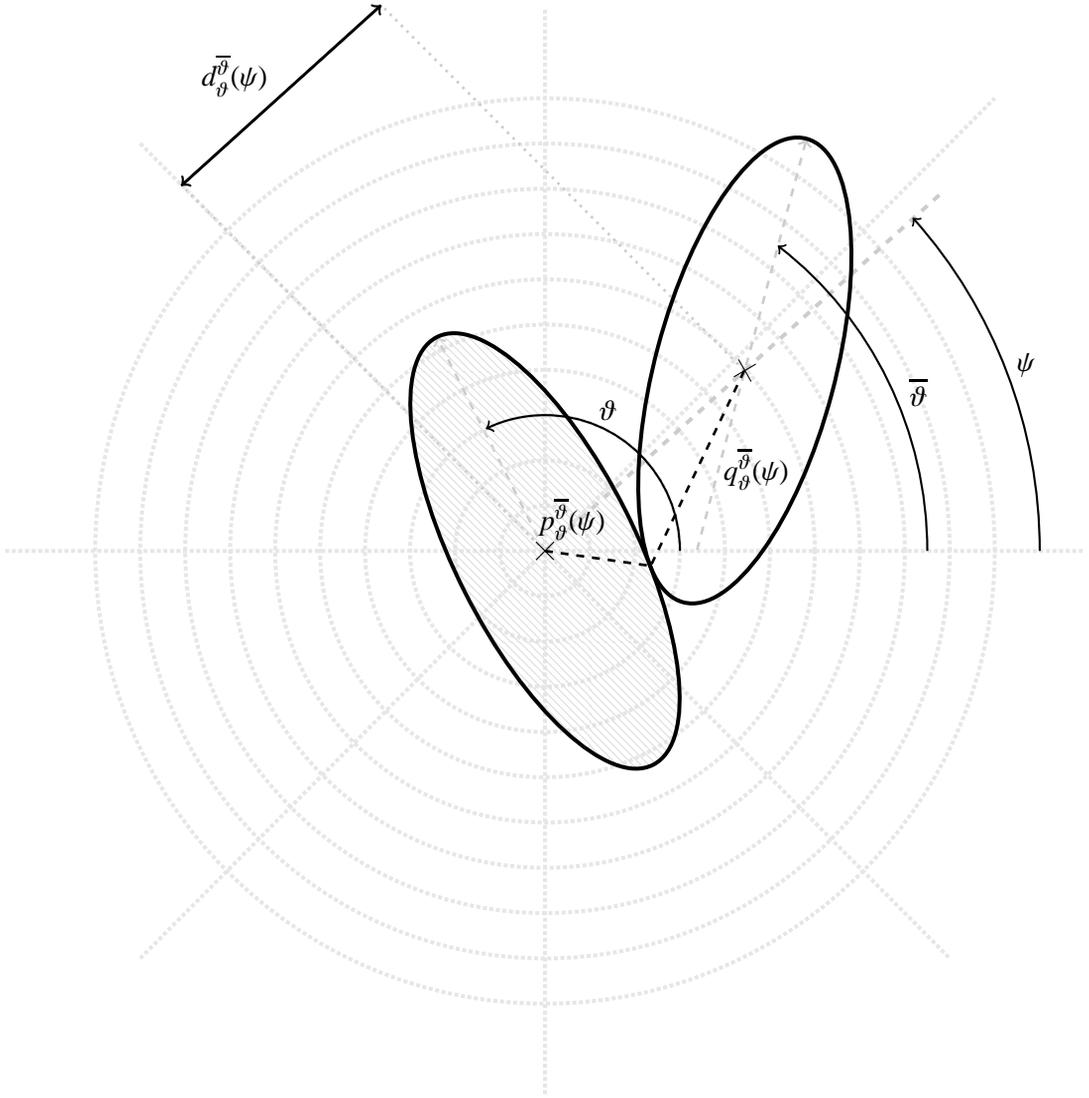
\subsubsection{Pre- and Post-collisional Velocities in $\mathbb{R}^{6}$}\label{prepost}
We now construct {\em scattering maps} on $\mathbb{R}^{6}$ which assign post-collisional velocities to pre-collisional velocities of two particles in a collision configuration in such a way that
\begin{equation*}
\mathrm{card}\,\mathsf{P}(x(t), \vartheta(t))\cap \mathsf{P}(\ov{x}(t), \ov{\vartheta}(t))\leq 1
\end{equation*}
for all $t$ in a sufficiently-small neighbourhood of a given collision time $\tau\in\mathcal{T}(Z_{0})$. Once we have such a map that uniquely updates the particle velocities, we may construct a global flow on phase space $\mathcal{D}_{2}$ corresponding to a classical solution of the system of governing ODEs introduced in section \ref{dynam} above using techniques from \cite{ballard}.

We now derive sets of {\em pre-} and {\em post}-collisional velocity vectors, and define what we mean by a scattering map. In order to do this, let us consider the auxiliary map $F:\mathbb{R}^{4}\times\mathbb{T}^{2}\rightarrow\mathbb{R}$ given by
\begin{equation*}
F(x, \ov{x}, \vartheta, \ov{\vartheta}):=|x-\ov{x}|-d^{\ov{\vartheta}}_{\vartheta}\left(\arctan\left[\frac{x_{2}-\ov{x}_{2}}{x_{1}-\ov{x}_{1}}\right]\right).
\end{equation*}
Clearly, $F(x, \ov{x}, \vartheta, \ov{\vartheta})>0$ if and only if $\mathsf{P}(x, \vartheta)\cap\mathsf{P}(\ov{x}, \ov{\vartheta})=\varnothing$; moreover, $F(x, \ov{x}, \vartheta, \ov{\vartheta})=0$ if and only if $\mathrm{card}\,\mathsf{P}(x, \vartheta)\cap\mathsf{P}(\ov{x}, \ov{\vartheta})=1$. We now introduce a hard particle dynamics $\{T_{t}\}_{t\in\mathbb{R}}$ associated with the ODE system in section \ref{dynam} above. Consider the maps $(x, \vartheta):\mathbb{R}\rightarrow \mathbb{R}^{2}\times \mathbb{S}^{1}$ and $(\ov{x}, \ov{\vartheta}):\mathbb{R}\rightarrow \mathbb{R}^{2}\times \mathbb{S}^{1}$ (with $\Pi_{1}T_{t}Z_{0}=[x, \ov{x}, \vartheta, \ov{\vartheta}]$) which satisfy
\begin{equation*}
\mathrm{card}\,\mathsf{P}(x(t), \vartheta(t))\cap \mathsf{P}(\ov{x}(t), \ov{\vartheta}(t))\leq 1
\end{equation*}
for all time $t\in\mathbb{R}$. We recall that, by assumption, $Z(t)=\Pi_{1}T_{t}Z_{0}$ is both left- and right-differentiable at all times, the only points at which right-derivatives do not necessarily agree with those on the left being the set of collision times $\mathcal{T}(Z_{0})$.

Consider now any collision time $\tau\in\mathcal{T}(Z_{0})$. Using the assumption of left-differentiability of the relevant phase maps, we have
\begin{equation*}
\frac{d}{dt_{-}}F(x(t), \ov{x}(t), \vartheta(t), \ov{\vartheta}(t))\bigg|_{t=\tau}\leq 0
\end{equation*}
for arbitrary $\tau\in\mathcal{T}(Z_{0})$, which a calculation reveals to be
\begin{align*}
\left(e(\psi)-\frac{1}{d^{\ov{\vartheta}}_{\vartheta}(\psi)}\frac{\partial d_{\ov{\vartheta}-\vartheta}}{\partial\psi}(\psi-\vartheta)e(\psi)^{\perp}\right)\cdot v_{-}-\left(e(\psi)-\frac{1}{d^{\ov{\vartheta}}_{\vartheta}(\psi)}\frac{\partial d_{\ov{\vartheta}-\vartheta}}{\partial\psi}(\psi-\vartheta)e(\psi)^{\perp}\right)\cdot \ov{v}_{-} \vspace{2mm} \\
+\left(\frac{\partial d_{\ov{\vartheta}-\vartheta}}{\partial\psi}(\psi-\vartheta)+\frac{\partial d_{\ov{\vartheta}-\vartheta}}{\partial\theta}(\psi-\vartheta)\right)\omega_{-}-\frac{\partial d_{\ov{\vartheta}-\vartheta}}{\partial\theta}(\psi-\vartheta)\ov{\omega}_{-}\leq 0,
\end{align*}
with $x(\tau), \ov{x}(\tau), \vartheta(\tau)$ and $\ov{\vartheta}(\tau)$ simply denoted by $x, \ov{x}, \vartheta$ and $\ov{\vartheta}$, respectively. Moreover, since the particles are in a collision configuration, there exists $\psi\in \mathbb{S}^{1}$ satisfying the identity $x-\ov{x}=d^{\ov{\vartheta}}_{\vartheta}(\psi)e(\psi)$. Now, we notice that the curve of closest approach
\begin{equation*}
\mathcal{C}^{\ov{\vartheta}}_{\vartheta}:=\left\{d^{\ov{\vartheta}}_{\vartheta}(\psi)e(\psi)\,:\,\psi\in \mathbb{S}^{1}\right\}
\end{equation*}
has (non-normalised) normal vectors
\begin{equation}\label{Nvector}
\widetilde{N}^{\ov{\vartheta}}_{\vartheta}(\psi):=e(\psi)-\frac{1}{d^{\ov{\vartheta}}_{\vartheta}(\psi)}\frac{\partial d_{\ov{\vartheta}-\vartheta}}{\partial\psi}(\psi-\vartheta)e(\psi)^{\perp},
\end{equation}
whose normalisation we denote by $N^{\ov{\vartheta}}_{\vartheta}(\psi):=\widetilde{N}^{\ov{\vartheta}}_{\vartheta}(\psi)/|\widetilde{N}^{\ov{\vartheta}}_{\vartheta}(\psi)|$. Moreover, we make the observation that
\begin{equation*}
d^{\ov{\vartheta}}_{\vartheta}(\psi)e(\psi)^{\perp}\cdot \widetilde{N}^{\ov{\vartheta}}_{\vartheta}(\psi)=-\frac{\partial d_{\ov{\vartheta}-\vartheta}}{\partial\psi}(\psi-\vartheta).
\end{equation*}
We therefore write the above inequality in the more compact form
\begin{align*}
N^{\ov{\vartheta}}_{\vartheta}(\psi)\cdot v_{-}-N^{\ov{\vartheta}}_{\vartheta}(\psi)\cdot \ov{v}_{-} \vspace{2mm}\\
+\left(r^{\ov{\vartheta}}_{\vartheta}(\psi)^{\perp}\cdot N^{\ov{\vartheta}}_{\vartheta}(\psi)-d^{\ov{\vartheta}}_{\vartheta}(\psi)e(\psi)^{\perp}\cdot N^{\ov{\vartheta}}_{\vartheta}(\psi)\right)\omega_{-}-r^{\ov{\vartheta}}_{\vartheta}(\psi)^{\perp}\cdot N^{\ov{\vartheta}}_{\vartheta}(\psi)\ov{\omega}_{-}\leq 0,
\end{align*}
where $r^{\ov{\vartheta}}_{\vartheta}(\psi)$ is the vector 
\begin{equation*}
r^{\ov{\vartheta}}_{\vartheta}(\psi):=-\frac{\partial d_{\ov{\vartheta}-\vartheta}}{\partial\theta}(\psi-\vartheta)e(\psi)^{\perp}.
\end{equation*}
As it is one of the most important quantities in all that follows, we make the following definition.
\begin{defn}\label{collnorm}
For any $\beta\in\mathbb{T}^{3}$, the {\bf collision normal} $\gamma_{\beta}\in\mathbb{R}^{6}$ is defined to be
\begin{displaymath}
\gamma_{\beta}:=\frac{1}{\sqrt{\Lambda_{\beta}}}\left[
\begin{array}{c}
N_{\beta}\\
-N_{\beta}\\
\left(r_{\beta}-d_{\beta}e(\psi)\right)^{\perp}\cdot N_{\beta}\\
-r_{\beta}^{\perp}\cdot N_{\beta}
\end{array}
\right],
\end{displaymath}
where 
\begin{equation}\label{lambdaish}
\Lambda_{\beta}:=\frac{2}{m}+\frac{1}{J}\left|\left(r_{\beta}-d_{\beta}e(\psi)\right)^{\perp}\cdot N_{\beta}\right|^{2}+\frac{1}{J}|r_{\beta}^{\perp}\cdot N_{\beta}|^{2}.
\end{equation}
\end{defn}
\begin{rem}
A quick calculation reveals that the collision normal $\gamma_{\beta}$ is not of unit norm. It will be useful rather often to employ the {\em unit collision normal} $\widehat{\gamma}_{\beta}:=M^{-1}\gamma_{\beta} $ in what follows.
\end{rem}
In the language of definition \ref{collnorm}, one then has that
\begin{equation*}
\frac{d}{dt_{-}}F(x(t), \ov{x}(t), \vartheta(t), \ov{\vartheta}(t))\bigg|_{t=\tau}\leq 0
\end{equation*}
if and only if
\begin{equation*}
\gamma_{\beta}\cdot V_{-}\leq 0,
\end{equation*}
where $V_{-}=[v_{-}, \ov{v}_{-}, \omega_{-}, \ov{\omega}_{-}]$. In a similar way, one can treat the post-collisional case and deduce that
\begin{equation*}
\frac{d}{dt_{+}}F(x(t), \ov{x}(t), \vartheta(t), \ov{\vartheta}(t))\bigg|_{t=\tau}\geq 0
\end{equation*}
if and only if
\begin{equation*}
\gamma_{\beta}\cdot V_{+}\geq 0.
\end{equation*}
Let a spatial configuration point $\beta\in\mathbb{T}^{3}$ be given and fixed. With the above discussion in mind, we define the set of {\em pre-collisional} velocities associated with the spatial configuration $\beta\in\mathbb{T}^{3}$ to be
\begin{equation*}
\Sigma_{\beta}^{-}:=\left\{V\in\mathbb{R}^{6}\,:\,V\cdot \gamma_{\beta} \leq 0\right\},
\end{equation*}
and the set of all {\em post-collisional} velocities to be
\begin{equation*}
\Sigma_{\beta}^{+}:=\left\{V\in\mathbb{R}^{6}\,:\, V\cdot \gamma_{\beta}\geq 0\right\}.
\end{equation*}
Evidently, $\mathbb{R}^{6}=\Sigma_{\beta}^{-}\cup\Sigma_{\beta}^{+}$. We denote the intersection $\Sigma_{\beta}^{-}\cap\Sigma_{\beta}^{+}$ of these two half-spaces by $\Sigma_{\beta}^{0}$. With these definitions in place, we can now say what we mean by a scattering map on $\mathbb{R}^{6}$.
\begin{defn}\label{scatmap}
We say that a bijective map $\sigma_{\beta}:\mathbb{R}^{6}\rightarrow\mathbb{R}^{6}$ is a {\bf scattering map} corresponding to the spatial configuration $\beta\in\mathbb{T}^{3}$ if and only if $\sigma_{\beta}(\Sigma_{\beta}^{-})=\Sigma_{\beta}^{+}$ and $\sigma_{\beta}\circ \sigma_{\beta}=\iota$ on $\mathbb{R}^{6}$.
\end{defn}
Suppose $\beta\in\mathbb{T}^{3}$, i.e. let the orientations and centres of mass of two particles in a collision configuration be given, and let $\sigma_{\beta}$ be an associated scattering map. By definition, 
\begin{equation}\label{pretopost}
V\cdot\gamma_{\beta}\leq 0\quad\Longrightarrow\quad \sigma_{\beta}[V]\cdot\gamma_{\beta}\geq 0,
\end{equation}
and also 
\begin{equation}\label{posttopre}
V\cdot\gamma_{\beta}\geq 0\quad\Longrightarrow\quad \sigma_{\beta}[V]\cdot\gamma_{\beta}\leq 0.
\end{equation}
It will be convenient in the rest of this article to write the above inequalities in what we shall call {\em quasi-momentum variables}. Consider the mass-inertia matrix $M\in\mathbb{R}^{6\times 6}$ given by
\begin{equation*}
M:=\mathrm{diag}(\sqrt{m}, \sqrt{m}, \sqrt{m}, \sqrt{m}, \sqrt{J}, \sqrt{J}).
\end{equation*}
Writing $P:=MV$ for a given $V\in\mathbb{R}^{6}$, and recalling that $\widehat{\gamma}_{\beta}=M^{-1}\gamma_{\beta}$, we can recast the above conditions as
\begin{equation}\label{Rpretopost}
P\cdot\widehat{\gamma}_{\beta}\leq 0 \quad \Longrightarrow\quad \rho_{\beta}[P]\cdot \widehat{\gamma}_{\beta}\geq 0,
\end{equation}
and
\begin{equation}\label{Rposttopre}
P\cdot\widehat{\gamma}_{\beta}\geq 0 \quad \Longrightarrow\quad \rho_{\beta}[P]\cdot \widehat{\gamma}_{\beta}\leq 0,
\end{equation}
where the transformed scattering map $\rho_{\beta}$ is given by
\begin{equation*}
\rho_{\beta}[P]:=M\sigma_{\beta}[M^{-1}P].
\end{equation*}
We write the associated transformed set of pre-collisional velocities as $\widehat{\Sigma}_{\beta}^{-}$, and the post-collisional velocities as $\widehat{\Sigma}_{\beta}^{+}$. 

There are many involutions $\sigma_{\beta}:\mathbb{R}^{6}\rightarrow\mathbb{R}^{6}$ which map the lower half-space $\Sigma_{\beta}^{-}$ to the upper half-space $\Sigma_{\beta}^{+}$. We now specify some conservation laws from classical mechanics, attributed to Euler's laws of motion, which should be respected by the hard particle flow $\{T_{t}\}_{t\in\mathbb{R}}$ on phase space $\mathcal{D}_{2}$. In particular, in view of the results in \cite{wilk}, we stipulate that the flow should conserve only total linear momentum and kinetic energy of given initial data $Z_{0}\in\mathcal{D}_{2}$.
\begin{figure}\label{generalcol}
\begin{tikzpicture}[scale=0.6, every node/.style={scale=0.5}]]
\draw[line width=1.5pt, gray!20!white, densely dotted]  (0,0) circle (10cm);
\draw[line width=1.5pt, gray!20!white, densely dotted]  (0,0) circle (9cm);
\draw[line width=1.5pt, gray!20!white, densely dotted]  (0,0) circle (8cm);
\draw[line width=1.5pt, gray!20!white, densely dotted]  (0,0) circle (7cm);
\draw[line width=1.5pt, gray!20!white, densely dotted]  (0,0) circle (6cm);
\draw[line width=1.5pt, gray!20!white, densely dotted]  (0,0) circle (5cm);
\draw[line width=1.5pt, gray!20!white, densely dotted]  (0,0) circle (4cm);
\draw[line width=1.5pt, gray!20!white, densely dotted]  (0,0) circle (3cm);
\draw[line width=1.5pt, gray!20!white, densely dotted]  (0,0) circle (2cm);
\draw[line width=1.5pt, gray!20!white, densely dotted]  (0,0) circle (1cm);
\draw[line width=1.5pt, gray!20!white, densely dotted] (-12, 0) -- (12, 0);
\draw[line width=1.5pt, gray!20!white, densely dotted] (-9, -9) -- (10, 10);
\draw[line width=1.5pt, gray!20!white, densely dotted] (-9, 9) -- (9, -9);
\draw[line width=1.5pt, gray!20!white, densely dotted] (0, -12) -- (0, 12);





\draw[line width=1.5pt, color=gray!25!white] \boundellipse{-6.3, 6.1}{5.3}{2};

\draw[line width=1.5pt, rotate=117, pattern=north west lines, pattern color=gray!25!white] \boundellipse{0,0}{5.3}{2};

\draw[->, line width=1.5pt] (-6.3, 6.1) -- (-9, 12.985);

\draw[->, line width=1.5pt] (-2.3, 4.75) -- (-5, 10);

\filldraw (-8.26, 2.7) circle (2pt);
\filldraw (8.26, -2.7) circle (2pt);
\filldraw (-8.23, 2.4) circle (2pt);
\filldraw (8.23, -2.4) circle (2pt);
\filldraw (-8.19, 2.1) circle (2pt);
\filldraw (8.19, -2.1) circle (2pt);
\filldraw (-8.15, 1.8) circle (2pt);
\filldraw (8.15, -1.8) circle (2pt);
\filldraw (-8.08, 1.5) circle (2pt);
\filldraw (8.08, -1.5) circle (2pt);
\filldraw (-8, 1.2) circle (2pt);
\filldraw (8, -1.2) circle (2pt);
\filldraw (-7.9, 0.9) circle (2pt);
\filldraw (7.9, -0.9) circle (2pt);
\filldraw (-7.79, 0.6) circle (2pt);
\filldraw (7.79, -0.6) circle (2pt);
\filldraw (-7.69, 0.3) circle (2pt);
\filldraw (7.69, -0.3) circle (2pt);

\filldraw (-7.95, 4.80) circle (2pt);
\filldraw (7.95, -4.80) circle (2pt);
\filldraw (-8.1, 4.5) circle (2pt);
\filldraw (8.1, -4.5) circle (2pt);
\filldraw (-8.2, 4.2) circle (2pt);
\filldraw (8.2, -4.2) circle (2pt);
\filldraw (-8.25, 3.9) circle (2pt);
\filldraw (8.25, -3.9) circle (2pt);
\filldraw (-8.31, 3.6) circle (2pt);
\filldraw (8.31, -3.6) circle (2pt);
\filldraw (-8.3, 3.3) circle (2pt);
\filldraw (8.3, -3.3) circle (2pt);
\filldraw (-8.29, 3) circle (2pt);
\filldraw (8.29, -3) circle (2pt);

\filldraw (-6.3, 6.1) circle (2pt);
\filldraw (6.3, -6.1) circle (2pt);
\filldraw (-6.6, 5.95) circle (2pt);
\filldraw (6.6, -5.95) circle (2pt);
\filldraw (-6.9, 5.8) circle (2pt);
\filldraw (6.9, -5.8) circle (2pt);
\filldraw (-7.2, 5.61) circle (2pt);
\filldraw (7.2, -5.61) circle (2pt);
\filldraw (-7.4, 5.45) circle (2pt);
\filldraw (7.4, -5.45) circle (2pt);
\filldraw (-7.6, 5.25) circle (2pt);
\filldraw (7.6, -5.25) circle (2pt);
\filldraw (-7.8, 5.05) circle (2pt);
\filldraw (7.8, -5.05) circle (2pt);

\filldraw (1.05, 6.5) circle (2pt);
\filldraw (-1.05, -6.5) circle (2pt);
\filldraw (-0.5, 6.75) circle (2pt);
\filldraw (0.5, -6.75) circle (2pt);
\filldraw (-1.5, 6.8) circle (2pt);
\filldraw (1.5, -6.8) circle (2pt);
\filldraw (-3.5, 6.79) circle (2pt);
\filldraw (3.5, -6.79) circle (2pt);
\filldraw (-4.2, 6.7) circle (2pt);
\filldraw (4.2, -6.7) circle (2pt);
\filldraw (-4.6, 6.6) circle (2pt);
\filldraw (4.6, -6.6) circle (2pt);
\filldraw (-5, 6.5) circle (2pt);
\filldraw (5, -6.5) circle (2pt);
\filldraw (-5.5, 6.35) circle (2pt);
\filldraw (5.5, -6.35) circle (2pt);
\filldraw (-5.9, 6.25) circle (2pt);
\filldraw (5.9, -6.25) circle (2pt);

\filldraw (4.99, 4.25) circle (2pt);
\filldraw (-4.99, -4.25) circle (2pt);
\filldraw (4.76, 4.5) circle (2pt);
\filldraw (-4.76, -4.5) circle (2pt);
\filldraw (4.51, 4.75) circle (2pt);
\filldraw (-4.51, -4.75) circle (2pt);
\filldraw (4.23, 5) circle (2pt);
\filldraw (-4.23, -5) circle (2pt);
\filldraw (3.92, 5.25) circle (2pt);
\filldraw (-3.92, -5.25) circle (2pt);
\filldraw (3.56, 5.5) circle (2pt);
\filldraw (-3.56, -5.5) circle (2pt);
\filldraw (3.1, 5.75) circle (2pt);
\filldraw (-3.1, -5.75) circle (2pt);
\filldraw (2.6, 6) circle (2pt);
\filldraw (-2.6, -6) circle (2pt);
\filldraw (1.95, 6.25) circle (2pt);
\filldraw (-1.95, -6.25) circle (2pt);

\filldraw (6.47, 2.25) circle (2pt);
\filldraw (-6.47, -2.25) circle (2pt);
\filldraw (6.33, 2.5) circle (2pt);
\filldraw (-6.33, -2.5) circle (2pt);
\filldraw (6.15, 2.75) circle (2pt);
\filldraw (-6.15, -2.75) circle (2pt);
\filldraw (5.99, 3) circle (2pt);
\filldraw (-5.99, -3) circle (2pt);
\filldraw (5.81, 3.25) circle (2pt);
\filldraw (-5.81, -3.25) circle (2pt);
\filldraw (5.64, 3.5) circle (2pt);
\filldraw (-5.64, -3.5) circle (2pt);
\filldraw (5.43, 3.75) circle (2pt);
\filldraw (-5.43, -3.75) circle (2pt);
\filldraw (5.22, 4) circle (2pt);
\filldraw (-5.22, -4) circle (2pt);

\filldraw (7.57,0) circle (2pt);
\filldraw (-7.57,0) circle (2pt);
\filldraw (7.48, 0.25) circle (2pt);
\filldraw (-7.48, -0.25) circle (2pt);
\filldraw (7.38, 0.5) circle (2pt);
\filldraw (-7.38, -0.5) circle (2pt);
\filldraw (7.28, 0.75) circle (2pt);
\filldraw (-7.28, -0.75) circle (2pt);
\filldraw (7.15, 1) circle (2pt);
\filldraw (-7.15, -1) circle (2pt);
\filldraw (7.03, 1.25) circle (2pt);
\filldraw (-7.03, -1.25) circle (2pt);
\filldraw (6.9, 1.5) circle (2pt);
\filldraw (-6.9, -1.5) circle (2pt);
\filldraw (6.74, 1.75) circle (2pt);
\filldraw (-6.74, -1.75) circle (2pt);
\filldraw (6.62, 2) circle (2pt);
\filldraw (-6.62, -2) circle (2pt);

\draw (-0.2, -0.2) -- (0.2, 0.2);
\draw (-0.2, 0.2) -- (0.2, -0.2);



%
%
%
%
\node[scale=2pt] at (-2, 8) {$n^{\ov{\vartheta}}_{\vartheta}(\psi)$};
\node[scale=2pt] at (-9.5, 10.5) {$N^{\ov{\vartheta}}_{\vartheta}(\psi)$};

\node[scale=2pt] at (8, -6) {$\mathcal{C}^{\ov{\vartheta}}_{\vartheta}$};


%
%

\end{tikzpicture}
\caption{A locus of closest approach with the exclusion and contact normals}
\end{figure}
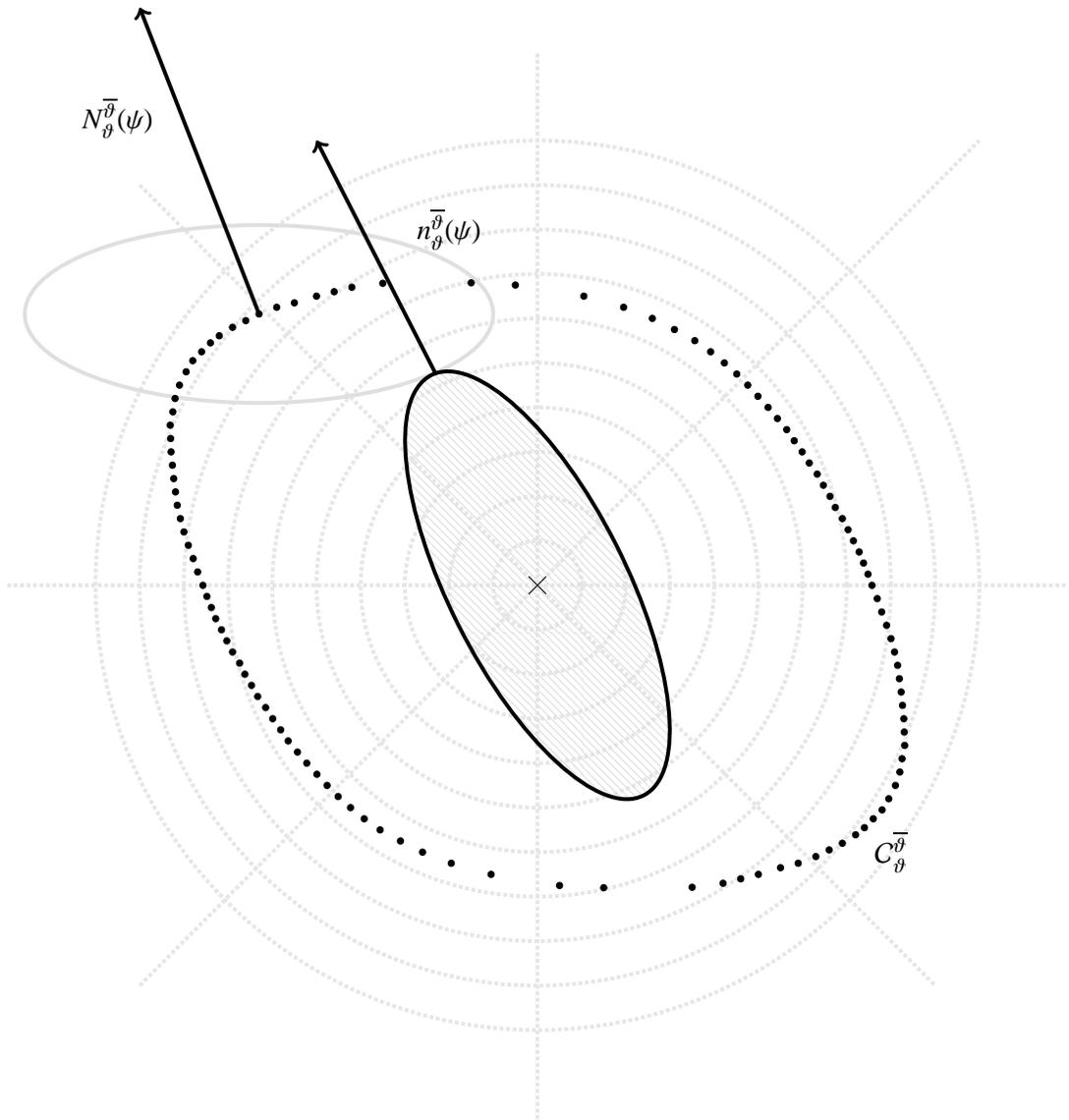
\subsection{Derivation of the Algebraic Constraints}\label{algcon}
Suppose the particles in collisional contact $\mathsf{P}:=R(\vartheta)\mathsf{P}_{\ast}$ and $\ov{\mathsf{P}}:=R(\ov{\vartheta})\mathsf{P}_{\ast}+d^{\ov{\vartheta}}_{\vartheta}(\psi)e(\psi)$ are given, together with their respective linear velocities and angular speeds $V\in\Sigma_{\beta}^{-}$, with $\beta=(\vartheta, \ov{\vartheta}, \psi)$. We seek post-collisional linear velocities and angular speeds $V'\in\Sigma_{\beta}^{+}$ such that there is conservation of total linear momentum and there is no loss of kinetic energy following collision. In what follows, unprimed quantities will denote pre-collisional ones, while those which are primed denote post-collisional ones.

Adhering to Euler's first law of motion, we stipulate that the values of the pre- and post-collisional velocities should satisfy the {\bf conservation of linear momentum}, i.e.
\begin{equation}\label{bawcolm}
\int_{\mathsf{P}(z(\tau))}v'_{\beta}(y, \tau)\,dy+\int_{\mathsf{P}(\ov{z}(\tau))}\ov{v}_{\beta}'(y, \tau)\,dy = \int_{\mathsf{P}(z(\tau))}v(y, \tau)\,dy+\int_{\mathsf{P}(\ov{z}(\tau))}\ov{v}(y, \tau)\,dy \tag{COLM},
\end{equation}
which since $v(y, t)=v(t)+\omega(t)(y-x(t))^{\perp}$ and $\ov{v}(y, t)=\ov{v}(t)+\ov{\omega}(t)(y-\ov{x}(t))^{\perp}$ (and similarly for the primed variables) reduces to
\begin{equation}\label{colm}
mv_{\beta}'+m\ov{v}_{\beta}'=mv+m\ov{v}.
\end{equation}
We also require that total kinetic energy be unchanged after the collision of the two particles. The {\bf conservation of kinetic energy} takes the form
\begin{equation}\label{bawcoke}
\int_{\mathsf{P}(z(\tau))}|v'_{\beta}(y, \tau)|^{2}\,dy+\int_{\mathsf{P}(\ov{z}(\tau))}|\ov{v}_{\beta}'(y, \tau)|^{2}\,dy =\int_{\mathsf{P}(z(\tau))}|v(y, \tau)|^{2}\,dy+\int_{\mathsf{P}(\ov{z}(\tau))}|\ov{v}(y, \tau)|^{2}\,dy, \tag{COKE}
\end{equation}
which reduces to 
\begin{equation}\label{coke}
m|v_{\beta}'|^{2}+J(\omega_{\beta}')^{2}+m|\ov{v}_{\beta}'|^{2}+J(\ov{\omega}_{\beta}')^{2}=m|v|^{2}+J\omega^{2}+m|\ov{v}|^{2}+J\ov{\omega}^{2}.
\end{equation}
Expressing the above conservation laws in scattering map notation, we find that \eqref{colm} takes the form
\begin{displaymath}
\left(
\begin{array}{c}
\sigma_{\beta}[V]_{1}+\sigma_{\beta}[V]_{3} \\
\sigma_{\beta}[V]_{2}+\sigma_{\beta}[V]_{4}
\end{array}
\right)=\left(
\begin{array}{c}
V_{1}+V_{3}\\
V_{2}+V_{4}
\end{array}
\right),
\end{displaymath}
while \eqref{coke} takes the form
\begin{equation*}
|M\sigma_{\beta}[V]^{2}=|MV|^{2},
\end{equation*}
where $V=[v, \ov{v}, \omega, \ov{\omega}]$. As claimed above, in order to prove \textsc{Theorem} \ref{poorflow} (or, rather, the more precise statement \ref{flow}), we must first construct a family of scattering maps $\{\sigma_{\beta}\}_{\beta\in\mathbb{T}^{3}}$ on $\mathbb{R}^{6}$, each member of which conserves total linear momentum and kinetic energy. This is the aim of the following section.

\subsection{Construction of a Dynamics for Euler's Equations on $\mathcal{D}_{2}$}\label{connydym}
We now aim to prove the following more precisely-stated form \textsc{Theorem} \ref{poorflow}.
\begin{thm}\label{flow}
Suppose $\mathsf{P}_{\ast}\subset\mathbb{R}^{2}$ is compact and strictly-convex with boundary $\partial \mathsf{P}_{\ast}$ of class $C^{\omega}$. For each $Z_{0}\in\mathcal{D}_{2}(\mathsf{P}_{\ast})$, there exists a global-in-time classical solution $Z(t)=T_{t}Z_{0}$ of Euler's equations with the property that
\begin{displaymath}
\left(
\begin{array}{c}
(\Pi_{2}T_{t}Z_{0})_{1}+(\Pi_{2}T_{t}Z_{0})_{3} \\
(\Pi_{2}T_{t}Z_{0})_{2}+(\Pi_{2}T_{t}Z_{0})_{4}
\end{array}
\right)=
\left(
\begin{array}{c}
(\Pi_{2}Z_{0})_{1}+(\Pi_{2}Z_{0})_{3} \\
(\Pi_{2}Z_{0})_{2}+(\Pi_{2}Z_{0})_{4}
\end{array}
\right)\quad \text{for all} \hspace{2mm} t\in\mathbb{R},
\end{displaymath}
and
\begin{equation*}
|M\Pi_{2}T_{t}Z_{0}|^{2}=|M\Pi_{2}Z_{0}|^{2}\quad \text{for all} \hspace{2mm} t\in\mathbb{R}.
\end{equation*}
\end{thm}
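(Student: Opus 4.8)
The plan is to separate the argument into two independent pieces: (i) an explicit construction of a family of scattering maps $\{\sigma_{\beta}\}_{\beta\in\mathbb{T}^{3}}$ meeting Definition \ref{scatmap} and the algebraic constraints \eqref{colm}--\eqref{coke}; and (ii) an appeal to the existence and uniqueness theory of \textsc{Ballard} \cite{ballard} to patch these local collision rules together with the free-flight ODEs of section \ref{dynam} into a global-in-time classical solution. The conservation statements in the theorem will then drop out by patching.

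For piece (i), I would pass to the quasi-momentum variables $P=MV$ and take $\rho_{\beta}$ to be the Euclidean reflection across the hyperplane orthogonal to the unit collision normal, namely $\rho_{\beta}[P]:=P-2(P\cdot\widehat{\gamma}_{\beta})\widehat{\gamma}_{\beta}$, and set $\sigma_{\beta}:=M^{-1}\rho_{\beta}M$. Because the normalising factor $\Lambda_{\beta}$ in \eqref{lambdaish} is chosen precisely so that $|\widehat{\gamma}_{\beta}|=1$ (equivalently $\gamma_{\beta}\cdot M^{-2}\gamma_{\beta}=1$), $\rho_{\beta}$ is an orthogonal involution, so $\sigma_{\beta}$ is a bijective involution of $\mathbb{R}^{6}$; and since $\rho_{\beta}[P]\cdot\widehat{\gamma}_{\beta}=-P\cdot\widehat{\gamma}_{\beta}$ one gets $\sigma_{\beta}[V]\cdot\gamma_{\beta}=-V\cdot\gamma_{\beta}$, which is exactly $\sigma_{\beta}(\Sigma_{\beta}^{-})=\Sigma_{\beta}^{+}$. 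Orthogonality of $\rho_{\beta}$ gives $|M\sigma_{\beta}[V]|^{2}=|\rho_{\beta}[MV]|^{2}=|MV|^{2}$, which is \eqref{coke}; and since the first and third (resp.\ second and fourth) components of $\gamma_{\beta}$ in Definition \ref{collnorm} are $\tfrac{1}{\sqrt{\Lambda_{\beta}}}N_{\beta}$ and $-\tfrac{1}{\sqrt{\Lambda_{\beta}}}N_{\beta}$, one has $(\widehat{\gamma}_{\beta})_{1}+(\widehat{\gamma}_{\beta})_{3}=(\widehat{\gamma}_{\beta})_{2}+(\widehat{\gamma}_{\beta})_{4}=0$, so the reflection fixes $P_{1}+P_{3}$ and $P_{2}+P_{4}$, i.e.\ $\sigma_{\beta}$ satisfies \eqref{colm}. (No such cancellation occurs in components five and six, consistent with the impossibility result of \cite{wilk} for angular momentum.)

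For piece (ii), I would verify that the data meet the hypotheses of \cite{ballard}, in particular hypothesis H3. The configuration space $\mathbb{R}^{4}\times\mathbb{T}^{2}$ carries the constant, hence real-analytic, kinetic metric $\tfrac12|M\dot q|^{2}$, there are no external forces, and the admissible region is $\{F\ge 0\}$ with $F$ as in section \ref{prepost}. The crucial inputs are that $F$ is real-analytic near contact and that $\nabla F$ does not vanish there: analyticity follows from the analyticity and strict convexity of $\partial\mathsf{P}_{\ast}$, which render the distance of closest approach $(\theta,\psi)\mapsto d_{\theta}(\psi)$ real-analytic (section \ref{scatteringmaps}), while non-degeneracy follows from $\widetilde{N}_{\beta}^{\ov{\vartheta}}(\psi)=e(\psi)-d_{\beta}^{-1}\tfrac{\partial d}{\partial\psi}(\psi-\vartheta)e(\psi)^{\perp}\neq 0$, so that $N_{\beta}$, and hence $\gamma_{\beta}$, is well-defined and nonzero. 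The computation of section \ref{prepost} identifies $\gamma_{\beta}$ (up to the positive scalar absorbed through $M$) with the kinetic-metric gradient of the active constraint, so the map $\sigma_{\beta}=V\mapsto V-2\big(\gamma_{\beta}\cdot V\big)\big(\gamma_{\beta}\cdot M^{-2}\gamma_{\beta}\big)^{-1}M^{-2}\gamma_{\beta}$ is precisely the perfectly elastic impact law (restitution coefficient $e=1$) required by H3. Ballard's theorem then furnishes, for each $Z_{0}\in\mathcal{D}_{2}$, a unique maximal solution; energy conservation bounds $\Pi_{2}T_{t}Z_{0}$ and, there being no force, $\Pi_{1}T_{t}Z_{0}$ grows at most linearly, so the solution is global in time and has the regularity of Definition \ref{classicalsolutions} (velocities are literally constant between collisions, and we take the lower-semicontinuous representative at collision times).

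The theorem's two displayed identities then follow by patching: on each component of $\mathbb{R}\setminus\mathcal{T}(Z_{0})$ the velocity vector $\Pi_{2}T_{t}Z_{0}$ is constant, and across each $\tau\in\mathcal{T}(Z_{0})$ it is updated by some $\sigma_{\beta(\tau)}$, which preserves both $\big((\Pi_{2})_{1}+(\Pi_{2})_{3},(\Pi_{2})_{2}+(\Pi_{2})_{4}\big)$ and $|M\Pi_{2}|^{2}$ by piece (i); hence both equal their $t=0$ values for all $t\in\mathbb{R}$. I expect the main obstacle to be exactly the regularity verification in piece (ii): establishing the joint real-analyticity and the non-vanishing gradient of the gap function $F$ — equivalently of $d_{\theta}(\psi)$ and the derived quantities $N_{\beta},\gamma_{\beta}$ — and then checking carefully that our reflection coincides with the specific impact law in Ballard's H3. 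The conservation algebra itself, once $\sigma_{\beta}$ is recognised as a reflection along $\widehat{\gamma}_{\beta}$, is routine.
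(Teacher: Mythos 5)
Your proposal is correct and follows essentially the same route as the paper: construct the reflection scattering matrices $\sigma_{\beta}=M^{-1}(I-2\widehat{\gamma}_{\beta}\otimes\widehat{\gamma}_{\beta})M$ in quasi-momentum variables, verify conservation of linear momentum and kinetic energy from the structure of $\widehat{\gamma}_{\beta}$, and invoke Ballard's existence theory (using the analyticity of $\partial\mathsf{P}_{\ast}$ and the absence of external forces) to patch free flight and collision updates into a global-in-time classical solution. The only cosmetic difference is that the paper arrives at the reflection matrix by first characterising \emph{all} linear momentum- and energy-conserving scattering matrices (Proposition \ref{linearscat}) and then imposing that $\sigma_{\beta}$ act as the identity on $\Sigma_{\beta}^{0}$, whereas you exhibit the reflection directly and verify its properties, which is all that Theorem \ref{flow} requires.
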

Notice that the above theorem makes no claim on uniqueness of solutions. They are, however, unique {\em with respect to a fixed family of scattering matrices} $\{\sigma_{\beta}\}_{\beta\in\mathbb{T}^{3}}$. In other words, once a family of scattering matrices has been chosen and fixed, the classical solutions of Euler's equations constructed using the theory of \cite{ballard} are unique. As such, we must make a choice regarding with which family of scattering maps we wish to work. Since the study of linear scattering maps and their corresponding collision invariants is made possible by means of group theoretic arguments for subgroups of the orthogonal group $\mathrm{O}(6)$ (see section \ref{mainsection} below), we subsequently focus on the case where scattering maps $\sigma_{\beta}:\mathbb{R}^{6}\rightarrow\mathbb{R}^{6}$ are matrices. One could construct solutions of the ODEs in the case when the scattering family $\{\sigma_{\beta}\}_{\beta\in\mathbb{T}^{3}}$ is a collection of {\em nonlinear} maps on $\mathbb{R}^{6}$. We do not, however, pursue this idea any further here. 
\subsubsection{The case of linear scattering $\sigma_{\beta}:\mathbb{R}^{6}\rightarrow\mathbb{R}^{6}$}
We establish the following preliminary result.
\begin{prop}\label{linearscat}
For a given $\beta\in\mathbb{T}^{3}$, let $\sigma_{\beta}$ be a linear scattering map which conserves kinetic energy and linear momentum, i.e. $\sigma_{\beta}[V]$ satisfies \eqref{colm} and \eqref{coke} and for all $V\in\mathbb{R}^{6}$. Then $\sigma_{\beta}$ is necessarily of the form
\begin{equation*}
\sigma_{\beta}=M^{-1}\left(\widehat{E}_{1}\otimes \widehat{E}_{1} + \widehat{E}_{2}\otimes\widehat{E}_{2}+\sum_{i=3}^{5}\lambda_{i}(\beta)\widehat{E}_{i}(\beta)\otimes \widehat{E}_{i}(\beta) -\widehat{\gamma}_{\beta}\otimes \widehat{\gamma}_{\beta}\right)M,
\end{equation*}
where $\widehat{E}_{1}=(\frac{1}{\sqrt{2}}, 0, \frac{1}{\sqrt{2}}, 0, 0, 0)$ $\widehat{E}_{2}=(0, \frac{1}{\sqrt{2}}, 0, \frac{1}{\sqrt{2}}, 0, 0)$, $\{\widehat{E}_{i}(\beta)\}_{i=3}^{5}$ is any orthonormal basis for $\mathrm{span}\{\widehat{E}_{1}, \widehat{E}_{2}, \widehat{\gamma}_{\beta}\}^{\perp}$, $\lambda_{i}(\beta)\in\{-1, 1\}$ and $\widehat{\gamma}_{\beta}$ is the unit collision normal \eqref{collnorm}.
\end{prop}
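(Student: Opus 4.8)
The plan is to pass to the \emph{quasi-momentum variables} $P=MV$ introduced above and reduce the statement to the spectral theory of orthogonal involutions. Write $\rho_{\beta}:=M\sigma_{\beta}M^{-1}$ for the transformed scattering map. The conservation of kinetic energy \eqref{coke}, which in scattering notation reads $|M\sigma_{\beta}[V]|^{2}=|MV|^{2}$ for all $V$, becomes $|\rho_{\beta}P|^{2}=|P|^{2}$ for all $P\in\mathbb{R}^{6}$, so $\rho_{\beta}\in\mathrm{O}(6)$. The conservation of linear momentum \eqref{colm} says $E_{j}\cdot\sigma_{\beta}[V]=E_{j}\cdot V$ for all $V$, where $E_{1}=(1,0,1,0,0,0)$ and $E_{2}=(0,1,0,1,0,0)$; hence $\sigma_{\beta}^{T}E_{j}=E_{j}$, and since $M=M^{T}$, $\rho_{\beta}^{T}=\rho_{\beta}^{-1}$, and $M^{-1}E_{j}$ is a positive multiple of $\widehat{E}_{j}$, this is equivalent to $\rho_{\beta}\widehat{E}_{j}=\widehat{E}_{j}$ for $j=1,2$. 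I would also record here two elementary facts read off from Definition \ref{collnorm} and \eqref{lambdaish}: $\widehat{E}_{1},\widehat{E}_{2}$ are orthonormal, and $\widehat{\gamma}_{\beta}$ is a \emph{unit} vector orthogonal to both $\widehat{E}_{1}$ and $\widehat{E}_{2}$ (the $N_{\beta}/-N_{\beta}$ block structure forces $\gamma_{\beta}\cdot E_{j}=0$, and $M^{-2}$ is scalar on that block, so $\widehat\gamma_\beta\cdot\widehat E_j=0$); thus $\{\widehat{E}_{1},\widehat{E}_{2},\widehat{\gamma}_{\beta}\}$ is an orthonormal set.

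Next I would extract the geometric content of the scattering-map axioms (Definition \ref{scatmap}). From $\sigma_{\beta}\circ\sigma_{\beta}=\iota$ we get $\rho_{\beta}^{2}=I$, so $\rho_{\beta}$ is an orthogonal involution, hence symmetric and diagonalisable with eigenvalues in $\{-1,1\}$. From $\sigma_{\beta}(\Sigma_{\beta}^{-})=\Sigma_{\beta}^{+}$ and $\sigma_{\beta}^{2}=\iota$, the bijection $\rho_{\beta}$ interchanges $\widehat{\Sigma}_{\beta}^{-}$ and $\widehat{\Sigma}_{\beta}^{+}$, hence maps their common boundary hyperplane $\widehat{\gamma}_{\beta}^{\perp}$ onto itself; being orthogonal it then preserves $\mathrm{span}\{\widehat{\gamma}_{\beta}\}$, so $\rho_{\beta}\widehat{\gamma}_{\beta}=\pm\widehat{\gamma}_{\beta}$, and testing the implication \eqref{Rpretopost} on $P=-\widehat{\gamma}_{\beta}$ forces $\rho_{\beta}\widehat{\gamma}_{\beta}=-\widehat{\gamma}_{\beta}$. (Conversely, $\rho_{\beta}\widehat{\gamma}_{\beta}=-\widehat{\gamma}_{\beta}$ together with invariance of $\widehat{\gamma}_{\beta}^{\perp}$ already yields \eqref{Rpretopost}--\eqref{Rposttopre}, so these axioms impose nothing further on the action of $\rho_\beta$ inside $\widehat\gamma_\beta^\perp$.)

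Finally I would assemble the decomposition. We now have an orthogonal involution $\rho_{\beta}$ of $\mathbb{R}^{6}$ fixing the orthonormal pair $\widehat{E}_{1},\widehat{E}_{2}$ and negating the unit vector $\widehat{\gamma}_{\beta}$; since $\{\widehat{E}_{1},\widehat{E}_{2},\widehat{\gamma}_{\beta}\}$ is orthonormal, $\widehat{E}_{1},\widehat{E}_{2}$ lie in the $(+1)$-eigenspace and $\widehat{\gamma}_{\beta}$ in the $(-1)$-eigenspace, and $\rho_{\beta}$ restricts to an orthogonal involution on the $3$-dimensional space $\mathrm{span}\{\widehat{E}_{1},\widehat{E}_{2},\widehat{\gamma}_{\beta}\}^{\perp}$. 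Diagonalising that restriction yields an orthonormal eigenbasis $\widehat{E}_{3}(\beta),\widehat{E}_{4}(\beta),\widehat{E}_{5}(\beta)$ of that space with $\rho_{\beta}\widehat{E}_{i}(\beta)=\lambda_{i}(\beta)\widehat{E}_{i}(\beta)$, $\lambda_{i}(\beta)\in\{-1,1\}$; relative to the resulting orthonormal basis of $\mathbb{R}^{6}$ the spectral decomposition reads
\[
\rho_{\beta}=\widehat{E}_{1}\otimes\widehat{E}_{1}+\widehat{E}_{2}\otimes\widehat{E}_{2}+\sum_{i=3}^{5}\lambda_{i}(\beta)\,\widehat{E}_{i}(\beta)\otimes\widehat{E}_{i}(\beta)-\widehat{\gamma}_{\beta}\otimes\widehat{\gamma}_{\beta},
\]
and conjugating back by $\sigma_{\beta}=M^{-1}\rho_{\beta}M$ gives exactly the asserted form. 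The delicate point is the middle step: one must squeeze precisely the right consequence --- that $\rho_{\beta}$ preserves the contact hyperplane and acts by $-1$ on its normal line --- out of the purely set-theoretic scattering axioms, and check that the linear-momentum constraint is compatible with $\widehat{\gamma}_{\beta}\perp\widehat{E}_{1},\widehat{E}_{2}$ so that the eigenspaces nest correctly; after that, everything is just the spectral theorem for orthogonal involutions.
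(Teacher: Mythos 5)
Your proof is correct and follows essentially the same route as the paper's: pass to quasi-momentum variables, use the two conservation laws to identify $\rho_{\beta}$ as an orthogonal involution fixing $\widehat{E}_{1},\widehat{E}_{2}$ and negating $\widehat{\gamma}_{\beta}$, and then read off the spectral decomposition. If anything, you are more explicit than the paper at the final step, where the paper's write-up merely verifies that matrices of the stated form are admissible scattering maps rather than spelling out the diagonalisation of the restriction of $\rho_{\beta}$ to $\mathrm{span}\{\widehat{E}_{1},\widehat{E}_{2},\widehat{\gamma}_{\beta}\}^{\perp}$.
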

\begin{proof}
It will be convenient to consider the problem cast in quasi-momentum variables as introduced above in section \ref{prepost}. Indeed, given the scattering map $\sigma_{\beta}$ we define the map $\rho_{\beta}[P]:=M\sigma_{\beta}[M^{-1}P]$ for $P\in\mathbb{R}^{6}$. Since $\sigma_{\beta}$ is linear if and only if $\rho_{\beta}$ is linear, we may suppose that $\rho_{\beta}[P]=R_{\beta}P$ for some $R_{\beta}\in\mathrm{GL}(6)$. Moreover, we also infer that $\rho_{\beta}$ is an involution on $\mathbb{R}^{6}$, whence $R_{\beta}^{2}=I$. It will now prove useful to consider the spectral structure of $R_{\beta}$.

We first note that since the conservation of kinetic energy \eqref{coke} implies that $|R_{\beta}P|^{2}=|P|^{2}$ for all $P\in\mathbb{R}^{6}$, it follows that $R_{\beta}\in\mathrm{O}(6)$. Moreover, $R_{\beta}$ can only have real eigenvalues $\lambda$ with $|\lambda|=1$. Now, the conservation of linear momentum
\begin{displaymath}
\left(
\begin{array}{c}
\sigma_{\beta}[V]_{1}+\sigma_{\beta}[V]_{3} \\
\sigma_{\beta}[V]_{2}+\sigma_{\beta}[V]_{4}
\end{array}
\right)=\left(
\begin{array}{c}
V_{1}+V_{3} \\
V_{2} + V_{4}
\end{array}
\right)
\end{displaymath}
implies that 
\begin{equation*}
R_{\beta}P\cdot E_{1}=P\cdot E_{1} \quad \text{and}\quad R_{\beta}P\cdot E_{2}=P\cdot E_{2} \quad \text{for all}\hspace{2mm} P\in\mathbb{R}^{6},
\end{equation*}
where $E_{1}=(1, 0, 1, 0, 0, 0)$ and $E_{2}=(0, 1, 0, 1, 0, 0)$. We immediately infer that $E_{1}$ and $E_{2}$ are eigenvectors of $R_{\beta}$ both with eigenvalue 1, since $R_{\beta}^{T}=R_{\beta}$. Appealing to the fact that $R_{\beta}$ must satisfy the inequalities \eqref{Rpretopost} and \eqref{Rposttopre} above, since $\sigma_{\beta}$ was assumed to be a scattering map, we deduce that $R_{\beta}\widehat{\gamma}_{\beta}=-\widehat{\gamma}_{\beta}$, whence the unit collision normal $\widehat{\gamma}_{\beta}$ is another eigenvector of $R_{\beta}$ with eigenvalue $-1$. 

We restrict our attention to the subspace of $\mathbb{R}^{6}$ orthogonal to $\widehat{\gamma}_{\beta}$, namely $\widehat{\Sigma}_{\beta}^{0}:=\widehat{\Sigma}_{\beta}^{-}\cap\widehat{\Sigma}_{\beta}^{+}$. Setting 
\begin{equation}\label{unitvec}
\widehat{E}_{1}:=\left(\frac{1}{\sqrt{2}}, 0, \frac{1}{\sqrt{2}}, 0, 0, 0\right) \quad  \text{and} \quad \widehat{E}_{2}:=\left(0, \frac{1}{\sqrt{2}}, 0, \frac{1}{\sqrt{2}}, 0, 0\right),
\end{equation} 
one may check that $\widehat{E}_{1}\cdot \widehat{\gamma}_{\beta}=\widehat{E}_{2}\cdot \widehat{\gamma}_{\beta}=0$, while evidently $\widehat{E}_{1}\cdot \widehat{E}_{2}=0$. Let us consider any orthonormal basis of $\widehat{\Sigma}_{\beta}^{0}$ containing $\widehat{E}_{1}$ and $\widehat{E}_{2}$, namely $B_{\beta}:=\{\widehat{E}_{1}, \widehat{E}_{2}\}\cup\{\widehat{E}_{3}(\beta), \widehat{E}_{4}(\beta), \widehat{E}_{5}(\beta)\}$, where each $\widehat{E}_{i}(\beta)$ is allowed to depend on the spatial configuration $\beta\in\mathbb{T}^{3}$. One may then verify that {\em any} matrix of the form
\begin{equation*}
R_{\beta}:=\widehat{E}_{1}\otimes \widehat{E}_{1} + \widehat{E}_{2}\otimes\widehat{E}_{2}+\sum_{i=3}^{5}\lambda_{i}(\beta)\widehat{E}_{i}(\beta)\otimes \widehat{E}_{i}(\beta) -\widehat{\gamma}_{\beta}\otimes \widehat{\gamma}_{\beta}
\end{equation*}
with $\lambda_{i}(\beta)\in\{-1, 1\}$ is a bijective linear involution which maps $\widehat{\Sigma}_{\beta}^{-}$ to $\widehat{\Sigma}_{\beta}^{+}$. Moreover, transforming back from quasi-momentum variables, the associated scattering matrix $\sigma_{\beta}:=M^{-1}R_{\beta}M$ conserves the total linear momentum and kinetic energy of its argument. The proof of the proposition follows.
\end{proof}
Evidently, as we have such a large family of scattering matrices which conserve both linear momentum and kinetic energy, it is prudent to specify another natural condition on each matrix $\sigma_{\beta}$ to obtain a unique family of matrices $\{\sigma_{\beta}\}_{\beta\in\mathbb{T}^{3}}$ to which we can turn our attention. At this point, it is helpful to consider the case of hard disks. 
\subsubsection{Comparison with the Case of Hard Disks}\label{compare}
If we have developed a suitable extension of the classical scattering of hard disks to the more general compact, strictly-convex particle setting, the associated scattering matrix $\sigma_{\beta}$ should reduce essentially to the classical Boltzmann scattering matrix \eqref{boltzscat} when $\mathsf{P}_{\ast}$ is chosen to be a disk. We consider the case $\mathsf{P}_{\ast}=\mathsf{B}_{\ast}$ (the closed unit disk in $\mathbb{R}^{2}$). As the classical Boltzmann scattering matrices are unique in the class of all maps on $\mathbb{R}^{4}$ which conserve total linear momentum, angular momentum and kinetic energy of particles (and which also enforce non-penetration), we do not have mixing of pre-collisional linear velocities and angular speeds following collision. With this observation in mind, we consider the block scattering matrix defined on $\mathbb{R}^{6}$ by
$$
\left(
\begin{array}{cc}
\tempb  & 0_{2}\\ \hline
 0_{4} &\tempd
\end{array}\right)\in\mathbb{R}^{6\times 6} \quad \text{for}\hspace{2mm}\psi\in\mathbb{S}^{1},
$$
with $0_{m}, I_{m}\in\mathbb{R}^{m\times m}$ and $\widehat{\gamma}(\psi)=\frac{1}{\sqrt{2}}[e(\psi), -e(\psi)]$. Notably, this matrix is the identity map when restricted to the set $\Sigma_{\beta}^{0}$. Motivated by this observation, we have the following corollary to proposition \ref{linearscat} above. 
\begin{cor}
Suppose $\sigma_{\beta}$ is a scattering matrix satisfying the hypotheses of proposition \ref{linearscat} which is the identity map when restricted to $\Sigma_{\beta}^{0}=\Sigma_{\beta}^{-}\cap \Sigma_{\beta}^{+}$. Then $\sigma_{\beta}$ is necessarily of the form $\sigma_{\beta}=M^{-1}(I-2\widehat{\gamma}_{\beta}\otimes \widehat{\gamma}_{\beta})M$.
\end{cor}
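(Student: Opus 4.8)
The plan is to read the Corollary off directly from Proposition \ref{linearscat}: the representation there already exhibits $\sigma_\beta$ up to the choice of the signs $\lambda_i(\beta)\in\{-1,1\}$, so the only work is to show that the extra hypothesis forces $\lambda_3(\beta)=\lambda_4(\beta)=\lambda_5(\beta)=1$, after which the spectral sum collapses to a reflection.

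First I would pass to the quasi-momentum variables of section \ref{prepost}, writing $\rho_\beta[P]=M\sigma_\beta[M^{-1}P]=R_\beta P$ with $R_\beta\in\mathrm{O}(6)$ and $R_\beta^2=I$, exactly as in the proof of Proposition \ref{linearscat}. Since $M$ is diagonal (hence symmetric) and $\widehat{\gamma}_{\beta}=M^{-1}\gamma_{\beta}$, for $P=MV$ one computes $V\cdot\gamma_{\beta}=P\cdot(M^{-1}\gamma_{\beta})=P\cdot\widehat{\gamma}_{\beta}$, so $M$ maps $\Sigma_{\beta}^{0}=\{V:V\cdot\gamma_{\beta}=0\}$ bijectively onto $\widehat{\Sigma}_{\beta}^{0}=\widehat{\gamma}_{\beta}^{\perp}$. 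Consequently the hypothesis that $\sigma_{\beta}$ restricts to the identity on $\Sigma_{\beta}^{0}$ is equivalent to $R_{\beta}$ restricting to the identity on the hyperplane $\widehat{\gamma}_{\beta}^{\perp}$.

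Next I would recall from the proof of Proposition \ref{linearscat} that $R_{\beta}$ is diagonalised by the orthonormal basis $\{\widehat{E}_{1},\widehat{E}_{2},\widehat{E}_{3}(\beta),\widehat{E}_{4}(\beta),\widehat{E}_{5}(\beta),\widehat{\gamma}_{\beta}\}$ of $\mathbb{R}^{6}$, with eigenvalues $1,1,\lambda_{3}(\beta),\lambda_{4}(\beta),\lambda_{5}(\beta),-1$ respectively, and that the first five of these vectors span $\widehat{\gamma}_{\beta}^{\perp}$. Demanding $R_{\beta}\widehat{E}_{i}(\beta)=\widehat{E}_{i}(\beta)$ for $i=3,4,5$ forces $\lambda_{3}(\beta)=\lambda_{4}(\beta)=\lambda_{5}(\beta)=1$. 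Hence $R_{\beta}$ is the identity on $\widehat{\gamma}_{\beta}^{\perp}$ and acts as $-1$ on $\mathrm{span}\{\widehat{\gamma}_{\beta}\}$; using the completeness relation $\widehat{E}_{1}\otimes\widehat{E}_{1}+\widehat{E}_{2}\otimes\widehat{E}_{2}+\sum_{i=3}^{5}\widehat{E}_{i}(\beta)\otimes\widehat{E}_{i}(\beta)+\widehat{\gamma}_{\beta}\otimes\widehat{\gamma}_{\beta}=I$ (valid since those six vectors are orthonormal, and $|\widehat{\gamma}_{\beta}|=1$ by the Remark following Definition \ref{collnorm}), the formula of Proposition \ref{linearscat} with all $\lambda_i=1$ collapses to $R_{\beta}=I-2\widehat{\gamma}_{\beta}\otimes\widehat{\gamma}_{\beta}$. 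Transforming back, $\sigma_{\beta}=M^{-1}R_{\beta}M=M^{-1}(I-2\widehat{\gamma}_{\beta}\otimes\widehat{\gamma}_{\beta})M$, which is the assertion.

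There is essentially no genuine obstacle here, since the Corollary is a one-line specialisation of Proposition \ref{linearscat}. The only point requiring care is the bookkeeping of the change of variables $V\mapsto MV$: one must verify that ``$\sigma_{\beta}$ is the identity on $\Sigma_{\beta}^{0}$'' is truly equivalent to ``$R_{\beta}$ is the identity on $\widehat{\gamma}_{\beta}^{\perp}$'', which is where the symmetry of $M$ and the relation $\widehat{\gamma}_{\beta}=M^{-1}\gamma_{\beta}$ enter.
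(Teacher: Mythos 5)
Your proof is correct and follows essentially the same route as the paper: pass to quasi-momentum variables, use the hypothesis to force $\lambda_{3}(\beta)=\lambda_{4}(\beta)=\lambda_{5}(\beta)=1$ in the spectral representation of Proposition \ref{linearscat}, collapse the sum via the completeness relation, and conjugate back by $M$. Your explicit verification that ``$\sigma_{\beta}=\iota$ on $\Sigma_{\beta}^{0}$'' is equivalent to ``$R_{\beta}=\iota$ on $\widehat{\gamma}_{\beta}^{\perp}$'' (using the symmetry of $M$ and $\widehat{\gamma}_{\beta}=M^{-1}\gamma_{\beta}$) is a point the paper passes over silently, and is a welcome clarification rather than a deviation.
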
 
\begin{proof}
Let $B_{\beta}=\{\widehat{E}_{i}\}_{i=1}^{5}$ be any orthonormal basis for $\Sigma_{\beta}^{0}$ which contains the vectors $\widehat{E}_{1}$ and $\widehat{E}_{2}$ given above by \eqref{unitvec}. Since by assumption $\sigma_{\beta}|_{\Sigma_{\beta}^{0}}=\iota$, it follows that $\lambda_{i}(\beta)=1$ for $i=3, 4, 5$. Now, using the fact that
\begin{equation*}
I=\sum_{i=1}^{5}\widehat{E}_{i}\otimes \widehat{E}_{i}+\widehat{\gamma}_{\beta}\otimes \widehat{\gamma}_{\beta},
\end{equation*}
we find $R_{\beta}=I-2\widehat{\gamma}_{\beta}\otimes\widehat{\gamma}_{\beta}$. Transforming back to velocity variables $V$ from quasi-momentum variables $P$, we obtain $\sigma_{\beta}[V]=M^{-1}\left(I-2\widehat{\gamma}_{\beta}\otimes\widehat{\gamma}_{\beta}\right)M$, which yields the assertion of the corollary.
\end{proof}
As such, the derived family of scattering matrices reduces to the family of Boltzmann scattering matrices (which is the identity map when restricted to the factors of $\mathbb{R}^{6}$ describing angular speed) when the reference particle $\mathsf{P}_{\ast}$ is chosen to be a disk. With this concrete family of scattering matrices in hand, we now look to construct global-in-time classical solutions to Euler's equations on $\mathcal{D}_{2}$.
\subsubsection{Construction of Global-in-time Classical Solutions on $\mathcal{D}_{2}$}\label{localintime}
We now offer some brief comments that establish \textsc{Theorem} \ref{flow}, the proof of which follows swiftly from the construction of the scattering matrices $\sigma_{\beta}=M^{-1}(1-2\widehat{\gamma}_{\beta}\otimes\widehat{\gamma}_{\beta})M$ and an application of theorem 10 in \textsc{Ballard} \cite{ballard}. We do not discuss technical details of the proof here, and refer the reader to (\cite{ballard}, section 4) for details. Given that $\partial\mathsf{P}_{\ast}$ is of class $C^{\omega}$ and that there is no externally-imposed force in the equations of motion (S\textsuperscript{--}) and (S\textsuperscript{+}), it follows that for each initial datum $Z_{0}\in\mathcal{D}_{2}$\footnote{To be precise, if $Z_{0}$ is taken to lie in $\partial\mathcal{D}_{2}$ (namely the initial condition describes a collision configuration) then for consistency we should only allow for initial velocities $\Pi_{2}Z_{0}$ to lie in $\Sigma_{\beta}^{-}$, where $\beta\in\mathbb{T}^{3}$ is determined by $\Pi_{1}Z_{0}$.} there exists a unique piecewise linear map $t\mapsto [x(t), \ov{x}(t), \vartheta(t), \ov{\vartheta}(t)]$ with 
\begin{equation*}
[x(0), \ov{x}(0), \vartheta(0), \ov{\vartheta}(0)]=\Pi_{1}Z_{0} \quad \text{and}\quad \frac{d}{dt_{-}}[x(t), \ov{x}(t), \vartheta(t), \ov{\vartheta}(t)]\bigg|_{t=0}=\Pi_{2}Z_{0},
\end{equation*}
which satisfies (S\textsuperscript{--}) and (S\textsuperscript{+}) on $\mathbb{R}$ and $\mathbb{R}\setminus\mathcal{T}(Z_{0})$, respectively. Moreover, for every such initial datum $Z_{0}\in\mathcal{T}(Z_{0})$ the set of all collision times $\mathcal{T}(Z_{0})$ is finite, i.e. $\mathcal{T}(Z_{0})=\{\tau_{j}\}_{j=1}^{M}$ with $M=M(Z_{0})\in\mathbb{N}$, with the property that for each $t\in (\tau_{j}, \tau_{j+1}]$, there exists a left-neighbourhood of $t$ on which $t\mapsto [x(t), \ov{x}(t), \vartheta(t), \ov{\vartheta}(t)]$ is analytic. Importantly, uniqueness of classical solutions allows us to define a hard particle flow $\{T_{t}\}_{t\in\mathbb{R}}$ on $\mathcal{D}_{2}$ with the property that total linear momentum and kinetic energy of initial data is conserved for all time, and for which the colliding particles experience at most finitely-many collisions on bounded time intervals. 

It is also important to emphasise that in order to make use of the general existence theory in \cite{ballard}, a family of scattering maps must be provided as data for the problem. As such, classical solutions are only unique with respect to the given family of scattering maps under consideration. It would be possible to construct another distinct hard particle flow on $\mathcal{D}_{2}$ that conserves total linear momentum and kinetic energy if one constructs a family of {\em nonlinear} scattering maps $\{\sigma_{\beta}\}_{\beta\in\mathbb{T}^{3}}$ on $\mathbb{R}^{6}$ satisfying the same property. As intimated above, we do not address this problem in this article.
\subsubsection{An `Almost Physical' Family of Matrices}
It is important to record the fact here that the matrix $u_{\beta}:=M^{-1}\left(I-2\widehat{\eta}_{\beta}\otimes\widehat{\eta}_{\beta}\right)M\in\mathbb{R}^{6\times 6}$,
where the unit vector $\widehat{\eta}_{\beta}\in\mathbb{R}^{6}$ is given by
\begin{displaymath}
\widehat{\eta}_{\beta}:=\frac{1}{\sqrt{\frac{2}{m}+\frac{1}{J}|p_{\beta}^{\perp}\cdot n_{\beta}|^{2}+\frac{1}{J}|q_{\beta}^{\perp}\cdot n_{\beta}|^{2}}}M^{-1}\left[
\begin{array}{c}
n_{\beta} \\
-n_{\beta} \\
p_{\beta}^{\perp}\cdot n_{\beta} \\
-q_{\beta}^{\perp}\cdot n_{\beta}
\end{array}
\right]\in\mathbb{R}^{6},
\end{displaymath}
conserves the total linear momentum, angular momentum and kinetic energy of its argument, but it is {\em not} a scattering map in the sense of definition \ref{scatmap} above. In particular, one can find collision configurations $\beta_{\ast}\in\mathbb{T}^{3}$ and associated pre-collisional velocities $V_{\ast}\in\Sigma_{\beta_{\ast}}^{-}\setminus\Sigma_{\beta_{\ast}}^{0}$ that satisfy
\begin{equation*}
u_{\beta_{\ast}}V_{\ast}=V_{\ast}
\end{equation*}
and which therefore lead to interpenetration of the particles when the dynamics of (S\textsuperscript{--}) and (S\textsuperscript{+}) is continued after collision. As a result, it cannot be used to construct a hard particle flow on $\mathcal{D}_{2}$, but it can be used to construct a family of flow operators on $\mathcal{M}^{2}$ corresponding to Euler's equations presented in section \ref{dynam}.

We make the rather na\"{i}ve comment that in the Boltzmann-Grad limit of the BBGKY hierarchy, `particles become points' and so it makes no sense to speak of non-penetration of particles for the limiting system as the number of particles $N\rightarrow\infty$. As such, one could argue that the family of maps $\{u_{\beta}\}_{\beta\in\mathbb{T}^{3}}$ would nevertheless be suitable to establish a kinetic model for the average behaviour of rarified gases composed of compact, strictly-convex particles. Indeed, the main result \textsc{Theorem} \ref{awesome} on characterisation of collision invariants for non-spherical particles in this article also holds for the family $\{u_{\beta}\}_{\beta\in\mathbb{T}^{3}}$, under the weaker condition that $\partial\mathsf{P}_{\ast}$ be of class $C^{1}$, as opposed to analytic. 


\section{Collision Invariants for Compact, Strictly-convex Particles}\label{mainsection}
We now turn to the proof of the main result of this article. We firstly define the analogue of classical collision invariants in the case when the underlying particles are not disks.
\begin{defn}
Let $\mathcal{S}=\{\sigma_{\beta}\}_{\beta\in\mathbb{T}^{3}}$ be a family of maps on $\mathbb{R}^{6}$. A measurable function $\varphi:\mathbb{R}^{2}\times\mathbb{R}\times\mathbb{S}^{1}\rightarrow\mathbb{R}$ is said to be an {\em $\mathcal{S}$-collision invariant} if and only if it satisfies the functional equation
\begin{equation}\label{collinv}
\varphi(v_{\beta}', \omega_{\beta}', \vartheta)+\varphi(\ov{v}_{\beta}', \ov{\omega}_{\beta}', \ov{\vartheta})=\varphi(v, \omega, \vartheta)+\varphi(\ov{v}, \ov{\omega}, \ov{\vartheta})
\end{equation}
for every $V=[v, \ov{v}, \omega, \ov{\omega}]\in\mathbb{R}^{6}$, $\beta=(\vartheta, \ov{\vartheta}, \psi)\in\mathbb{T}^{3}$, where
\begin{displaymath}
v_{\beta}':=\left(
\begin{array}{c}
\sigma_{\beta}[V]_{1} \\ \sigma_{\beta}[V]_{2}
\end{array}
\right), \quad \ov{v}_{\beta}':=\left(
\begin{array}{c}
\sigma_{\beta}[V]_{3} \\ \sigma_{\beta}[V]_{4}
\end{array}
\right), \quad \omega_{\beta}':=\sigma_{\beta}[V]_{5},\quad \ov{\omega}_{\beta}':=\sigma_{\beta}[V]_{6}.
\end{displaymath}
\end{defn}
We also make one more definition.
\begin{defn}
We define $\mathcal{P}(\mathbb{Z}_{2}^{2})$ to be the class of reference particles $\mathsf{P}_{\ast}\subset\mathbb{R}^{2}$ which have reflection symmetries in the two canonical orthogonal axes of $\mathbb{R}^{2}$.
\end{defn}
We are now ready to state in precise terms the main result of this article.
\begin{thm}[Characterisation of Collision Invariants]\label{awesome}
Suppose $\mathsf{P}_{\ast}\in\mathcal{P}(\mathbb{Z}_{2}^{2})$ has the property that $\partial\mathsf{P}_{\ast}$ is analytic, i.e. of class $C^{\omega}$. Let $\mathcal{S}$ be the associated family of matrices
\begin{equation*}
\{M^{-1}(I-2\widehat{\gamma}_{\beta}\otimes \widehat{\gamma}_{\beta})M\}_{\beta\in\mathbb{T}^{3}} \quad \text{or}\quad  \{M^{-1}(I-2\widehat{\eta}_{\beta}\otimes \widehat{\eta}_{\beta})M\}_{\beta\in\mathbb{T}^{3}}.
\end{equation*}
If a measurable map is an $\mathcal{S}$-collision invariant, then it is necessarily of the form
\begin{equation*}
\varphi(v, \omega, \vartheta)=a(\vartheta)+b\cdot v+c\left(m|v|^{2}+J\omega^{2}\right),
\end{equation*}
for some $b_{1}, b_{2}, c\in\mathbb{R}$ and some measurable $a:\mathbb{S}^{1}\rightarrow\mathbb{R}$.
\end{thm}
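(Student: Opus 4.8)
The plan is to recast the problem in the group-theoretic language of Section~\ref{hardspherechar}. Fix a pair of orientations $(\vartheta,\ov{\vartheta})\in\mathbb{T}^{2}$ and set $\Psi_{\vartheta,\ov{\vartheta}}(V):=\varphi(v,\omega,\vartheta)+\varphi(\ov{v},\ov{\omega},\ov{\vartheta})$ for $V=[v,\ov{v},\omega,\ov{\omega}]\in\mathbb{R}^{6}$. Since the orientations of two rigid particles do not jump across a collision, the collision-invariant identity \eqref{collinv} says exactly that $\Psi_{\vartheta,\ov{\vartheta}}$ is invariant under every matrix $\sigma_{(\vartheta,\ov{\vartheta},\psi)}$ with $\psi\in\mathbb{S}^{1}$, hence under the scattering group
\[
G_{\vartheta,\ov{\vartheta}}:=\left\langle\left\{\sigma_{(\vartheta,\ov{\vartheta},\psi)}\,:\,\psi\in\mathbb{S}^{1}\right\}\right\rangle,
\]
i.e.\ $\Psi_{\vartheta,\ov{\vartheta}}$ is constant on every left orbit $G_{\vartheta,\ov{\vartheta}}V\subseteq\mathbb{R}^{6}$. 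Passing to the quasi-momentum variables $P=MV$ of Section~\ref{prepost} conjugates each $\sigma_{\beta}$ to the orthogonal reflection $I-2\widehat{\gamma}_{\beta}\otimes\widehat{\gamma}_{\beta}$ (respectively $I-2\widehat{\eta}_{\beta}\otimes\widehat{\eta}_{\beta}$), so $G_{\vartheta,\ov{\vartheta}}$ is conjugate to a subgroup of $\mathrm{O}(6)$ generated by a one-parameter family of reflections, and the whole problem reduces to computing its orbits on $\mathbb{R}^{6}$.

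By the proof of \textsc{Proposition}~\ref{linearscat}, each $\widehat{\gamma}_{\beta}$ --- and likewise each $\widehat{\eta}_{\beta}$ --- is orthogonal to the fixed vectors $\widehat{E}_{1},\widehat{E}_{2}$ of \eqref{unitvec}; hence the group fixes $\widehat{E}_{1}$ and $\widehat{E}_{2}$ pointwise and leaves invariant the four-dimensional complement $W:=\mathrm{span}\{\widehat{E}_{1},\widehat{E}_{2}\}^{\perp}$, on which it acts through the reflections $s_{\psi}:=(I-2\widehat{\gamma}_{\beta}\otimes\widehat{\gamma}_{\beta})\big|_{W}$. \textbf{The main obstacle} is to prove that $\langle\{s_{\psi}:\psi\in\mathbb{S}^{1}\}\rangle$ acts transitively on each sphere $\{P\in W:|P|=r\}$ with $r>0$. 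This is precisely the type of statement furnished by C.~Viterbo's results in Appendix~\ref{algebra}, which guarantee such transitivity under an explicit non-degeneracy condition on the generating family of reflection vectors; it then remains to verify that condition for the curve $\psi\mapsto\widehat{\gamma}_{\beta}$ (resp.\ $\psi\mapsto\widehat{\eta}_{\beta}$) inside $W$. Here both hypotheses of the theorem enter: the analyticity of $\partial\mathsf{P}_{\ast}$ makes this curve real-analytic, so that non-degeneracy either fails identically or holds away from a finite set of $\psi$, while the $\mathbb{Z}_{2}^{2}$-symmetry of $\mathsf{P}_{\ast}$ excludes the identically-degenerate case --- essentially that of a disk, for which the angular coordinates decouple and spurious invariants survive --- and lets one control the angular entries $(r_{\beta}-d_{\beta}e(\psi))^{\perp}\cdot N_{\beta}$ and $-r_{\beta}^{\perp}\cdot N_{\beta}$ of $\widehat{\gamma}_{\beta}$, built from $\partial_{\psi}d$ and $\partial_{\theta}d$, well enough to see that the curve spans $W$.

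Granting transitivity, the orbit $G_{\vartheta,\ov{\vartheta}}V$ is the common level set of the maps $P\mapsto P\cdot\widehat{E}_{1}$, $P\mapsto P\cdot\widehat{E}_{2}$ and $P\mapsto|P|^{2}$, which in the original variables are, up to harmless constants, $v+\ov{v}$ and $m|v|^{2}+J\omega^{2}+m|\ov{v}|^{2}+J\ov{\omega}^{2}$. Hence there is a measurable $\widetilde{\Phi}_{\vartheta,\ov{\vartheta}}:\mathbb{R}^{2}\times\mathbb{R}\to\mathbb{R}$ with
\[
\varphi(v,\omega,\vartheta)+\varphi(\ov{v},\ov{\omega},\ov{\vartheta})=\widetilde{\Phi}_{\vartheta,\ov{\vartheta}}\!\left(v+\ov{v},\,m|v|^{2}+J\omega^{2}+m|\ov{v}|^{2}+J\ov{\omega}^{2}\right)
\]
for all $V\in\mathbb{R}^{6}$ and all $(\vartheta,\ov{\vartheta})\in\mathbb{T}^{2}$.

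The endgame is a Cauchy-type argument in the spirit of Section~\ref{itsdone}. Putting $\ov{v}=0,\ \ov{\omega}=0$ (and, separately, $v=0,\ \omega=0$) in the last display shows that $\varphi(v,\omega,\vartheta)$ depends on $\omega$ only through $m|v|^{2}+J\omega^{2}$ --- in particular it is even in $\omega$ --- and that, writing $a(\vartheta):=\varphi(0,0,\vartheta)$ and $\varphi(v,\omega,\vartheta)=:a(\vartheta)+\Theta\!\left(v,\,m|v|^{2}+J\omega^{2}\right)$, the function $\Theta$ is independent of $\vartheta$. Substituting this back collapses the display to the additive equation
\[
\Theta(v,E_{1})+\Theta(\ov{v},E_{2})=\Theta(v+\ov{v},E_{1}+E_{2})
\]
on the convex cone $\{(w,E)\in\mathbb{R}^{2}\times\mathbb{R}:E\ge m|w|^{2}\}$. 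A measurable solution of such an equation is necessarily linear, $\Theta(w,E)=b\cdot w+cE$ --- this is where the proposition on Cauchy's functional equation recalled in Section~\ref{hardspherechar} is invoked, after extending $\Theta$ additively to all of $\mathbb{R}^{3}$ --- and unwinding the substitutions gives $\varphi(v,\omega,\vartheta)=a(\vartheta)+b\cdot v+c\bigl(m|v|^{2}+J\omega^{2}\bigr)$, as required. The argument is verbatim the same for the family $\{M^{-1}(I-2\widehat{\eta}_{\beta}\otimes\widehat{\eta}_{\beta})M\}_{\beta\in\mathbb{T}^{3}}$; only the non-degeneracy verification of the second paragraph changes, and there the weaker hypothesis that $\partial\mathsf{P}_{\ast}$ be merely of class $C^{1}$ suffices.
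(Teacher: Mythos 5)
Your overall architecture coincides with the paper's: recast \eqref{collinv} as invariance of $\varphi_{0}(v,\omega,\vartheta)+\varphi_{0}(\ov{v},\ov{\omega},\ov{\vartheta})$ under the reflection group generated by $\{I-2\widehat{\gamma}_{\beta}\otimes\widehat{\gamma}_{\beta}\}_{\psi\in\mathbb{S}^{1}}$ in quasi-momentum variables with $(\vartheta,\ov{\vartheta})$ frozen, reduce transitivity on the energy-momentum spheres to a spanning condition on the reflection vectors via Viterbo's Theorem \ref{claude}, and finish with a Cauchy-functional-equation argument. Your endgame is sound, and your extension of the additive equation from the cone $\{(w,E):E\ge m|w|^{2}\}$ to a genuine Cauchy equation is a legitimate, arguably cleaner, alternative to the Truesdell--Muncaster orthogonal-pairs trick used in Proposition \ref{asscon}.

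The gap is exactly at the point you flag as ``the main obstacle'' and then do not resolve: proving that $\mathrm{span}\{\widehat{\gamma}_{\beta}\,:\,\psi\in\mathbb{S}^{1}\}$ (equivalently $\{\widehat{\mu}_{\beta}\}$) exhausts the four-dimensional complement of $\mathrm{span}\{\widehat{E}_{1},\widehat{E}_{2}\}$. Asserting that analyticity plus $\mathbb{Z}_{2}^{2}$-symmetry ``let one control the angular entries well enough'' is not an argument, and a direct attack on those entries is genuinely awkward because they are built from the exclusion normal $N_{\beta}$, i.e.\ from $\partial_{\psi}d$ and $\partial_{\theta}d$ of the distance-of-closest-approach function, which is only implicitly defined. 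The paper's Proposition \ref{spanny} --- the technical core of the whole theorem --- circumvents this with an idea absent from your proposal: if some nonzero $W$ were orthogonal to every $\gamma_{\beta}$, the corresponding velocity vector would be simultaneously pre- and post-collisional for \emph{every} elevation angle $\psi$; running the actual dynamics (which exists by Theorem \ref{flow}) and demanding that the material contact point on $\ov{\mathsf{P}}$ remain in the exterior of $\mathsf{P}$ converts orthogonality to the $N_{\beta}$-based vectors into orthogonality to the far more explicit contact-normal-based vectors $\xi^{\ov{\vartheta}}_{\vartheta}(\psi)$, built from $n_{\beta}$, $p_{\beta}$, $q_{\beta}$. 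Only then does the $\mathbb{Z}_{2}^{2}$-symmetry enter, through the lemma producing two angles with $p_{\beta}^{\perp}\cdot n_{\beta}=0$, which yields an explicit basis $\{\xi_{1},\xi_{2},\xi_{3},\xi_{4}\}$ (with a separate construction when $\sin\psi_{1}=0$, which is where the analyticity of $\partial\mathsf{P}_{\ast}$ is actually used). Without this step, or a substitute for it, your proof does not close. Note also that for the family $\{\widehat{\eta}_{\beta}\}$ the dynamical detour is unnecessary precisely because $\widehat{\eta}_{\beta}$ is already expressed in terms of $n_{\beta},p_{\beta},q_{\beta}$ --- which is why the paper can relax $C^{\omega}$ to $C^{1}$ in that case.
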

\begin{rem}
For the proof of this theorem, we need only restrict our attention to the family of maps $\{\sigma_{\beta}\}_{\beta\in\mathbb{T}^{3}}$ with $\sigma_{\beta}=M^{-1}(I-2\widehat{\gamma}_{\beta}\otimes \widehat{\gamma}_{\beta})M$, as the proof for the family $\sigma_{\beta}=M^{-1}(I-2\widehat{\eta}_{\beta}\otimes \widehat{\eta}_{\beta})M$ follows directly thereafter. This will become clear in section \ref{helpmealg} below.
\end{rem}
\subsection{Rewriting the Functional Identity}
It will prove useful to rewrite the functional identity \eqref{collinv} in a way that allows us to employ properties of reflection matrices, to which each scattering matrix $\sigma_{\beta}=M^{-1}(I-2\widehat{\gamma}_{\beta}\otimes\widehat{\gamma}_{\beta})M$ is conjugate. Indeed, given a collision invariant $\varphi$, we define 
\begin{equation*}
\varphi_{0}(v, \omega, \vartheta):=\varphi(v, \omega, \vartheta)-\varphi(0, 0, \vartheta),
\end{equation*}
together with an associated map $\Phi_{\varphi}:\mathbb{R}^{6}\times \mathbb{T}^{2}\rightarrow\mathbb{R}$ by
\begin{equation*}
\Phi_{\varphi}(V; \vartheta, \ov{\vartheta}):=\varphi_{0}(v, \omega, \vartheta)+\varphi_{0}(\ov{v}, \ov{\omega}, \ov{\vartheta}),
\end{equation*}
noting that $\Phi_{\varphi}(0; \vartheta, \ov{\vartheta})=0$ for all $(\vartheta, \ov{\vartheta})\in\mathbb{T}^{2}$. It follows that $\varphi$ is a collision invariant if and only if  
\begin{equation}\label{der}
\Phi_{\varphi}(\sigma_{\beta}V; \vartheta, \ov{\vartheta})=\Phi_{\varphi}(V; \vartheta, \ov{\vartheta})
\end{equation}
for all $\beta\in\mathbb{T}^{3}$. Setting $P:=MV$, and also define the new map $\Phi_{\varphi}^{\ast}:\mathbb{R}^{6}\times\mathbb{T}^{2}\rightarrow\mathbb{R}$ by 
\begin{equation*}
\Phi_{\varphi}^{\ast}(V; \vartheta, \ov{\vartheta}):=\Phi_{\varphi}(M^{-1}V; \vartheta, \ov{\vartheta}),
\end{equation*}
we find that $\varphi$ is a collision invariant if and only if
\begin{equation}\label{reflect}
\Phi_{\varphi}^{\ast}([I-2\widehat{\gamma}_{\beta}\otimes \widehat{\gamma}_{\beta}]P; \vartheta, \ov{\vartheta})=\Phi_{\varphi}^{\ast}(P; \vartheta, \ov{\vartheta})
\end{equation}
for all $\beta\in\mathbb{T}^{3}$ and $P\in\mathbb{R}^{6}$.
It is now we make the important observation that if the orientations $(\vartheta, \ov{\vartheta})$ are {\em fixed}, then \eqref{reflect} implies that
\begin{equation*}
\Phi_{\varphi}^{\ast}(\cdot; \vartheta, \ov{\vartheta}) \quad \text{is constant on the group orbits}\hspace{2mm} G^{\ov{\vartheta}}_{\vartheta}P,
\end{equation*}
for any chosen $P\in\mathbb{R}^{6}$, where $G^{\ov{\vartheta}}_{\vartheta}\subseteq\mathrm{O}(6)$ is the subgroup generated by the 1-parameter family of reflection matrices $\{I-2\widehat{\gamma}_{\beta}\otimes\widehat{\gamma}_{\beta}\,:\,\psi\in\mathbb{S}^{1}\}$, namely
\begin{equation*}
G^{\ov{\vartheta}}_{\vartheta}:=\left\langle\left\{I-2\widehat{\gamma}_{\beta}\otimes\widehat{\gamma}_{\beta}\,:\,\psi\in\mathbb{S}^{1}\right\}\right\rangle.
\end{equation*}
Transforming back to $V$-variables and observing identity \eqref{der}, we obtain the following result.
\begin{prop}\label{getme}
The map $\varphi$ is a collision invariant if and only if for each $(\vartheta, \ov{\vartheta})\in\mathbb{T}^{2}$, the map $\Phi_{\varphi}(\cdot; \vartheta, \ov{\vartheta})$ is constant on the group orbits $\mathcal{G}^{\ov{\vartheta}}_{\vartheta}V\subset\mathbb{R}^{6}$ for every $V\in\mathbb{R}^{6}$, where $\mathcal{G}^{\ov{\vartheta}}_{\vartheta}:=M^{-1}G^{\ov{\vartheta}}_{\vartheta}M$.
\end{prop}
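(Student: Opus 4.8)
The plan is to assemble the proposition directly from the chain of reductions already in place, so that it becomes a bare statement about invariance under a generating set of a group. First I would record that $\varphi$ is a collision invariant if and only if \eqref{der} holds. This is immediate from the definitions: since $\sigma_{\beta}$ leaves the orientations $\vartheta,\ov{\vartheta}$ untouched, the subtracted terms $\varphi(0,0,\vartheta)$ and $\varphi(0,0,\ov{\vartheta})$ in the definition of $\varphi_{0}$ cancel on both sides of \eqref{collinv}, so \eqref{collinv} is equivalent to $\Phi_{\varphi}(\sigma_{\beta}V;\vartheta,\ov{\vartheta})=\Phi_{\varphi}(V;\vartheta,\ov{\vartheta})$ for all $V\in\mathbb{R}^{6}$ and all $\beta\in\mathbb{T}^{3}$. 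Substituting $P=MV$ and using the explicit form $\sigma_{\beta}=M^{-1}(I-2\widehat{\gamma}_{\beta}\otimes\widehat{\gamma}_{\beta})M$ then turns \eqref{der} into \eqref{reflect}, i.e.\ the statement that $\Phi_{\varphi}^{\ast}(\cdot;\vartheta,\ov{\vartheta})$ is fixed by every reflection $r_{\psi}:=I-2\widehat{\gamma}_{\beta}\otimes\widehat{\gamma}_{\beta}$ as $\psi$ ranges over $\mathbb{S}^{1}$ (with $\vartheta,\ov{\vartheta}$ held fixed).

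Next I would fix $(\vartheta,\ov{\vartheta})\in\mathbb{T}^{2}$ and promote this one-step invariance to invariance under the whole group. Each $r_{\psi}$ is an involution, so the generating family $\{r_{\psi}:\psi\in\mathbb{S}^{1}\}$ is symmetric and every $g\in G^{\ov{\vartheta}}_{\vartheta}$ is a finite word $g=r_{\psi_{k}}\circ\cdots\circ r_{\psi_{1}}$. Applying \eqref{reflect} $k$ times, at the successive points $r_{\psi_{j-1}}\circ\cdots\circ r_{\psi_{1}}P$, yields $\Phi_{\varphi}^{\ast}(gP;\vartheta,\ov{\vartheta})=\Phi_{\varphi}^{\ast}(P;\vartheta,\ov{\vartheta})$; that is, $\Phi_{\varphi}^{\ast}(\cdot;\vartheta,\ov{\vartheta})$ is constant on each orbit $G^{\ov{\vartheta}}_{\vartheta}P$. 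Undoing the substitution $P=MV$ and recalling $\mathcal{G}^{\ov{\vartheta}}_{\vartheta}=M^{-1}G^{\ov{\vartheta}}_{\vartheta}M$, this is precisely the assertion that $\Phi_{\varphi}(\cdot;\vartheta,\ov{\vartheta})$ is constant on the orbits $\mathcal{G}^{\ov{\vartheta}}_{\vartheta}V$.

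For the converse I would run the argument in reverse: if $\Phi_{\varphi}(\cdot;\vartheta,\ov{\vartheta})$ is constant on every orbit $\mathcal{G}^{\ov{\vartheta}}_{\vartheta}V$, then since for each $\psi\in\mathbb{S}^{1}$ the scattering matrix $\sigma_{\beta}=M^{-1}r_{\psi}M$ is a generator of $\mathcal{G}^{\ov{\vartheta}}_{\vartheta}$ we have $\sigma_{\beta}V\in\mathcal{G}^{\ov{\vartheta}}_{\vartheta}V$, hence $\Phi_{\varphi}(\sigma_{\beta}V;\vartheta,\ov{\vartheta})=\Phi_{\varphi}(V;\vartheta,\ov{\vartheta})$, which is \eqref{der}; by the first step $\varphi$ is then a collision invariant. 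Since this holds for every $(\vartheta,\ov{\vartheta})\in\mathbb{T}^{2}$ and every $V\in\mathbb{R}^{6}$, both implications are established.

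I do not expect any genuine obstacle here: the proposition is bookkeeping built on the reductions \eqref{der}--\eqref{reflect}. The only points demanding a little care are that the generators are involutions --- so that the single-step invariance \eqref{reflect} can be propagated along a word in the generators without treating inverses separately --- and that the generating set of $\mathcal{G}^{\ov{\vartheta}}_{\vartheta}$ is exactly the family $\{\sigma_{\beta}:\psi\in\mathbb{S}^{1}\}$, which is what makes the converse direction go through. The substantive work, namely identifying the orbits $\mathcal{G}^{\ov{\vartheta}}_{\vartheta}V$ explicitly (the analogue of the hard-disk computation in section \ref{diskcanon}), is postponed to the subsequent sections.
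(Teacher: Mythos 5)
Your proposal is correct and follows essentially the same route as the paper: the paper establishes the proposition through exactly the chain \eqref{collinv} $\Leftrightarrow$ \eqref{der} $\Leftrightarrow$ \eqref{reflect}, followed by the observation that invariance under each generating reflection (for fixed $(\vartheta, \ov{\vartheta})$) is equivalent to constancy on the orbits of the generated group, conjugated back by $M$. The only difference is that you make explicit the bookkeeping the paper leaves implicit --- the cancellation of the $\varphi(0,0,\vartheta)$ terms, the propagation along words in the involutive generators, and the converse via restriction to generators --- all of which is sound.
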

The basic problem is now to characterise the orbits of every point in $\mathbb{R}^{6}$ under the action of $G^{\ov{\vartheta}}_{\vartheta}$ for each $(\vartheta, \ov{\vartheta})\in\mathbb{T}^{2}$. This leads us to the concept of {\em energy-momentum} submanifolds of $\mathbb{R}^{6}$, which we introduce now.
\subsection{Energy-momentum Submanifolds of $\mathbb{R}^{6}$}
We recall from section \ref{algcon} above that the scattering matrices $\sigma_{\beta}$ under study conserve total kinetic energy, i.e.
\begin{equation*}
|M\sigma_{\beta}V|^{2}=|MV|^{2},
\end{equation*}
along with total linear momentum of the particles,
\begin{displaymath}
m\left(
\begin{array}{c}
(\sigma_{\beta}V)_{1} \\ (\sigma_{\beta}V)_{2}
\end{array}
\right)+m\left(
\begin{array}{c}
(\sigma_{\beta}V)_{3} \\ (\sigma_{\beta}V)_{4}
\end{array}
\right) = m\left(
\begin{array}{c}
V_{1} \\ V_{2}
\end{array}
\right)+
m\left(
\begin{array}{c}
V_{3} \\ V_{4}
\end{array}
\right),
\end{displaymath}
for all $\beta\in\mathbb{T}^{3}$, once $V\in\mathbb{R}^{6}$ has been prescribed. Writing these in the language of the previous section, we have that
\begin{equation*}
Y\mapsto |MY|^{2} \quad\text{is constant on the group orbits}\hspace{2mm}\mathcal{G}^{\ov{\vartheta}}_{\vartheta}V
\end{equation*}
together with
\begin{equation*}
Y\mapsto Y_{1}+Y_{3} \hspace{2mm} \text{and}\hspace{2mm} Y\mapsto Y_{2}+Y_{4} \hspace{2mm} \text{are constant on the group orbits}\hspace{2mm}\mathcal{G}^{\ov{\vartheta}}_{\vartheta}V,
\end{equation*}
for $Y\in\mathbb{R}^{6}$. On the basis of these observations for the family of scattering matrices 
\begin{equation*}
\{M^{-1}(I-2\widehat{\gamma}_{\beta}\otimes\widehat{\gamma}_{\beta})M\}_{\beta\in\mathbb{T}^{3}},
\end{equation*}
it seems reasonable to postulate that the orbits $\mathcal{G}^{\ov{\vartheta}}_{\vartheta}V$ are simply those subsets of $\mathbb{R}^{6}$ which are realised as the intersection of energy ellipsoids
\begin{equation*}
\mathcal{E}(V):=\left\{Y\in\mathbb{R}^{6}\,:\,|MY|^{2}=|MV|^{2}\right\}
\end{equation*}
with momentum planes 
\begin{equation*}
\Pi_{1}(V):=\left\{Y\in \mathbb{R}^{6}\,:\,Y_{1}+Y_{3}=V_{1}+V_{3}\right\}\quad\text{and}\quad \Pi_{2}(V):=\left\{Y\in \mathbb{R}^{6}\,:\, Y_{2}+Y_{4}=V_{2}+V_{4}\right\}.
\end{equation*}
Indeed, this is what we prove in section \ref{helpmealg} below by using group-theoretic techniques and a careful analysis of properties of the collision normals $\gamma_{\beta}$.

Firstly, let us put the previous argument in precise terms. For a given energy $\mathsf{e}>0$ and momentum vector $\mathsf{p}\in\mathbb{R}^{2}$ satisfying $\mathsf{e}^{2}> |\mathsf{p}|^{2}/2m$, we define the associated {\em energy-momentum manifold} $\mathsf{M}(\mathsf{e}, \mathsf{p})\subset\mathbb{R}^{6}$ by
\begin{equation*}
\mathsf{M}(\mathsf{e}, \mathsf{p}):=\left\{Y\in\mathbb{R}^{6}\,:\,|MY|=\mathsf{e}\hspace{2mm}\text{and}\hspace{2mm} \left(
\begin{array}{c}
Y_{1}+Y_{3} \\ Y_{2}+Y_{4}
\end{array}\right)=\frac{\mathsf{p}}{m}\right\}.
\end{equation*}
It is now our aim to show that if $V\in\mathbb{R}^{6}$ is arbitrary, and we denote
\begin{displaymath}
\mathsf{e}^{2}=|MV|^{2} \quad \text{and}\quad \mathsf{p}=m\left(
\begin{array}{c}
V_{1}+V_{3} \\
V_{2}+V_{4}
\end{array}\right),
\end{displaymath}
then the group orbits of points $V\in\mathbb{R}^{6}$ are given by 
\begin{displaymath}
\mathcal{G}^{\ov{\vartheta}}_{\vartheta}V=\left\{
\begin{array}{ll}
\mathsf{M}(\mathsf{e}, \mathsf{p}) & \quad \text{if}\hspace{2mm}\mathsf{e}^{2}>\frac{|\mathsf{p}|^{2}}{2m} \vspace{2mm}\\
\left\{\left[\frac{\mathsf{p}}{2m}, \frac{\mathsf{p}}{2m}, 0, 0\right]\right\} & \quad \text{if} \hspace{2mm}\mathsf{e}^{2}=\frac{|\mathsf{p}|^{2}}{2m},
\end{array}
\right.
\end{displaymath}
for {\em any} choice of orientations $(\vartheta, \ov{\vartheta})\in\mathbb{T}^{2}$, i.e. the group orbits are independent of the choice of particle orientations. In other words, we want to show the restriction map $\Phi(\cdot; \vartheta, \ov{\vartheta})|_{\mathsf{M}(\mathsf{e}, \mathsf{p})}$ is a constant function for all suitable $\mathsf{e}>0$ and $\mathsf{p}\in\mathbb{R}^{2}$ by identity \eqref{der}. Since we have nothing to show in the case that $\mathcal{G}^{\ov{\vartheta}}_{\vartheta}V$ is a singleton set, we assume henceforth that $\mathsf{e}^{2}>|\mathsf{p}|^{2}/2m$. 
As the energy-momentum submanifolds are homeomorphic to the 3-sphere, one can expect to reduce the study of the subgroup $\mathcal{G}^{\ov{\vartheta}}_{\vartheta}\subseteq\mathrm{O}(6)$ acting on $\mathsf{M}(\mathsf{e}, \mathsf{p})$ to one of a group acting on $\mathbb{S}^{3}$. As done before in section \ref{diskcanon} above, let us now reduce our problem to a kind of canonical form.
\subsection{Transformation to Canonical Form}\label{trancon}
Let both energy $\mathsf{e}$ and momentum $\mathsf{p}$ be given which satisfy $\mathsf{e}^{2}>|\mathsf{p}|^{2}/2m$, and suppose them to be {\em fixed}. We now define $h_{\mathsf{e}, \mathsf{p}}:\mathsf{M}(\mathsf{e}, \mathsf{p})\rightarrow\mathbb{S}^{3}$ by
\begin{displaymath}
h_{\mathsf{e}, \mathsf{p}}[V]:=\frac{1}{r(V)}\left(
\begin{array}{c}
(MV)_{1}-(MV)_{3} \\
(MV)_{2}-(MV)_{4} \\
\sqrt{2}(MV)_{5} \\
\sqrt{2}(MV)_{6}
\end{array}
\right),
\end{displaymath}
where $r(V):=\sqrt{((MV)_{1}-(MV)_{3})^{2}+((MV)_{2}-(MV)_{4})^{2}+2(MV)_{5}^{2}+2(MV)_{6}^{2}}$, thereby considering $\mathbb{S}^{3}$ as embedded in $\mathbb{R}^{4}$. Notice also that since $\mathsf{e}^{2}>|\mathsf{p}|^{2}/2m$, the radicand of $r(V)$ is strictly positive. One can check that $h_{\mathsf{e}, \mathsf{p}}$ is a bijection between $\mathsf{M}(\mathsf{e}, \mathsf{p})$ and $\mathbb{S}^{3}$, 
whose inverse is given explicitly by
\begin{displaymath}
h_{\mathsf{e}, \mathsf{p}}^{-1}[w]=\frac{1}{2}\left(
\begin{array}{c}
\sqrt{2\mathsf{e}^{2}-\frac{|\mathsf{p}|^{2}}{m}}w_{1}+\frac{\mathsf{p}_{1}}{\sqrt{m}}\\
\sqrt{2\mathsf{e}^{2}-\frac{|\mathsf{p}|^{2}}{m}}w_{2}+\frac{\mathsf{p}_{2}}{\sqrt{m}}\\
\frac{\mathsf{p}_{1}}{\sqrt{m}}-\sqrt{2\mathsf{e}^{2}-\frac{|\mathsf{p}|^{2}}{m}}w_{1}\\
\frac{\mathsf{p}_{2}}{\sqrt{m}}-\sqrt{2\mathsf{e}^{2}-\frac{|\mathsf{p}|^{2}}{m}}w_{2}\\
\sqrt{\frac{\mathsf{e}^{2}}{2}-\frac{|\mathsf{p}|^{2}}{4m}}w_{3}\\
\sqrt{\frac{\mathsf{e}^{2}}{2}-\frac{|\mathsf{p}|^{2}}{4m}}w_{4}
\end{array}
\right) \quad \text{for}\hspace{2mm} w=(w_{1}, w_{2}, w_{3}, w_{4})\in\mathbb{S}^{3}.
\end{displaymath} 
We now consider the orbits $\mathcal{G}^{\ov{\vartheta}}_{\vartheta}V$ as images of another group action on $\mathbb{S}^{3}$ under the map $h_{\mathsf{e}, \mathsf{p}}$. A quick calculation reveals that
\begin{equation*}
\sigma_{\beta}\,:\,V\mapsto M^{-1}\left(I-2\widehat{\gamma}_{\beta}\otimes \widehat{\gamma}_{\beta}\right)MV \quad \text{for} \hspace{2mm}V\in\mathsf{M}(\mathsf{e}, \mathsf{p})\hspace{2mm}
\end{equation*}
if and only if
\begin{equation*}
s_{\beta}\,:\,w\mapsto\left(I-2\widehat{\mu}_{\beta}\otimes \widehat{\mu}_{\beta}\right)w \quad \text{for}\hspace{2mm}w=h_{\mathsf{e}, \mathsf{p}}(V),
\end{equation*}
where $\widehat{\mu}_{\beta}\in\mathbb{S}^{3}$ is the unit vector
\begin{displaymath}
\widehat{\mu}_{\beta}:=\sqrt{\frac{2}{\Lambda_{\beta}}}M^{-1}_{1}\left(
\begin{array}{c}
(N_{\beta})_{1}\\
(N_{\beta})_{2}\\
\frac{1}{\sqrt{2}}\left(r_{\beta}-d_{\beta}e(\psi)\right)^{\perp}\cdot N_{\beta}\\
-\frac{1}{\sqrt{2}}r_{\beta}^{\perp}\cdot N_{\beta}
\end{array}
\right),
\end{displaymath}
$\Lambda_{\beta}>0$ is given in \eqref{lambdaish} above, and the reduced mass-inertia matrix $M_{1}\in\mathbb{R}^{4\times 4}$ is given by
\begin{displaymath}
M_{1}:=\left(
\begin{array}{cccc}
\sqrt{m} & 0 & 0 & 0 \\
0 & \sqrt{m} & 0 & 0 \\
0 & 0 & \sqrt{J} & 0 \\
0 & 0 & 0 & \sqrt{J}
\end{array}
\right).
\end{displaymath}
It will be crucial for the proof of characterisation of collision invariants in the sequel to show that the $(\vartheta, \ov{\vartheta})$-dependent family of unit vectors $\{\widehat{\mu}_{\beta}\,:\,\psi\in\mathbb{S}^{1}\}$ lies in no single hyperplane in $\mathbb{R}^{4}$. Indeed, we address this problem in proposition \ref{spanny} below. With this observation that we may essentially work on the sphere $\mathbb{S}^{3}$ for any pair of orientations $(\vartheta, \ov{\vartheta})\in\mathbb{T}^{2}$, we define the group $H^{\ov{\vartheta}}_{\vartheta}\subseteq\mathrm{O}(4)$ by
\begin{equation*}
H^{\ov{\vartheta}}_{\vartheta}:=\left\langle\left\{I-2\widehat{\mu}_{\beta}\otimes \widehat{\mu}_{\beta}\,:\,\psi\in\mathbb{S}^{1}\right\}\right\rangle,
\end{equation*}
which is now the primary object of study. We have the following proposition, which crystalises the above discussion.
\begin{prop}\label{transfertransitive}
Let $(\vartheta, \ov{\vartheta})\in\mathbb{T}^{2}$ be given. The group $H^{\ov{\vartheta}}_{\vartheta}\subseteq\mathrm{O}(4)$ acts transitively on $\mathbb{S}^{3}$ if and only if $\mathcal{G}^{\ov{\vartheta}}_{\vartheta}\subseteq\mathrm{O}(6)$ acts transitively on $\mathsf{M}(\mathsf{e}, \mathsf{p})$ for any {\em single} pair $(\mathsf{e}, \mathsf{p})$ satisfying $\mathsf{e}^{2}>|\mathsf{p}|^{2}/2m$.
\end{prop}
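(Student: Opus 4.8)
The statement is essentially a change-of-coordinates assertion: the bijection $h_{\mathsf{e},\mathsf{p}}$ of section \ref{trancon} conjugates the action of $\mathcal{G}^{\ov{\vartheta}}_{\vartheta}$ on $\mathsf{M}(\mathsf{e},\mathsf{p})$ into the action of $H^{\ov{\vartheta}}_{\vartheta}$ on $\mathbb{S}^{3}$, and transitivity is preserved by such a conjugation. The plan is the following.

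First I would record that $\mathsf{M}(\mathsf{e},\mathsf{p})$ is invariant under every element of $\mathcal{G}^{\ov{\vartheta}}_{\vartheta}$, and not merely under its generators. Each generator $\sigma_{\beta}=M^{-1}(I-2\widehat{\gamma}_{\beta}\otimes\widehat{\gamma}_{\beta})M$ conserves both the energy $V\mapsto|MV|^{2}$ and the momentum $V\mapsto(V_{1}+V_{3},V_{2}+V_{4})$, by the discussion of section \ref{algcon}, and hence maps $\mathsf{M}(\mathsf{e},\mathsf{p})$ into itself; since $\mathcal{G}^{\ov{\vartheta}}_{\vartheta}$ is the group generated by these $\sigma_{\beta}$ (each of which is an involution, so that inverses present no difficulty), the entire group preserves $\mathsf{M}(\mathsf{e},\mathsf{p})$. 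Consequently the restriction map $g\mapsto g|_{\mathsf{M}(\mathsf{e},\mathsf{p})}$ is a well-defined group homomorphism from $\mathcal{G}^{\ov{\vartheta}}_{\vartheta}$ into the group of homeomorphisms of $\mathsf{M}(\mathsf{e},\mathsf{p})$.

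Next I would fix an admissible pair $(\mathsf{e},\mathsf{p})$ with $\mathsf{e}^{2}>|\mathsf{p}|^{2}/2m$ and introduce the conjugation homomorphism $\Psi_{\mathsf{e},\mathsf{p}}\colon\mathcal{G}^{\ov{\vartheta}}_{\vartheta}\to\mathrm{Homeo}(\mathbb{S}^{3})$, $\Psi_{\mathsf{e},\mathsf{p}}(g):=h_{\mathsf{e},\mathsf{p}}\circ g|_{\mathsf{M}(\mathsf{e},\mathsf{p})}\circ h_{\mathsf{e},\mathsf{p}}^{-1}$, which makes sense because $h_{\mathsf{e},\mathsf{p}}$ is a bijection. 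The computation carried out immediately before the statement shows that $\Psi_{\mathsf{e},\mathsf{p}}(\sigma_{\beta})=I-2\widehat{\mu}_{\beta}\otimes\widehat{\mu}_{\beta}=s_{\beta}$ for every $\beta=(\vartheta,\ov{\vartheta},\psi)$ with the prescribed orientations. Since $\Psi_{\mathsf{e},\mathsf{p}}$ is a homomorphism and the $\sigma_{\beta}$ generate $\mathcal{G}^{\ov{\vartheta}}_{\vartheta}$, the image of $\Psi_{\mathsf{e},\mathsf{p}}$ is precisely the group generated by $\{s_{\beta}\colon\psi\in\mathbb{S}^{1}\}$, i.e. $H^{\ov{\vartheta}}_{\vartheta}$ acting on $\mathbb{S}^{3}\subset\mathbb{R}^{4}$; moreover one has the intertwining identity $h_{\mathsf{e},\mathsf{p}}(gV)=\Psi_{\mathsf{e},\mathsf{p}}(g)\bigl(h_{\mathsf{e},\mathsf{p}}(V)\bigr)$ for all $g\in\mathcal{G}^{\ov{\vartheta}}_{\vartheta}$ and all $V\in\mathsf{M}(\mathsf{e},\mathsf{p})$.

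The equivalence then follows by transporting transitivity across $h_{\mathsf{e},\mathsf{p}}$: given $w,w'\in\mathbb{S}^{3}$, put $V:=h_{\mathsf{e},\mathsf{p}}^{-1}(w)$ and $V':=h_{\mathsf{e},\mathsf{p}}^{-1}(w')$, and use the intertwining identity to turn any $g\in\mathcal{G}^{\ov{\vartheta}}_{\vartheta}$ with $gV=V'$ into an element of $H^{\ov{\vartheta}}_{\vartheta}$ carrying $w$ to $w'$; the converse runs identically. Since the intertwining identity and the bijectivity of $h_{\mathsf{e},\mathsf{p}}$ are available for every admissible pair $(\mathsf{e},\mathsf{p})$, while transitivity of $H^{\ov{\vartheta}}_{\vartheta}$ on $\mathbb{S}^{3}$ makes no reference to $(\mathsf{e},\mathsf{p})$, this argument in fact yields the slightly stronger assertion that transitivity of $\mathcal{G}^{\ov{\vartheta}}_{\vartheta}$ on $\mathsf{M}(\mathsf{e},\mathsf{p})$ for one admissible pair is equivalent to transitivity on all of them. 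I do not anticipate any substantial obstacle: the only point that warrants (minor) care is the invariance of $\mathsf{M}(\mathsf{e},\mathsf{p})$ under the whole group $\mathcal{G}^{\ov{\vartheta}}_{\vartheta}$, which guarantees that $\Psi_{\mathsf{e},\mathsf{p}}$ is well-defined, while the genuine computational content — the explicit form of $\widehat{\mu}_{\beta}$ and the identification $\sigma_{\beta}\leftrightarrow s_{\beta}$ — has already been dispatched before the statement.
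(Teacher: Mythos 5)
Your proposal is correct and takes essentially the same route as the paper: the paper offers no separate proof beyond the discussion in section \ref{trancon}, where the bijection $h_{\mathsf{e},\mathsf{p}}$ is shown to intertwine each generator $\sigma_{\beta}$ on $\mathsf{M}(\mathsf{e},\mathsf{p})$ with the reflection $s_{\beta}=I-2\widehat{\mu}_{\beta}\otimes\widehat{\mu}_{\beta}$ on $\mathbb{S}^{3}$, and the proposition is stated as a crystallisation of that observation. Your write-up merely makes explicit the points the paper leaves implicit (invariance of $\mathsf{M}(\mathsf{e},\mathsf{p})$ under the full group and the transport of transitivity through the conjugating bijection), and correctly notes that the argument is uniform in the admissible pair $(\mathsf{e},\mathsf{p})$.
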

If the orbits under $\mathcal{G}^{\ov{\vartheta}}_{\vartheta}$ of any given point in $\mathbb{R}^{6}$ is indeed the corresponding energy-momentum manifold, we may immediately infer the existence of another measurable function $\widetilde{\Phi_{\varphi}}:\mathbb{R}^{2}\times \mathbb{R}\rightarrow\mathbb{R}$ such that
\begin{equation}\label{reduct}
\Phi_{\varphi}(V; \vartheta, \ov{\vartheta})=\widetilde{\Phi_{\varphi}}(mv+m\ov{v}, m|v|^{2}+J\omega^{2}+m|\ov{v}|^{2}+J\ov{\omega}^{2}),
\end{equation}
for all $V\in\mathbb{R}^{6}$. To show that \eqref{reduct} holds for some $\widetilde{\Phi_{\varphi}}$, we employ some new results contained in the appendix of this article on generators of the rotation group $\mathrm{O}(4)$, which are due to C. Viterbo.

\subsection{The Transitive Group Action of $H^{\ov{\vartheta}}_{\vartheta}$ on $\mathbb{S}^{3}$}\label{helpmealg}
The key result is the following, whose proof can be found in \textsc{Appendix} \ref{algebra}.
\begin{thm}\label{claude}
Suppose that $\mu:\mathbb{S}^{1}\rightarrow\mathbb{S}^{3}$ is a continuous, non-constant map. Let $H\subseteq\mathrm{O}(4)$ denote the group 
\begin{equation*}
H:=\left\langle\left\{I-2\mu_{\psi}\otimes\mu_{\psi}\,:\,\psi\in\mathbb{S}^{1}\right\}\right\rangle.
\end{equation*}
Then $H$ acts transitively on $\mathbb{S}^{3}$ unless the image set $\{\mu_{\psi}\,:\,\psi\in\mathbb{S}^{1}\}$ is strictly contained in some hyperplane in $\mathbb{R}^{4}$.
\end{thm}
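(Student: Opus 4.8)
I would argue the contrapositive: suppose the image $\{\mu_\psi:\psi\in\mathbb{S}^1\}$ is not contained in any proper linear subspace of $\mathbb{R}^4$, equivalently that the reflections $s_{\mu_\psi}:=I-2\mu_\psi\otimes\mu_\psi$ have no common non-zero fixed vector (since $s_{\mu_\psi}x=x$ for all $\psi$ iff $x\perp\mathrm{span}\{\mu_\psi\}$), and I claim $H$ then acts transitively on $\mathbb{S}^3$. It suffices to treat $H^+:=H\cap\mathrm{SO}(4)$, which has index at most two in $H$. Every element of $H^+$ is an even word in the $s_{\mu_\psi}$, hence a product of rotations of the form $s_{\mu_\psi}s_{\mu_{\psi'}}$; the set $R:=\{s_{\mu_\psi}s_{\mu_{\psi'}}:\psi,\psi'\in\mathbb{S}^1\}$ is a path-connected subset of $\mathrm{SO}(4)$ containing the identity (take $\psi=\psi'$). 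Thus $H^+=\langle R\rangle$ is a path-connected subgroup of the Lie group $\mathrm{SO}(4)$, so by Yamabe's theorem it is a connected immersed Lie subgroup, and its closure $K:=\overline{H^+}$ is a compact connected Lie subgroup of $\mathrm{SO}(4)$.

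The heart of the proof is a structural dichotomy for $K$. Using the isogeny $\mathrm{SU}(2)\times\mathrm{SU}(2)\to\mathrm{SO}(4)$ — most conveniently realised quaternionically, with $\mathbb{R}^4=\mathbb{H}$, $\mathbb{S}^3=\mathrm{Sp}(1)$, $s_u:x\mapsto -u\bar x u$, so that $s_us_v$ is the image of the pair $(u\bar v,\,\bar u v)\in\mathrm{Sp}(1)\times\mathrm{Sp}(1)$ acting by $x\mapsto(u\bar v)\,x\,\overline{(\bar u v)}$ — one classifies the connected subgroups of $\mathrm{SO}(4)$ up to conjugacy and checks that each one either (i) acts transitively on $\mathbb{S}^3$, which forces it to be conjugate to $\mathrm{SO}(4)$, to $\mathrm{U}(2)$, or to one of the two $\mathrm{SU}(2)$-factors, or else (ii) fixes a line in $\mathbb{R}^4$ (the $\mathrm{SO}(3)$ and trivial cases) or preserves an orthogonal pair of $2$-planes (the one- and two-dimensional torus cases). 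A further inspection of the subalgebra lattice of $\mathfrak{so}(4)$ shows that, in the transitive cases, a connected immersed Lie subgroup whose closure is $K$ must equal $K$ (one rules out, e.g., $H^+$ being a dense winding subgroup). Hence, once alternative (ii) is excluded for $K$, it follows that $H^+=K$ acts transitively, and we are done.

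To exclude (ii), suppose first that $K$ fixes a line $\mathbb{R}e$. Then $s_{\mu_\psi}s_{\mu_{\psi_0}}e=e$ for all $\psi,\psi_0$, which rearranges to $(\mu_\psi\cdot e)\,\mu_\psi=(\mu_{\psi_0}\cdot e)\,\mu_{\psi_0}$ for all $\psi$; choosing $\psi_0$ with $\mu_{\psi_0}\cdot e\neq 0$ (possible, as not every $\mu_\psi$ can be orthogonal to $e$) forces every $\mu_\psi$ to be a scalar multiple of $\mu_{\psi_0}$, contradicting the spanning hypothesis. Suppose instead $K$ preserves a $2$-plane $P$, hence also $P^\perp$. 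Pick $\psi_0$ at which $\mu$ is not locally constant (such a point exists, else $\mu$ would be globally constant). For $\psi$ near $\psi_0$ with $\mu_\psi\neq\mu_{\psi_0}$, the element $s_{\mu_\psi}s_{\mu_{\psi_0}}\in H^+\subseteq K$ is a rotation that fixes $\mathrm{span}\{\mu_\psi,\mu_{\psi_0}\}^\perp$ pointwise and rotates the $2$-plane $Q_\psi:=\mathrm{span}\{\mu_\psi,\mu_{\psi_0}\}$ by a non-zero angle in $(0,\pi)$ (the angle is small and non-zero since $\mu_\psi\neq\pm\mu_{\psi_0}$ near $\psi_0$). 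A short linear-algebra lemma shows that such a rotation can preserve $P$ only if $Q_\psi\in\{P,P^\perp\}$; since $\mu_{\psi_0}\in Q_\psi$, after relabelling $P\leftrightarrow P^\perp$ we get $\mu_{\psi_0}\in P$, whence $Q_\psi=P$ and so $\mu_\psi\in P$ for all such $\psi$. Thus $\{\psi:\mu_\psi\in P\}$ contains a neighbourhood of $\psi_0$; being also closed (continuity of $\mu$) and, by the identical argument at each of its points, open, it is all of $\mathbb{S}^1$. Then $\{\mu_\psi\}\subseteq P$ spans at most a $2$-dimensional subspace, a contradiction. Therefore $K$, and with it $H^+$ and $H$, acts transitively on $\mathbb{S}^3$.

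The step I expect to be the main obstacle is the second one: assembling the precise classification of connected subgroups of $\mathrm{SO}(4)$ and verifying that the non-transitive ones all fix a line or preserve a $2$-plane, together with the bookkeeping on $\mathfrak{so}(4)$ that upgrades transitivity of the closure $K$ to transitivity of $H^+$ itself. The propagation argument in Step 3 — turning local membership $\mu_\psi\in P$ into global membership via connectedness of $\mathbb{S}^1$ — needs a little care around the locus where $\mu$ is locally constant, but is otherwise routine, as is the reduction from $\mathrm{O}(4)$ to $\mathrm{SO}(4)$.
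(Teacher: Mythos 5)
Your proof is correct in outline, but it takes a genuinely different route from the one in the paper's appendix. The two arguments share their opening moves --- restricting to the even subgroup $H^{+}=H\cap\mathrm{SO}(4)$, noting that it is path-connected because the generating pairs $s_{\mu_{\psi}}s_{\mu_{\psi'}}$ form a path-connected set through the identity, and invoking Yamabe's theorem to make $H^{+}$ an immersed connected Lie subgroup --- but from there the paper argues globally: it quotes the Eaton--Perlman theorem (an infinite group generated by hyperplane reflections whose unit normals span $\mathbb{R}^{n}$ is dense in $\mathrm{O}(n)$), and then upgrades density to equality by showing that a maximal connected subgroup of $\mathrm{SO}(n)$ is necessarily closed, via the fact that a Lie group homomorphism onto a dense subgroup of $\mathrm{SO}(n)$ is surjective. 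You avoid Eaton--Perlman entirely, instead classifying the closed connected subgroups of $\mathrm{SO}(4)$ through the quaternionic double cover and killing the non-transitive possibilities (a fixed line, or an invariant orthogonal pair of $2$-planes) directly from the spanning hypothesis and the non-constancy of $\mu$; both of your exclusion computations are sound, and the propagation of $\mu_{\psi}\in P$ around $\mathbb{S}^{1}$ does go through at the locally-constant points exactly as you indicate. The paper's route buys dimension-independence (it is stated for $\mathrm{O}(n)$ and an arbitrary connected parameter set) and a shorter path from density to transitivity; yours buys self-containedness and an explicit picture of what the closure $K$ can be. The two steps you leave as sketches --- the subgroup classification of $\mathrm{SO}(4)$, and the upgrade from transitivity of $K=\overline{H^{+}}$ to transitivity of $H^{+}$ itself --- are standard and fillable: transitivity forces $K$ conjugate to $\mathrm{SO}(4)$, $\mathrm{U}(2)$ or an $\mathrm{SU}(2)$-factor, and in each case the Lie algebra of a dense connected immersed subgroup is an $\mathrm{Ad}(K)$-invariant ideal, which one checks must be all of $\mathfrak{k}$, so $H^{+}=K$. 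I regard these as routine verifications rather than gaps.
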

Using this result directly, we are able to prove that $H^{\ov{\vartheta}}_{\vartheta}$ defined above does indeed act transitively on $\mathbb{S}^{3}$. In fact, the proof of \textsc{Proposition} \ref{transfertransitive} follows immediately from the following result, which says that the image set $\{\widehat{\mu}_{\beta}\,:\,\psi\in\mathbb{S}^{1}\}$ cannot lie in any one fixed hyperplane for {\em any} choice of orientations $(\vartheta, \ov{\vartheta})\in\mathbb{T}^{2}$.
\begin{prop}\label{spanny}
For any $(\vartheta, \ov{\vartheta})\in\mathbb{T}^{2}$, we have $\mathrm{span}\{\widehat{\mu}_{\beta}\,:\,\psi\in\mathbb{S}^{1}\}=\mathbb{R}^{4}$.
\end{prop}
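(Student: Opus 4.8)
First I would make the family $\{\widehat{\mu}_{\beta}\}$ (with $(\vartheta,\overline{\vartheta})$ fixed, $\psi$ varying) completely explicit. Unwinding the definition of $\widehat{\mu}_{\beta}$, of the collision normal $\gamma_{\beta}$ (Definition~\ref{collnorm}) and of $\widetilde{N}^{\overline{\vartheta}}_{\vartheta}$, $r^{\overline{\vartheta}}_{\vartheta}$, together with the identity $d^{\overline{\vartheta}}_{\vartheta}(\psi)\,e(\psi)^{\perp}\cdot\widetilde{N}^{\overline{\vartheta}}_{\vartheta}(\psi)=-\partial_{\psi}d^{\overline{\vartheta}}_{\vartheta}(\psi)$ recorded in Section~\ref{prepost}, one checks that, up to a strictly positive $\psi$-dependent scalar,
\[
\widehat{\mu}_{\beta}\ \propto\ \left(\frac{\widetilde{N}_{1}}{\sqrt{m}},\ \frac{\widetilde{N}_{2}}{\sqrt{m}},\ \frac{\partial_{\theta}d+\partial_{\psi}d}{\sqrt{2J}},\ \frac{-\partial_{\theta}d}{\sqrt{2J}}\right),
\]
where $\widetilde{N}=\widetilde{N}^{\overline{\vartheta}}_{\vartheta}(\psi)=e(\psi)-\frac{\partial_{\psi}d}{d}\,e(\psi)^{\perp}$, $\partial_{\psi}d=\partial_{\psi}d^{\overline{\vartheta}}_{\vartheta}(\psi)$, and $\partial_{\theta}d$ abbreviates $(\partial_{\theta}d_{\overline{\vartheta}-\vartheta})(\psi-\vartheta)$. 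Since rescaling each vector of a family by a positive scalar does not change its span, $\mathrm{span}\{\widehat{\mu}_{\beta}:\psi\in\mathbb{S}^{1}\}=\mathbb{R}^{4}$ if and only if the four coordinate functions $\psi\mapsto(\widehat{\mu}_{\beta})_{i}$ are linearly independent over $\mathbb{R}$; equivalently, no nonzero $a=(a_{1},a_{2},a_{3},a_{4})$ satisfies $\sum_{i}a_{i}(\widehat{\mu}_{\beta})_{i}\equiv 0$ in $\psi$.

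Then I would use the symmetry of $\mathsf{P}_{\ast}$. As $\mathsf{P}_{\ast}\in\mathcal{P}(\mathbb{Z}_{2}^{2})$ it is centrally symmetric, so the locus of closest approach $\mathcal{C}^{\overline{\vartheta}}_{\vartheta}$ is centrally symmetric and $\psi\mapsto d^{\overline{\vartheta}}_{\vartheta}(\psi)$ is $\pi$-periodic; hence $\partial_{\psi}d$ and $\partial_{\theta}d$ are $\pi$-periodic in $\psi$, while $\widetilde{N}(\psi+\pi)=-\widetilde{N}(\psi)$. Thus the first two coordinates of $\widehat{\mu}_{\beta}$ are odd and the last two even under $\psi\mapsto\psi+\pi$, so a relation $\sum_{i}a_{i}(\widehat{\mu}_{\beta})_{i}\equiv 0$ decouples into the two scalar identities
\[
a_{1}\widetilde{N}_{1}(\psi)+a_{2}\widetilde{N}_{2}(\psi)=0,\qquad a_{3}(\partial_{\theta}d+\partial_{\psi}d)-a_{4}\,\partial_{\theta}d=0\qquad(\text{for all }\psi).
\]
The first says that the covector $(a_{1},a_{2})$ annihilates $\widetilde{N}(\psi)$ for every $\psi$; but $\widetilde{N}(\psi)\cdot e(\psi)=1>0$, so $\widetilde{N}$ never vanishes and its argument stays within $(-\pi/2,\pi/2)$ of $\psi$, hence it winds once around the circle and $\{\widetilde{N}(\psi)\}$ lies on no line — forcing $a_{1}=a_{2}=0$. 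It remains to show that the second identity forces $a_{3}=a_{4}=0$, that is, that $\partial_{\psi}d^{\overline{\vartheta}}_{\vartheta}$ and $(\partial_{\theta}d_{\overline{\vartheta}-\vartheta})(\cdot-\vartheta)$ are linearly independent functions of $\psi$.

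This last point is where the real work lies. Writing $\theta=\overline{\vartheta}-\vartheta$, $h=h_{\mathsf{P}_{\ast}}$ and $H_{\theta}(\alpha)=h(\alpha)+h(\alpha-\theta)$ (the support function of the Minkowski sum $\mathsf{P}_{\ast}\oplus R(\theta)\mathsf{P}_{\ast}$, of whose boundary $\mathcal{C}^{\overline{\vartheta}}_{\vartheta}$ is a rigid rotation), one parametrises that boundary by its outward normal angle, $\alpha\mapsto H_{\theta}(\alpha)e(\alpha)+H_{\theta}'(\alpha)e(\alpha)^{\perp}$; strict convexity ($h+h''>0$, hence $H_{\theta}+H_{\theta}''>0$) makes the passage $\alpha\mapsto\psi$ a diffeomorphism. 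Differentiating this parametrisation yields the closed forms
\[
\partial_{\psi}d^{\overline{\vartheta}}_{\vartheta}=\frac{h'(\alpha)+h'(\alpha-\theta)}{H_{\theta}(\alpha)}\,d^{\overline{\vartheta}}_{\vartheta},\qquad (\partial_{\theta}d_{\overline{\vartheta}-\vartheta})(\cdot-\vartheta)=-\frac{h'(\alpha-\theta)}{H_{\theta}(\alpha)}\,d^{\overline{\vartheta}}_{\vartheta},
\]
so, the factor $d^{\overline{\vartheta}}_{\vartheta}/H_{\theta}$ being nowhere zero, linear independence of these two functions is equivalent to linear independence of $\alpha\mapsto h'(\alpha)$ and $\alpha\mapsto h'(\alpha-\theta)$. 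A nontrivial relation would read $h'(\alpha)=c\,h'(\alpha-\theta)$ identically for a constant $c$; passing to the Fourier series of the real-analytic function $h$, this forces $e^{\mathrm{i}n\theta}=c$ for every frequency $n$ in the spectrum of $h$, which is non-trivial since $\mathsf{P}_{\ast}$ is not a disk. The main obstacle, and the place where the real-analyticity and the full strength of the hypotheses on $\mathsf{P}_{\ast}$ must be invoked, is precisely ruling this out — i.e. excluding that $h'$ is carried to a scalar multiple of itself by the rotation through $\theta$, which amounts to excluding the exceptional orientations $\theta$ commensurate with a rotational (or alternating) symmetry of $\mathsf{P}_{\ast}$. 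Once this is settled, $a_{3}=a_{4}=0$, every hypothetical annihilator vanishes, and hence $\mathrm{span}\{\widehat{\mu}_{\beta}:\psi\in\mathbb{S}^{1}\}=\mathbb{R}^{4}$; combined with \textsc{Theorem}~\ref{claude} and \textsc{Proposition}~\ref{transfertransitive}, this delivers the transitivity of $H^{\overline{\vartheta}}_{\vartheta}$ on $\mathbb{S}^{3}$ needed in the sequel.
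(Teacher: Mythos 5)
Your explicit formula for $\widehat{\mu}_{\beta}$, the parity decoupling under $\psi\mapsto\psi+\pi$, and the disposal of the translational block are all correct, and your route is genuinely different from the paper's: the paper never analyses the exclusion normal directly, but instead uses a dynamical non-penetration argument to convert the annihilation condition $\widehat{\mu}_{\beta}\cdot W\equiv 0$ into the condition $\xi^{\ov{\vartheta}}_{\vartheta}(\psi)\cdot W\equiv 0$ on vectors built from the \emph{contact} normal $n^{\ov{\vartheta}}_{\vartheta}$, and then exhibits four explicit angles whose $\xi$-vectors are claimed to form a basis. However, your argument is not a proof. The decisive step --- linear independence of $\psi\mapsto\partial_{\psi}d^{\ov{\vartheta}}_{\vartheta}$ and $\psi\mapsto(\partial_{\theta}d_{\ov{\vartheta}-\vartheta})(\cdot-\vartheta)$, which you correctly reduce to linear independence of $h'(\alpha)$ and $h'(\alpha-\theta)$ --- is precisely the point you defer (``once this is settled''), and nothing in the hypotheses settles it. That is a genuine gap, not a loose end.

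Worse, the gap cannot be closed, because the degenerate orientations you would need to exclude do occur. The proposition quantifies over \emph{all} $(\vartheta, \ov{\vartheta})\in\mathbb{T}^{2}$, and at $\ov{\vartheta}=\vartheta$ one has $h'(\alpha)\equiv h'(\alpha-\theta)$ trivially, hence $\partial_{\psi}d=-2\partial_{\theta}d$ and the third and fourth coordinates of $\widehat{\mu}_{\beta}$ coincide for every $\psi$; the vector $W=(0,0,1,-1)$ then annihilates the whole family, so the span is at most three-dimensional for every non-disk $\mathsf{P}_{\ast}\in\mathcal{P}(\mathbb{Z}_{2}^{2})$. The geometric reason is transparent: the excluded region $R(\vartheta+\delta)\mathsf{P}_{\ast}\oplus\left(-R(\vartheta-\delta)\mathsf{P}_{\ast}\right)$ has support function $h(\alpha-\vartheta-\delta)+h(\alpha-\vartheta+\delta)$, which is even in $\delta$, so the counter-rotating velocity $v=\ov{v}=0$, $\omega=-\ov{\omega}$ is tangent to the constraint for every $\psi$ (equivalently, at equal orientations the contact point is the midpoint $p=\tfrac{1}{2}d\,e(\psi)$, whence $q=-p$ and the last two entries of $\xi^{\ov{\vartheta}}_{\vartheta}$ are equal as well). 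Further resonant pairs arise for particular bodies, e.g.\ $h=1+\varepsilon\cos 2\alpha$ at $\theta=\pi/2$, where $\mathcal{C}^{\ov{\vartheta}}_{\vartheta}$ is a circle. So your reduction, far from being completable, actually exposes that the statement as quantified fails at these orientation pairs; note that the paper's own argument stumbles on exactly the same configurations, since at $\ov{\vartheta}=\vartheta$ its auxiliary ratio $p^{\perp}\cdot n/(d\,e(\psi)^{\perp}\cdot n)$ is identically $\tfrac{1}{2}$ rather than non-constant, so the required angles $\psi_{3}, \psi_{4}$ with distinct ratios do not exist. Any repair must either excise the exceptional orientation pairs from the claim or show they are too sparse to affect the characterisation of collision invariants downstream.
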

\begin{proof}
Let $(\vartheta, \ov{\vartheta})\in\mathbb{T}^{2}$ be given. We suppose, for a contradiction, that there exists a vector $W(\vartheta, \ov{\vartheta})\in \mathbb{R}^{4}\setminus\{0\}$, written componentwise as 
\begin{equation*}
W(\vartheta, \ov{\vartheta})=(w_{1}(\vartheta, \ov{\vartheta}), w_{2}(\vartheta, \ov{\vartheta}), w_{3}(\vartheta, \ov{\vartheta}), w_{4}(\vartheta, \ov{\vartheta})),
\end{equation*}
such that 
\begin{equation}\label{normass}
\widehat{\mu}_{\beta}\cdot W(\vartheta, \ov{\vartheta})=0 \quad \text{for all}\hspace{2mm} \psi\in\mathbb{S}^{1}.
\end{equation}
However, we note that this is equivalent to the statement that
\begin{equation*}
\gamma_{\beta}\cdot V(\vartheta, \ov{\vartheta})=0\quad\text{for all}\hspace{2mm}\psi\in\mathbb{S}^{1}, 
\end{equation*}
where
\begin{equation*}
V(\vartheta, \ov{\vartheta}):=\left(0, 0, -\sqrt{2/m}w_{1}, -\sqrt{2/m}w_{2}, \sqrt{1/J}w_{3}, \sqrt{1/J}w_{4}\right).
\end{equation*}
Importantly, assumption \eqref{normass} implies that
\begin{equation*}
\bigcap_{\psi\in\mathbb{S}^{1}}\Sigma_{\beta}^{0}\neq\{0\},
\end{equation*}
where $\Sigma_{\beta}^{0}=\Sigma_{\beta}^{-}\cap\Sigma_{\beta}^{+}$. In particular, there is at least one non-zero velocity vector $V(\vartheta, \ov{\vartheta})$ which is {\em both} pre- and post-collisional for {\em every} choice of elevation angle $\psi\in\mathbb{S}^{1}$. As the notion of pre- and post-collisional velocities is inherently dynamic, we must now appeal to the existence results established in section \ref{connydym}.

Let us consider the following 1-parameter family of initial data $Z_{0}(\psi)=[z_{0}, \ov{z}_{0}(\psi)]$ (parameterised by $\psi\in\mathbb{S}^{1}$) for the ODEs presented in section \ref{dynam}, where
\begin{equation*}
z_{0}=[0, \vartheta, 0, \omega] \quad \text{and} \quad \ov{z}_{0}(\psi)=[d^{\ov{\vartheta}}_{\vartheta}(\psi)e(\psi), \ov{\vartheta}, \ov{v}, \ov{\omega}].
\end{equation*}
with
\begin{displaymath}
\ov{v}:=-\sqrt{\frac{2}{m}}\left(
\begin{array}{c}
w_{1} \\ w_{2}
\end{array}
\right), \quad \omega:=-\sqrt{\frac{1}{J}}w_{3}\quad \text{and}\quad \ov{\omega}:=-\sqrt{\frac{1}{J}}w_{4}.
\end{displaymath}
It then follows that for the initial data $[z_{0}, \ov{z}_{0}(\psi)]$ and their associated phase trajectories $t\mapsto x_{\psi}(t)$, $t\mapsto\vartheta_{\psi}(t)$ and $t\mapsto\ov{x}_{\psi}(t)$, $t\mapsto \ov{\vartheta}_{\psi}(t)$ (which are smooth, by the results in \cite{ballard}) there exists $\delta>0$ independent of $\psi$ such that
\begin{equation}\label{psicond}
F(x_{\psi}(t), \ov{x}_{\psi}(t), \vartheta_{\psi}(t), \ov{\vartheta}_{\psi}(t))\geq 0 \quad \text{for all}\hspace{2mm}-\delta<t<\delta.
\end{equation}
We use this deduction to derive our contradiction by reducing our considerations to properties of the motion of the point of contact on particle $\ov{\mathsf{P}}$ both before and after collision. In the sequel, we often suppress the dependence on $\vartheta, \ov{\vartheta}$ for all relevant quantities of interest, in order to make the presentation of our arguments clearer.

We perform a time-dependent change of variables so that particle $\mathsf{P}$ is stationary for all time, and the dynamics of $\ov{\mathsf{P}}$ takes place in the exterior domain $\ov{\mathbb{R}^{2}\setminus\mathsf{P}}$. It will be convenient to take the view of material point trajectories which evolve on the particles $\mathsf{P}$ and $\ov{\mathsf{P}}$. Firstly, let $X_{\mathsf{P}}(t; x_{0})$ denote the position of the point on particle $\mathsf{P}$ at time $t\in\mathbb{R}$ whose initial position at time $t=0$ is $x_{0}$, namely
\begin{equation*}
X_{\mathsf{P}}(t; x_{0}):=R(\omega t)x_{0}
\end{equation*}
for any $x_{0}\in\mathsf{P}$. Similarly, let $X_{\ov{\mathsf{P}}}(t; x_{0})$ denote the position of the analogous point on particle $\ov{\mathsf{P}}$ at time $t\in\mathbb{R}$, i.e.
\begin{equation*}
X_{\ov{\mathsf{P}}}(t; x_{0}):=R(\ov{\omega}t)\left(x_{0}-d(\psi)e(\psi)\right)+d(\psi)e(\psi)+\ov{v}t,
\end{equation*}
for any $ x_{0}\in\ov{\mathsf{P}}$. Transforming to the time-dependent reference frame from which $\mathsf{P}$ is viewed as stationary, $X_{\mathsf{P}}(t; x_{0})\mapsto \widetilde{X_{\mathsf{P}}}(t, x_{0})$ and $X_{\ov{\mathsf{P}}}(t; x_{0})\mapsto \widetilde{X_{\ov{\mathsf{P}}}}(t; x_{0})$, where
\begin{equation*}
\widetilde{X_{\mathsf{P}}}(t; x_{0})=x_{0} \quad \text{for}\hspace{2mm}x_{0}\in R(\vartheta)\mathsf{P}_{\ast},
\end{equation*}
for all time $t\in\mathbb{R}$, and
\begin{equation*}
\widetilde{X_{\ov{\mathsf{P}}}}(t; x_{0})=R((\ov{\omega}-\omega)t)\left(x_{0}-d(\psi)e(\psi)\right)+R(-\omega t)\left(d(\psi)e(\psi)+\ov{v}t\right),
\end{equation*}
for $x_{0}\in R(\ov{\vartheta}(t))\mathsf{P}_{\ast}+\ov{v}(t)$. As such, we may conveniently view the motion of individual points on the particle $\ov{\mathsf{P}}$ as taking place in the exterior domain $\mathbb{R}^{2}\setminus R(\vartheta)\mathsf{P}_{\ast}$. In order to derive our contradiction, namely that $\cap_{\psi\in\mathbb{S}^{1}}\Sigma_{\beta}^{0}$ must indeed be the singleton $\{0\}$, we focus our attention on the trajectory of the point of collision which lies on particle $\ov{\mathsf{P}}(t)$. For the $C^{1}(-\delta, \delta)$ trajectory $t\mapsto \widetilde{X_{\ov{\mathsf{P}}}}(t; p(\psi))$ to satisfy
\begin{equation*}
\left\{\widetilde{X_{\ov{\mathsf{P}}}}(t; p(\psi))\,:\,t\in (-\delta, \delta)\right\}\subset \overline{\mathbb{R}^{2}\setminus R(\vartheta)\mathsf{P}_{\ast}} \quad\text{for all}\hspace{2mm}\psi\in\mathbb{S}^{1},
\end{equation*}
it is necessary that the normal component of the curve $\{\widetilde{X_{\ov{\mathsf{P}}}}(t; p(\psi))\,:\,t\in (-\delta, \delta)\}$ vanish at $t=0$, i.e.
\begin{equation*}
\frac{d}{dt}\widetilde{X_{\ov{\mathsf{P}}}}(t; p(\psi))\bigg|_{t=0}\cdot n(\psi)=0 \quad \text{for all} \hspace{2mm} \psi\in\mathbb{S}^{1}.
\end{equation*}
A calculation reveals that this holds if and only if 
\begin{equation}\label{contra}
\xi(\psi)\cdot W=0 \quad \text{for all}\hspace{2mm}\psi\in\mathbb{S}^{1},
\end{equation}
where $\xi=\xi^{\ov{\vartheta}}_{\vartheta}(\psi)\in\mathbb{R}^{4}$ is given by
\begin{displaymath}
\xi^{\ov{\vartheta}}_{\vartheta}(\psi):=\left[
\begin{array}{c}
n^{\ov{\vartheta}}_{\vartheta}(\psi) \\
-p^{\ov{\vartheta}}_{\vartheta}(\psi)^{\perp}\cdot n^{\ov{\vartheta}}_{\vartheta}(\psi) \\
\left(p^{\ov{\vartheta}}_{\vartheta}(\psi)-d^{\ov{\vartheta}}_{\vartheta}(\psi)e(\psi)\right)^{\perp}\cdot n^{\ov{\vartheta}}_{\vartheta}(\psi)
\end{array}
\right].
\end{displaymath}
We now show that the linear span of the set $\left\{\xi^{\ov{\vartheta}}_{\vartheta}(\psi)\,:\,\psi\in\mathbb{S}^{1}\right\}$ is the whole space $\mathbb{R}^{4}$ for any choice of $(\vartheta, \ov{\vartheta})\in\mathbb{T}^{2}$, which implies that $W\in\mathbb{R}^{4}$ must indeed be the zero vector by \eqref{contra} above. We require the result of the following simple lemma. 
\begin{lem}
Suppose $\mathsf{P}_{\ast}\in\mathcal{P}(\mathbb{Z}^{2}_{2})$. There exist at least two angles $\psi_{1}=\psi_{1}(\vartheta, \ov{\vartheta}), \psi_{2}=\psi_{2}(\vartheta, \ov{\vartheta})\in\mathbb{S}^{1}$ such that $p^{\ov{\vartheta}}_{\vartheta}(\psi_{i})^{\perp}\cdot n^{\ov{\vartheta}}_{\vartheta}(\psi_{i})=0$.
\end{lem}
\begin{proof}
We recall that one axis of symmetry of $\mathsf{P}_{\ast}$ lies along the $x$-axis, and the other lies along the $y$-axis. We denote by $\delta_{x}>0$ and $\delta_{y}>0$ the largest positive values of the $x$- and $y$-co-ordinates that lie on these axes of symmetry, respectively. Consider the angle $\psi_{1}=\psi_{1}(\vartheta, \ov{\vartheta})\in\mathbb{S}^{1}$ that gives rise to the point $p^{\ov{\vartheta}}_{\vartheta}(\psi_{1})=R(\vartheta)(\delta_{x}, 0)$ and the associated normal vector $n^{\ov{\vartheta}}_{\vartheta}(\psi_{1})$ to $\mathsf{P}$ at $p^{\ov{\vartheta}}_{\vartheta}(\psi_{1})$. Since the reference particle $\mathsf{P}_{\ast}$ has $\mathbb{Z}_{2}\times\mathbb{Z}_{2}$ symmetry, it follows that $R(\vartheta)K_{1}R(\vartheta)^{T}\mathsf{P}=\mathsf{P}$. Moreover, as $\partial\mathsf{P}_{\ast}$ is of class $C^{\omega}$ and so the outward normal at $p^{\ov{\vartheta}}_{\vartheta}(\psi_{1})$ is unique, it follows that $n^{\ov{\vartheta}}_{\vartheta}(\psi_{1})=R(\vartheta)(1, 0)$, whence $p^{\ov{\vartheta}}_{\vartheta}(\psi_{1})^{\perp}\cdot n^{\ov{\vartheta}}_{\vartheta}(\psi_{1})=0$. The other case follows by considering $\psi_{2}=\vartheta+\pi/2$, and arguing similarly by using the fact that $R(\vartheta)K_{2}R(\vartheta)^{T}\mathsf{P}=\mathsf{P}$.
\end{proof}
We now make the following four judicious choices of the angle of elevation $\psi\in\mathbb{S}^{1}$ to produce vectors $\{\xi_{1}, \xi_{2}, \xi_{3}, \xi_{4}\}$ which are candidates for a basis. Using the result of the above lemma, we choose $\psi_{1}\in\mathbb{S}^{1}$ with the property that $p^{\ov{\vartheta}}_{\vartheta}(\psi_{1})=R(\vartheta)(\delta_{x}, 0)$ and $p^{\ov{\vartheta}}_{\vartheta}(\psi_{1})^{\perp}\cdot n^{\ov{\vartheta}}_{\vartheta}(\psi_{1})=0$, which yields the vector $\xi_{1}:=\xi^{\ov{\vartheta}}_{\vartheta}(\psi_{1})$ given by
\begin{displaymath}
\xi_{1}=\mathsf{Q}_{\vartheta}\left(
\begin{array}{c}
1 \\
0 \\
0 \\
d^{\ov{\vartheta}}_{\vartheta}(\psi_{1})\sin\psi_{1}
\end{array}
\right),
\end{displaymath}
where $\mathsf{Q}_{\vartheta}\in\mathrm{O}(4)$ is the rotation matrix
\begin{displaymath}
\mathsf{Q}_{\vartheta}:=\left(
\begin{array}{cccc}
\cos\vartheta & -\sin\vartheta & 0 & 0 \\
\sin\vartheta & \cos\vartheta & 0 & 0 \\
0 & 0 & 1 & 0 \\
0 & 0 & 0 & 1
\end{array}
\right).
\end{displaymath}
Choosing $\psi_{2}=\psi_{1}+\pi/2$ and following similar reasoning, we yield $\xi_{2}=\xi^{\ov{\vartheta}}_{\vartheta}(\psi_{2})$ given by
\begin{displaymath}
\xi_{2}=\mathsf{Q}_{\vartheta}\left(
\begin{array}{c}
0 \\
1 \\
0 \\
d^{\ov{\vartheta}}_{\vartheta}(\psi_{1}+\frac{\pi}{2})\sin\psi_{1}
\end{array}
\right).
\end{displaymath}
Next, we choose any $\psi_{3}$ with $\psi_{1}<\psi_{3}<\psi_{2}$ satisfying the property that
\begin{displaymath}
\left(
\begin{array}{c}
d^{\ov{\vartheta}}_{\vartheta}(\psi_{1})\sin\psi_{1} \\
d^{\ov{\vartheta}}_{\vartheta}(\psi_{1}+\frac{\pi}{2})\sin\psi_{1}
\end{array}
\right)\cdot n_{\ov{\vartheta}-\vartheta}(\psi_{3}-\vartheta)\neq 0,
\end{displaymath}
together with $p^{\ov{\vartheta}}_{\vartheta}(\psi_{3})^{\perp}\cdot n^{\ov{\vartheta}}_{\vartheta}(\psi_{3})\neq 0$; we note that this is always possible since $p^{\ov{\vartheta}}_{\vartheta}(\psi)^{\perp}\cdot n^{\ov{\vartheta}}_{\vartheta}(\psi)=0$ for all $\psi$ satisfying $\psi_{1}<\psi<\psi_{2}+\pi/2$ if and only if $\mathsf{P}_{\ast}$ is a disk. Indeed, for such a $\psi_{3}\in\mathbb{S}^{1}$, we set $\xi_{3}:=\xi^{\ov{\vartheta}}_{\vartheta}(\psi_{3})$, where
\begin{displaymath}
\xi_{3}:=\mathsf{Q}_{\vartheta}\left(
\begin{array}{c}
n_{\ov{\vartheta}-\vartheta}(\psi_{3}-\vartheta)_{1} \\
n_{\ov{\vartheta}-\vartheta}(\psi_{3}-\vartheta) _{2} \\
-p^{\ov{\vartheta}}_{\vartheta}(\psi_{3})^{\perp}\cdot n^{\ov{\vartheta}}_{\vartheta}(\psi_{3}) \\
\left(p^{\ov{\vartheta}}_{\vartheta}(\psi_{3}) -d^{\ov{\vartheta}}_{\vartheta}(\psi_{3}) e(\psi_{3})\right)\cdot n^{\ov{\vartheta}}_{\vartheta}(\psi_{3}) 
\end{array}
\right).
\end{displaymath}
Finally, we choose $\psi_{4}=\psi_{3}+\pi$ and set $\xi_{4}=\xi^{\ov{\vartheta}}_{\vartheta}(\psi_{4})$, which yields by symmetry that
\begin{displaymath}
\xi_{4}:=\mathsf{Q}_{\vartheta}\left(
\begin{array}{c}
-n_{\ov{\vartheta}-\vartheta}(\psi_{3}-\vartheta)_{1} \\
-n_{\ov{\vartheta}-\vartheta}(\psi_{3}-\vartheta) _{2} \\
-p^{\ov{\vartheta}}_{\vartheta}(\psi_{3})^{\perp}\cdot n^{\ov{\vartheta}}_{\vartheta}(\psi_{3}) \\
\left(p^{\ov{\vartheta}}_{\vartheta}(\psi_{3}) -d^{\ov{\vartheta}}_{\vartheta}(\psi_{3}) e(\psi_{3})\right)\cdot n^{\ov{\vartheta}}_{\vartheta}(\psi_{3}) 
\end{array}
\right).
\end{displaymath}
With these observations in place, we approach the following lemma.
\begin{lem}\label{basisone}
The set $\{\xi_{i}\}_{i=1}^{4}$ is a basis for $\mathbb{R}^{4}$ if and only if $\sin\psi_{1}\neq 0$.
\end{lem}
\begin{proof}
We need only show that $\{\xi_{1}', \xi_{2}', \xi_{3}', \xi_{4}'\}$ is a basis for $\mathbb{R}^{4}$ when $\sin\psi_{1}\neq 0$, where $\xi_{j}':=\mathsf{Q}_{\vartheta}^{T}\xi_{j}$. Evidently, $\{\xi_{1}', \xi_{2}', \xi_{3}'\}$ is a set of linearly independent vectors. Assume for the moment there exist constants $(c_{1}, c_{2}, c_{3})\in\mathbb{R}^{3}\setminus\{0\}$ such that
\begin{equation*}
\xi_{4}'=c_{1}\xi_{1}'+c_{2}\xi_{2}'+c_{3}\xi_{3}'.
\end{equation*}
By necessity, $c_{3}=1$, since $\psi_{3}\in\mathbb{S}^{1}$ was chosen so that $p^{\ov{\vartheta}}_{\vartheta}(\psi_{3})^{\perp}\cdot n^{\ov{\vartheta}}_{\vartheta}(\psi_{3})\neq 0$. This immediately yields that $c_{1}=-2n_{\ov{\vartheta}-\vartheta}(\psi_{3}-\vartheta)_{1}$ and $c_{2}=-2n_{\ov{\vartheta}-\vartheta}(\psi_{3}-\vartheta)_{2}$. However, with these values of constants $c_{i}$ it must be that
\begin{displaymath}
\left(
\begin{array}{c}
d^{\ov{\vartheta}}_{\vartheta}(\psi_{1})\sin\psi_{1} \\
d^{\ov{\vartheta}}_{\vartheta}(\psi_{1}+\frac{\pi}{2})\sin\psi_{1}
\end{array}
\right)\cdot n_{\ov{\vartheta}-\vartheta}(\psi_{3}-\vartheta)= 0,
\end{displaymath}
which contradicts the properties of the elevation angle $\psi_{3}$. Thus, $\xi_{4}'$ cannot be a linear combination of $\xi_{1}', \xi_{2}', \xi_{3}'$, and so the set $\{\xi_{i}'\}_{i=1}^{4}$ constitutes a basis for $\mathbb{R}^{4}$ in the case where $\sin\psi_{1}\neq 0$. 
\end{proof}
To conclude the proof of the proposition, we 	need to consider the construction of another basis in the case when $\sin\psi_{1}=0$. To do this, we consider the auxiliary function on $\mathbb{S}^{1}$ given by the rule
\begin{equation*}
\psi\mapsto \frac{p_{\ov{\vartheta}-\vartheta}(\psi)^{\perp}\cdot n_{\ov{\vartheta}-\vartheta}(\psi)}{d_{\ov{\vartheta}-\vartheta}(\psi)e(\psi)^{\perp}\cdot n_{\ov{\vartheta}-\vartheta}(\psi)}.
\end{equation*}
Notably, this function vanishes when $\psi=\psi_{1}$ or $\psi=\psi_{2}$. Importantly, the numerator and denominator are both identically zero for all $\psi\in\mathbb{S}^{1}$ if and only if $\mathsf{P}_{\ast}$ is a disk. Since, by assumption, $\mathsf{P}_{\ast}$ is not a disk and its boundary $\partial\mathsf{P}_{\ast}$ is $C^{\omega}$, this function is non-constant and smooth away from those points where the denominator vanishes. We therefore choose any two distinct $\psi_{3}, \psi_{4}\in\mathbb{S}^{1}$ with the property that $p^{\ov{\vartheta}}_{\vartheta}(\psi_{i})^{\perp}\cdot n^{\ov{\vartheta}}_{\vartheta}(\psi_{i})\neq 0$ for $i=3, 4$ and 
\begin{equation*}
\frac{p_{\ov{\vartheta}-\vartheta}(\psi_{3})^{\perp}\cdot n_{\ov{\vartheta}-\vartheta}(\psi_{3})}{d_{\ov{\vartheta}-\vartheta}(\psi_{3})e(\psi_{3})^{\perp}\cdot n_{\ov{\vartheta}-\vartheta}(\psi_{3})}\neq \frac{p_{\ov{\vartheta}-\vartheta}(\psi_{4})^{\perp}\cdot n_{\ov{\vartheta}-\vartheta}(\psi_{4})}{d_{\ov{\vartheta}-\vartheta}(\psi_{4})e(\psi_{4})^{\perp}\cdot n_{\ov{\vartheta}-\vartheta}(\psi_{4})}.
\end{equation*}
Using this observation, it follows from an argument identical to that found in the proof of lemma \ref{basisone} that the family $\{\xi_{1}, \xi_{2}, \widetilde{\xi_{3}}, \widetilde{\xi_{4}}\}$ constitutes a basis for $\mathbb{R}^{4}$, where
\begin{displaymath}
\widetilde{\xi_{3}}:=\mathsf{Q}_{\vartheta}\left(
\begin{array}{c}
n_{\ov{\vartheta}-\vartheta}(\psi_{3})_{1} \\
n_{\ov{\vartheta}-\vartheta}(\psi_{3})_{2} \\
-p_{\ov{\vartheta}-\vartheta}(\psi_{3})^{\perp}\cdot n_{\ov{\vartheta}-\vartheta}(\psi_{3}) \\
q_{\ov{\vartheta}-\vartheta}(\psi_{3})^{\perp}\cdot n_{\ov{\vartheta}-\vartheta}(\psi_{3})
\end{array}
\right), \quad \widetilde{\xi_{4}}:=\mathsf{Q}_{\vartheta}\left(
\begin{array}{c}
n_{\ov{\vartheta}-\vartheta}(\psi_{4})_{1} \\
n_{\ov{\vartheta}-\vartheta}(\psi_{4})_{2} \\
-p_{\ov{\vartheta}-\vartheta}(\psi_{4})^{\perp}\cdot n_{\ov{\vartheta}-\vartheta}(\psi_{4}) \\
q_{\ov{\vartheta}-\vartheta}(\psi_{4})^{\perp}\cdot n_{\ov{\vartheta}-\vartheta}(\psi_{4})
\end{array}
\right).
\end{displaymath}
Thus, we have shown that the span of the set $\{\widehat{\mu}_{\beta}(\psi)\,:\,\psi\in\mathbb{S}^{1}\}$ is indeed $\mathbb{R}^{4}$, which completes the proof of the proposition.
\end{proof}
We conclude by noticing that by \textsc{Theorem} \ref{claude} the group $H^{\ov{\vartheta}}_{\vartheta}$ acts transitively on $\mathbb{S}^{3}$ for every $\theta\in\mathbb{S}^{1}$, which immediately yields that $\mathcal{G}^{\ov{\vartheta}}_{\vartheta}$ acts transitively on energy momentum manifolds for {\em any} choice of orientation pair $(\vartheta, \ov{\vartheta})\in\mathbb{T}^{2}$. As a result, there exists a measurable map $\widetilde{\Phi_{\varphi}}$ such that
\begin{equation*}
\Phi_{\varphi}(V; \vartheta, \ov{\vartheta})=\widetilde{\Phi_{\varphi}}(mv+m\ov{v}, m|v|^{2}+J\omega^{2}+m|\ov{v}|^{2}+J\ov{\omega}^{2}).
\end{equation*}
We now prove that this representation formula implies that collision invariants $\varphi$ are necessarily of the form
\begin{equation*}
\varphi(v, \omega, \vartheta)=a(\vartheta)+b\cdot v+c\left(m|v|^{2}+J\omega^{2}\right),
\end{equation*}
for any constants $b_{1}, b_{2}, c\in\mathbb{R}$ and any function $a:\mathbb{S}^{1}\rightarrow\mathbb{R}$. To do this, we appeal to classical results on Cauchy's functional equation.
\begin{rem}
We believe that proposition \ref{spanny} holds true for an arbitrary compact, strictly convex reference particle $\mathsf{P}_{\ast}$ in $\mathbb{R}^{2}$ with $C^{\omega}$ boundary, although we have chosen not to explore this particular extension of proposition \ref{spanny}. 
\end{rem}
\begin{rem}
As one need not appeal to dynamical considerations in this case, the proof of proposition \ref{spanny} also holds for the family of matrices $\{M^{-1}(I-2\widehat{\eta}_{\beta}\otimes\widehat{\eta}_{\beta})M\}_{\beta\in\mathbb{T}^{3}}$ when the boundary curve $\partial\mathsf{P}_{\ast}$ of the associated reference particle $\partial\mathsf{P}_{\ast}$ is only of class $C^{1}$, as opposed to analytic.
\end{rem}
\subsection{Cauchy's Functional Equation}\label{itsdone}
The last remaining step in the proof of \textsc{Theorem} \ref{awesome} is proving the following proposition.
\begin{prop}\label{asscon}
Let $\mathsf{e}>0$ and $\mathsf{p}\in\mathbb{R}^{2}$ be such that $\mathsf{e}^{2}>|\mathsf{p}|^{2}/2m$, and let $\varphi$ be a collision invariant. Suppose that $\Phi_{\varphi}(\cdot; \vartheta, \ov{\vartheta})|_{\mathsf{M}(\mathsf{e}, \mathsf{p})}$ is a constant function. Then $\varphi$ is necessarily of the form
\begin{equation*}
\varphi(v, \omega, \vartheta)=a(\vartheta)+b\cdot v+c\left(m|v|^{2}+J\omega^{2}\right) \quad \text{for}\hspace{2mm}V\in\mathbb{R}^{6},
\end{equation*}
for constants $b_{1}, b_{2}, c\in\mathbb{R}$ and a function of orientation $a:\mathbb{S}^{1}\rightarrow\mathbb{R}$.
\end{prop}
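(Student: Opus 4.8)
The plan is to peel the functional identity apart in stages and finish with two applications of the classical characterisation of measurable solutions of Cauchy's equation quoted earlier in this article. We start from the representation formula obtained just above the statement: since $\Phi_{\varphi}(\cdot;\vartheta,\ov{\vartheta})$ is constant on every energy--momentum manifold, there is a measurable $\widetilde{\Phi_{\varphi}}:\mathbb{R}^{2}\times\mathbb{R}\to\mathbb{R}$ with
\[
\varphi_{0}(v,\omega,\vartheta)+\varphi_{0}(\ov{v},\ov{\omega},\ov{\vartheta})=\widetilde{\Phi_{\varphi}}\bigl(m(v+\ov{v}),\,m|v|^{2}+J\omega^{2}+m|\ov{v}|^{2}+J\ov{\omega}^{2}\bigr)
\]
for all $V=[v,\ov{v},\omega,\ov{\omega}]\in\mathbb{R}^{6}$ and $(\vartheta,\ov{\vartheta})\in\mathbb{T}^{2}$, where $\varphi_{0}(v,\omega,\vartheta):=\varphi(v,\omega,\vartheta)-\varphi(0,0,\vartheta)$. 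The first step is to specialise to $\ov{v}=0$, $\ov{\omega}=0$; since $\varphi_{0}(0,0,\ov{\vartheta})=0$ this gives $\varphi_{0}(v,\omega,\vartheta)=\widetilde{\Phi_{\varphi}}(mv,\,m|v|^{2}+J\omega^{2})$, which in particular shows $\varphi_{0}$ does not depend on $\vartheta$. Writing $g:=\widetilde{\Phi_{\varphi}}$, $\xi:=mv$ and $E:=m|v|^{2}+J\omega^{2}$, so that $(\xi,E)$ runs over $\{E\ge|\xi|^{2}/m\}$ as $(v,\omega)$ runs over $\mathbb{R}^{2}\times\mathbb{R}$, the identity becomes the Cauchy-type relation
\[
g(\xi,E_{1})+g(\ov{\xi},E_{2})=g(\xi+\ov{\xi},\,E_{1}+E_{2}),
\]
valid whenever $E_{1}\ge|\xi|^{2}/m$ and $E_{2}\ge|\ov{\xi}|^{2}/m$.

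Next I would isolate the energy dependence. Taking $\xi=\ov{\xi}=0$ reduces the relation to $g(0,E_{1})+g(0,E_{2})=g(0,E_{1}+E_{2})$ for $E_{1},E_{2}\ge 0$; a standard argument (odd extension to $\mathbb{R}$) reduces this to the measurable case of Cauchy's equation on the line and gives $g(0,E)=cE$ for some constant $c\in\mathbb{R}$. Then, for fixed $\xi$, choosing $\ov{\xi}=0$, $E_{1}=|\xi|^{2}/m$ and $E_{2}=E-|\xi|^{2}/m\ge 0$ yields $g(\xi,E)=g(\xi,|\xi|^{2}/m)+c\bigl(E-|\xi|^{2}/m\bigr)$ for every $E\ge|\xi|^{2}/m$; hence $g(\xi,E)=cE+\tilde{g}(\xi)$ on that half-line, where $\tilde{g}(\xi):=g(\xi,|\xi|^{2}/m)-c|\xi|^{2}/m$ is measurable (a composition of $g$ with a continuous map) and satisfies $\tilde{g}(0)=0$.

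The final step is to feed $g(\xi,E)=cE+\tilde{g}(\xi)$ back into the Cauchy-type relation. For arbitrary $\xi,\ov{\xi}\in\mathbb{R}^{2}$ I would choose $E_{1}:=\max\{|\xi|^{2}/m,\,|\xi+\ov{\xi}|^{2}/m\}$ and $E_{2}:=|\ov{\xi}|^{2}/m$, so that $E_{1}\ge|\xi|^{2}/m$, $E_{2}\ge|\ov{\xi}|^{2}/m$ and $E_{1}+E_{2}\ge|\xi+\ov{\xi}|^{2}/m$; substituting the affine form and cancelling the common $c(E_{1}+E_{2})$ term leaves $\tilde{g}(\xi)+\tilde{g}(\ov{\xi})=\tilde{g}(\xi+\ov{\xi})$ on all of $\mathbb{R}^{2}$. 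By the measurable case of Cauchy's functional equation on $\mathbb{R}^{2}$ (the result recalled from \textsc{Kuczma} \cite{MR2467621}), $\tilde{g}(\xi)=b\cdot\xi$ for some $b\in\mathbb{R}^{2}$. Unwinding the substitutions, $\varphi_{0}(v,\omega,\vartheta)=g(mv,\,m|v|^{2}+J\omega^{2})=c\,(m|v|^{2}+J\omega^{2})+b\cdot(mv)$, and with $a(\vartheta):=\varphi(0,0,\vartheta)$ (measurable) and $b$ relabelled to absorb the factor $m$ we obtain $\varphi(v,\omega,\vartheta)=a(\vartheta)+b\cdot v+c(m|v|^{2}+J\omega^{2})$, as claimed.

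The two Cauchy-equation invocations are routine; the point requiring care — and the main obstacle — is the bookkeeping of the admissibility constraints $E_{i}\ge|\xi_{i}|^{2}/m$, i.e. verifying at each stage that the triples fed into $g$ actually lie in the range of the energy--momentum parametrisation. This is precisely why $E_{1}$ must be taken as a maximum in the last step rather than the naive value $|\xi|^{2}/m$, since $|\xi|^{2}/m+|\ov{\xi}|^{2}/m$ need not dominate $|\xi+\ov{\xi}|^{2}/m$.
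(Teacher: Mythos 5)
Your proof is correct, and the second half takes a genuinely different route from the paper's. After the common opening (passing to the representation $\varphi_{0}(v,\omega,\vartheta)=\widetilde{\Phi_{\varphi}}(mv,\,m|v|^{2}+J\omega^{2})$ and extracting $g(0,E)=cE$ from Cauchy's equation on the half-line), the paper sets $\omega=\ov{\omega}=0$ immediately and works only on the paraboloid $\{(v,|v|^{2})\}$: it evaluates the identity at $\ov{v}=-v$ and at orthogonal pairs, and then invokes the Truesdell--Muncaster lemma that a measurable odd function additive on \emph{orthogonal} pairs of vectors is linear. You instead keep the angular degree of freedom alive, use it to show $g(\xi,E)=cE+\tilde{g}(\xi)$ on the admissible half-line $E\geq|\xi|^{2}/m$, and then recover \emph{unrestricted} additivity of $\tilde{g}$ on $\mathbb{R}^{2}$, so that only the standard measurable Cauchy equation in $\mathbb{R}^{2}$ is needed. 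Your choice $E_{1}=\max\{|\xi|^{2}/m,\,|\xi+\ov{\xi}|^{2}/m\}$ correctly handles the admissibility of all three slots, and this is the step where your argument buys something: it makes explicit the extension of $\widetilde{\Phi_{\varphi}}$ off the paraboloid (which the paper leaves implicit in its final line, where it evaluates $\Psi_{\varphi}(v,|v|^{2}+\frac{J}{m}\omega^{2})$ using a formula established only at $E=|v|^{2}$), and it avoids the orthogonal-pairs lemma altogether. One small repair: justify the measurability of $\tilde{g}$ not as ``a composition of $g$ with a continuous map'' (a Lebesgue-measurable function composed with a continuous one need not be measurable) but directly via $\tilde{g}(\xi)=\varphi_{0}(\xi/m,0,\vartheta)-c|\xi|^{2}/m$, which is measurable because $\varphi$ is.
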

\begin{proof}
The main idea of the proof is to transform identity \eqref{collinv} for collision invariants into Cauchy's well-known functional equation for a real-valued function $g$ on $\mathbb{R}$, namely
\begin{equation}\label{cfe}
g(x)+g(y)=g(x+y) \quad \text{for}\hspace{2mm} x, y\in\mathbb{R}.
\end{equation}
It is well known (see \textsc{Darboux} \cite{darb}) that under the assumption $g$ be continuous at a single point of $\mathbb{R}$, the only possible solutions of \eqref{cfe} are {\em linear} functions $g(z)=cz$, where $c\in\mathbb{R}$. Since we assume $\varphi$ to be measurable, Lusin's theorem immediately gives us enough continuity of $\varphi$ on $\mathbb{R}^{3}\times\mathbb{S}^{1}$ for the following arguments to be valid. Indeed, since \eqref{collinv} is equivalent to identity \eqref{der}, we notice that if $\Phi_{\varphi}(\cdot; \vartheta, \ov{\vartheta})|_{\mathsf{M}(\mathsf{e}, \mathsf{p})}$ is constant then $\Phi_{\varphi}$ is necessarily of the form
\begin{equation*}
\Phi_{\varphi}(V; \vartheta, \ov{\vartheta})=\widetilde{\Phi_{\varphi}}(mv+m\ov{v}, m|v|^{2}+J\omega^{2}+m|\ov{v}|^{2}+J\ov{\omega}^{2}; \vartheta, \ov{\vartheta})
\end{equation*}
for some measurable auxiliary function $\widetilde{\Phi_{\varphi}}$. Since it then holds by definition of $\Phi_{\varphi}$ that
\begin{equation*}
\widetilde{\Phi_{\varphi}}(mv+m\ov{v}, m|v|^{2}+J\omega^{2}+m|\ov{v}|^{2}+J\ov{\omega}^{2}; \vartheta, \ov{\vartheta})=\varphi_{0}(v, \omega, \vartheta)+\varphi_{0}(\ov{v}, \ov{\omega}, \ov{\vartheta}),
\end{equation*}
setting $\ov{v}=0$ and $\ov{\omega}=0$, we find that
\begin{equation*}
\varphi_{0}(v, \omega, \vartheta)=\widetilde{\Phi_{\varphi}}(mv, m|v|^{2}+J\omega^{2}; \vartheta, \ov{\vartheta}),
\end{equation*}
namely that the value of $\widetilde{\Phi_{\varphi}}$ is {\em independent} of its second parameter $\ov{\vartheta}$. By repeating this argument by instead setting $v=0$ and $\omega=0$, we conclude that $\widetilde{\Phi_{\varphi}}$ is independent of both $\vartheta$ and $\ov{\vartheta}$, namely that
\begin{equation}\label{almostdone}
\varphi_{0}(v, \omega, \vartheta)+\varphi_{0}(\ov{v}, \ov{\omega}, \ov{\vartheta})=\Psi_{\varphi}(v+\ov{v}, |v|^{2}+\textstyle\frac{J}{m}\omega^{2}+|\ov{v}|^{2}+\frac{J}{m}\ov{\omega}^{2})
\end{equation}
for some new measurable function $\Psi_{\varphi}$. Since $\varphi$ is assumed to be a collision invariant, if follows that $\Psi_{\varphi}$ satisfies the identity
\begin{equation*}
\textstyle\Psi_{\varphi}(v, |v|^{2}+\frac{J}{m}\omega^{2})+\Psi_{\varphi}(\ov{v}, |\ov{v}|^{2}+\frac{J}{m}\ov{\omega}^{2})=\Psi_{\varphi}(v+\ov{v}, |v|^{2}+\frac{J}{m}\omega^{2}+|\ov{v}|^{2}+\frac{J}{m}\ov{\omega}^{2}).
\end{equation*}
Finally, setting $\omega=\ov{\omega}=0$, we infer that 
\begin{equation}\label{psihelp}
\Psi_{\varphi}(v, |v|^{2})+\Psi_{\varphi}(\ov{v}, |\ov{v}|^{2})=\Psi_{\varphi}(v+\ov{v}, |v|^{2}+|\ov{v}|^{2}).
\end{equation}
It is at this point we invoke an argument from \textsc{Truesdell and Muncaster} \cite{MR554086}. Let us now make the choice $\ov{v}=-v$, which yields from \eqref{psihelp} that
\begin{equation}\label{morepsi}
\Psi_{\varphi}(0, 2|v|^{2})=\Psi_{\varphi}(v, |v|^{2})+\Psi_{\varphi}(-v, |v|^{2}).
\end{equation}
Next, selecting any two orthogonal vectors $v, \ov{v}$, we deduce from \eqref{psihelp} that
\begin{align}\label{lastpsi}
\Psi_{\varphi}(0, 2|v|^{2}+2|\ov{v}|^{2}) =  & \quad \Psi_{\varphi}(0, 2|v+\ov{v}|^{2}) \notag \\
\overset{\eqref{morepsi}}{=} & \quad \Psi_{\varphi}(v+\ov{v}, |v|^{2}+|\ov{v}|^{2})+\Psi_{\varphi}(-v-\ov{v}, |v|^{2}+|\ov{v}|^{2}) \notag \\
\overset{\eqref{psihelp}}{=} & \quad \Psi_{\varphi}(v, |v|^{2})+\Psi_{\varphi}(\ov{v}, |\ov{v}|^{2})+\Psi_{\varphi}(-v, |v|^{2})+\Psi_{\varphi}(-\ov{v}, |\ov{v}|^{2}) \notag \\
\overset{\eqref{morepsi}}{=} & \quad \Psi_{\varphi}(0, 2|v|^{2})+\Psi_{\varphi}(0, 2|\ov{v}|^{2}).
\end{align}
Thus, the map $g_{1}(s):=\Psi_{\varphi}(0, s)$ satisfies Cauchy's functional equation on $[0, \infty)$, and is therefore necessarily of the form $g_{1}(s)=cs$ for some $c\in\mathbb{R}$. Now consider the map $g_{2}(v):=\Psi_{\varphi}(v, |v|^{2}
)-g_{1}(|v|^{2})$. One may check that $g_{2}$ is measurable and odd on $\mathbb{R}^{2}$, and by \eqref{psihelp} and \eqref{lastpsi} above is additive on orthogonal pairs of vectors in $\mathbb{R}^{2}$. It follows from (\textsc{Truesdell and Muncaster} \cite{MR554086}, page 88) that $g_{2}$ is necessarily of the form $g_{2}(v)=b\cdot v$ for some $b\in\mathbb{R}^{2}$. As $\Psi_{\varphi}(v, |v|^{2})=g_{1}(|v|^{2})+g_{2}(v)$, it follows that
\begin{equation*}
\Psi_{\varphi}(v, |v|^{2})=b\cdot v+c|v|^{2}.
\end{equation*}
Thus, setting $\ov{v}=0$ and $\ov{\omega}=0$ in \eqref{almostdone} above, we deduce that $\varphi_{0}$ satisfies
\begin{equation*}
\varphi_{0}(v, \omega, \vartheta)=b\cdot v+c\left(m|v|^{2}+J\omega^{2}\right)
\end{equation*}
for some $b\in\mathbb{R}^{2}$ and $c\in\mathbb{R}$. Since any function of $\vartheta\in\mathbb{S}^{1}$ is a collision invariant, the claim of the proposition is proved.
\end{proof}

\subsection*{Acknowlegdements}
We would like to Claude Viterbo for his valuable contribution to this article. We would also like to thank Paul Seidel for comments related to the group theoretic arguments employed in this article. MW would like to thank the Fondation Sciences Math\'{e}matiques de Paris for its support of a Postdoctoral Fellowship which allowed him to carry out this research at the \'{E}cole Normale Sup\'{e}rieure de Paris, and would also like to thank Harvard University, where a part of this work was completed.

\appendix

\section{On Groups Generated by Reflections (by Claude Viterbo)}\label{algebra}
\def \Id {\rm Id}
We shall here prove the following result about the transitive group action of $H_{\vartheta}^{\ov{\vartheta}}$ on $\mathbb{S}^{3}$. Let $\mu: \mathbb{S}^1 \rightarrow \mathbb{S}^3$ be a continuous curve and $s:\mathbb{S}^{1}\rightarrow\mathrm{O}(4)$ be the associated hyperplane symmetries with respect to $\mu^\perp$, namely $s_{\psi}=I-2\mu_{\psi}\otimes \mu_{\psi}$ for $\psi\in\mathbb{S}^{1}$. 

\begin{prop} \label{Prop-1.1}
The group generated by the reflection matrices $\{s_{\psi}\,:\,\psi\in\mathbb{S}^{1}\}$ acts transitively on $\mathbb{S}^3$ unless the image of $\mu$ is contained in a hyperplane of $\mathbb{R}^4$. 
\end{prop}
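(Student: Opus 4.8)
The plan is to recast the whole problem in quaternions, where the reflections $s_{\psi}$ and their pairwise products become explicit two-sided multiplications, and to identify the rotation part of the group they generate with the image of a subgroup of $\mathrm{SU}(2)\times\mathrm{SU}(2)$ that the spanning hypothesis pins down completely. I emphasise at the outset that the argument uses only that $\mu$ is \emph{continuous} (so that certain sets are connected) and that its image spans $\mathbb{R}^{4}$; no differentiability of $\mu$ is required. One half of the statement is immediate: if $\mathrm{Im}(\mu)\subseteq W^{\perp}$ for some $W\neq 0$, then each $s_{\psi}$ fixes $W$ pointwise, so the group fixes any unit vector of $W$ and cannot act transitively. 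So assume henceforth $\mathrm{span}\,\mathrm{Im}(\mu)=\mathbb{R}^{4}$.

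Identify $\mathbb{R}^{4}$ with the quaternions $\mathbb{H}$ and $\mathbb{S}^{3}$ with the unit quaternions, normalising $\mu_{\psi}\in\mathbb{S}^{3}$. A direct computation gives $s_{\psi}(x)=-\mu_{\psi}\bar{x}\mu_{\psi}$, hence
\[
s_{\psi}s_{\psi'}(x)=p\,x\,q,\qquad p=\mu_{\psi}\bar{\mu}_{\psi'},\quad q=\bar{\mu}_{\psi'}\mu_{\psi},
\]
with $p,q\in\mathbb{S}^{3}$. Since $\det s_{\psi}=-1$, the subgroup $H^{+}:=H\cap\mathrm{SO}(4)$ has index two in $H$ and is generated by the maps $s_{\psi}s_{\psi'}$. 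Writing $\Psi:\mathrm{SU}(2)\times\mathrm{SU}(2)\to\mathrm{SO}(4)$ for the usual surjective two-to-one homomorphism $(p,q)\mapsto(x\mapsto pxq)$, we get $H^{+}=\Psi(\widetilde{H})$, where $\widetilde{H}\leq\mathrm{SU}(2)\times\mathrm{SU}(2)$ is generated by $\mathcal{W}:=\{(\mu_{\psi}\bar{\mu}_{\psi'},\bar{\mu}_{\psi'}\mu_{\psi}):\psi,\psi'\in\mathbb{S}^{1}\}$. The set $\mathcal{W}$ is connected (continuous image of a torus), contains $(1,1)$, and is symmetric (interchanging $\psi$ and $\psi'$ inverts), so $\widetilde{H}=\bigcup_{n\geq 1}\mathcal{W}^{n}$ is an increasing union of path-connected sets and is therefore path-connected.

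The core of the proof is to show $\widetilde{H}=\mathrm{SU}(2)\times\mathrm{SU}(2)$, for then $H^{+}=\mathrm{SO}(4)$ is transitive on $\mathbb{S}^{3}$, whence so is $H\supseteq H^{+}$. I would argue in three steps. (i) \emph{Both projections of $\widetilde{H}$ are dense.} The first projection is the subgroup of $\mathbb{S}^{3}$ generated by $\Gamma\overline{\Gamma}:=\{\mu_{\psi}\bar{\mu}_{\psi'}\}$; this set is connected, infinite, and not contained in any $2$-dimensional linear subspace, since if it lay in some $\mathrm{span}(1,v)$ then, fixing $\psi'$, $\mathrm{Im}(\mu)$ would lie in the $2$-plane $\mathrm{span}(\mu_{\psi'},v\mu_{\psi'})$, contradicting the spanning hypothesis. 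The closure of the subgroup it generates is a closed subgroup of $\mathbb{S}^{3}$ whose identity component is $\{1\}$, a circle, or all of $\mathbb{S}^{3}$; the first is excluded because $\Gamma\overline{\Gamma}$ is infinite, and the second because $\Gamma\overline{\Gamma}$ is connected and contains $1$, so would lie in that circle and hence in a $2$-plane; therefore the closure is $\mathbb{S}^{3}$. The second projection is handled identically. (ii) \emph{Goursat.} Thus $\overline{\widetilde{H}}$ is a closed connected subgroup of $\mathrm{SU}(2)\times\mathrm{SU}(2)$ surjecting onto both factors, so by Goursat's lemma — together with the fact that every continuous surjective endomorphism of $\mathrm{SU}(2)$ is an inner automorphism — it is either the whole group or a twisted diagonal $\Delta_{u}=\{(g,ugu^{-1}):g\in\mathrm{SU}(2)\}$. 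The diagonal case is impossible: $\widetilde{H}\subseteq\Delta_{u}$ would give $\bar{\mu}_{\psi'}\mu_{\psi}=u(\mu_{\psi}\bar{\mu}_{\psi'})u^{-1}$ for all $\psi,\psi'$; rewriting the left side as $\bar{\mu}_{\psi'}(\mu_{\psi}\bar{\mu}_{\psi'})\mu_{\psi'}$ and letting $\psi$ run over a family with $\{\mu_{\psi}\bar{\mu}_{\psi'}\}$ spanning $\mathbb{H}$, conjugation by $\bar{\mu}_{\psi'}$ and conjugation by $u$ would agree on all of $\mathbb{H}$, forcing $\mu_{\psi'}\in\{u,-u\}$ for every $\psi'$, contradicting spanning once more. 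Hence $\overline{\widetilde{H}}=\mathrm{SU}(2)\times\mathrm{SU}(2)$. (iii) \emph{Dense implies everything.} Being arcwise connected, $\widetilde{H}$ is by Yamabe's theorem an analytic subgroup; its Lie algebra is a subalgebra of $\mathrm{Lie}(\mathrm{SU}(2))^{\oplus 2}$, and every \emph{proper} subalgebra is contained in a maximal one — either a twisted-diagonal $\mathrm{Lie}(\mathrm{SU}(2))$ or a $\mathrm{Lie}(\mathrm{SU}(2))\oplus(\text{Cartan line})$ — each of which integrates to a \emph{closed}, proper subgroup ($\mathrm{SU}(2)$, resp. $\mathrm{SU}(2)\times S^{1}$). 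So a proper Lie algebra would make $\overline{\widetilde{H}}$ proper, contradicting (ii); hence $\widetilde{H}$ has the full Lie algebra and equals $\mathrm{SU}(2)\times\mathrm{SU}(2)$.

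I expect the main obstacle to be the final step (iii): upgrading ``dense and path-connected'' to ``equal'' relies on Yamabe's theorem for path-connected subgroups of Lie groups and on the classification of subalgebras of $\mathrm{Lie}(\mathrm{SU}(2))^{\oplus 2}$ together with the observation that all the relevant maximal subalgebras integrate to closed subgroups; some care is also needed in step (ii) to exclude the twisted diagonal. By contrast, the geometric content is entirely in the elementary observation of step (i) — that ``$\mathrm{Im}(\mu)$ spans $\mathbb{R}^{4}$'' is exactly what prevents the two projections of $\widetilde{H}$ from being confined to $2$-planes — and once that and the non-degeneracy in (ii) are secured, the remainder is standard compact Lie group structure theory.
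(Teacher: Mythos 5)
Your argument is correct, but it takes a genuinely different route from the paper's. The paper proves the statement for general $\mathrm{O}(n)$ by combining three external inputs: the Eaton--Perlman theorem (an infinite reflection group whose mirrors' normals span is dense in $\mathrm{O}(n)$), the observation that the even part $K=G\cap\mathrm{SO}(n)$ is arcwise connected because products of an even number of reflections can be contracted along the connected parameter set, the Kuranishi--Yamabe--Goto theorem to make $K$ a Lie subgroup, and finally the fact that a maximal connected subgroup of $\mathrm{SO}(n)$ is closed (via the theorem that a dense-image homomorphism of a connected Lie group into $\mathrm{SO}(n)$ is onto), so that a dense connected subgroup must be everything. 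You instead exploit the four-dimensional accident $\mathrm{Spin}(4)\cong\mathrm{SU}(2)\times\mathrm{SU}(2)$: you replace the citation of Eaton--Perlman by a from-scratch density proof (classification of closed subgroups of $\mathbb{S}^{3}$ for the two projections, then Goursat plus the explicit exclusion of the twisted diagonal), and you replace the ``maximal connected subgroups are closed'' step by an explicit enumeration of the proper subalgebras of $\mathfrak{su}(2)\oplus\mathfrak{su}(2)$, each of which integrates to a closed proper subgroup. Both proofs hinge on the same two pillars --- density plus path-connectedness, upgraded via Yamabe --- but yours is self-contained and makes completely transparent where the spanning hypothesis enters (it is exactly what forbids the projections from living in a $2$-plane of $\mathbb{H}$ and what kills the diagonal), at the price of being locked to $n=4$, whereas the paper's version covers the general proposition it also states.

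Two small points of hygiene. First, your $\Psi(p,q)(x)=pxq$ is an anti-homomorphism in the second variable; use $x\mapsto px\bar{q}$ (equivalently replace $q$ by $\bar{q}=\bar{\mu}_{\psi}\mu_{\psi'}$ in the generating set) so that $\widetilde{H}$ really is a subgroup of $\mathrm{SU}(2)\times\mathrm{SU}(2)$ --- the subsequent analysis is unaffected since the relevant sets are stable under this relabelling. Second, in the Goursat step the common quotient could a priori be $\mathrm{SO}(3)$, yielding the index-two extension $\Delta_{u}\cup(1,-1)\Delta_{u}$ of the twisted diagonal; this case is disconnected and is therefore excluded by the path-connectedness of $\widetilde{H}$ that you have already established, but you should say so explicitly (or run Goursat at the Lie-algebra level, where only the full algebra and the graphs of automorphisms survive).
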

It is important to mention that proposition \ref{Prop-1.1} extends the work of \textsc{Eaton and Perlman} (\cite{MR0463329}, theorem 1), in the sense that we do not need to take the Euclidean closure of $\langle\{s_{\psi}\,:\,\psi\in\mathbb{S}^{1}\}\rangle$ in order to infer that it is indeed the whole group $\mathrm{O}(4)$. In what follows, we actually prove the following more general result, from which \ref{Prop-1.1} follows.
\begin{prop} 
Let $\mu :A \rightarrow \mathbb{S}^{n-1}$ be a continuous map, where $A$ is connected and not reduced to a point. Let $s_{\psi}:=I-2\mu_{\psi}\otimes\mu_{\psi}$ be hyperplane symmetry matrices with respect to $\mu_{\psi}^\perp$. The group $G$ generated by $\{s_{\psi}\,:\,\psi\in A\}\subseteq\mathrm{O}(n)$ is identically equal to $\mathrm{O}(n)$ unless there is a $k$-dimensional hyperplane $\Pi\subset\mathbb{R}^{n}$ ($k\leq n-1$) such that $\mu_{\psi}\in \Pi$ for all $\psi\in A$
\end{prop}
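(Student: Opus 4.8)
The plan is to prove the statement by induction on $n$, using connectedness of $A$ to produce ``enough'' reflections and a dimension-counting argument to force the generated group to be all of $\mathrm{O}(n)$. The base case $n=1$ is trivial: $\mathrm{O}(1)=\{\pm 1\}$, and unless $\mu$ is constant (hence lies in the $0$-dimensional subspace $\{0\}$) the map $\psi\mapsto \mu_\psi$ takes both values $\pm 1$ by the intermediate value theorem, so $s_\psi$ realises both elements. (One checks the case $n=2$ by hand as well: the reflections $s_\psi$ with $\mu_\psi$ ranging over a non-degenerate arc of $\mathbb{S}^1$ already generate all rotations, since the product of two reflections through lines at angle $\alpha$ is rotation by $2\alpha$, and these angles fill an interval.)

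For the inductive step, suppose the image $\{\mu_\psi : \psi \in A\}$ is \emph{not} contained in any proper hyperplane of $\mathbb{R}^n$; I want to show $G = \mathrm{O}(n)$. First I would record the elementary conjugation identity: for any $R \in \mathrm{O}(n)$ and any unit vector $\mu$, one has $R\,(I - 2\mu\otimes\mu)\,R^{-1} = I - 2(R\mu)\otimes(R\mu)$. Consequently $G$ is normalised by the reflections it contains, and the set of unit vectors $v$ for which $I-2v\otimes v \in G$ is stable under the action of $G$ itself; call this set $S \subseteq \mathbb{S}^{n-1}$. It contains the connected set $\mu(A)$, which spans $\mathbb{R}^n$. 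The crux is then to show $S = \mathbb{S}^{n-1}$, for then $G$ contains all reflections and hence (being generated by reflections, each of which is orthogonal of determinant $-1$, and containing products of pairs) equals $\mathrm{O}(n)$.

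To show $S = \mathbb{S}^{n-1}$: pick two points $\mu_{\psi_0}, \mu_{\psi_1} \in \mu(A)$ that are linearly independent (possible since the image spans $\mathbb{R}^n \supseteq \mathbb{R}^2$), and join them by a continuous arc $\psi \mapsto \mu_\psi$ inside $\mathbb{S}^{n-1}$ using connectedness of $A$. The products $s_{\psi}\, s_{\psi_0}$, as $\psi$ runs along this arc, are rotations in the (varying) $2$-plane spanned by $\mu_\psi$ and $\mu_{\psi_0}$, through angle $2\,\angle(\mu_\psi,\mu_{\psi_0})$; sweeping $\psi$ along the arc, this angle varies continuously through a non-degenerate interval, so $G$ contains a one-parameter family of rotations whose orbit of $\mu_{\psi_0}$ fills a neighbourhood of $\mu_{\psi_0}$ within the $2$-sphere it traces out. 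Iterating this — conjugating by elements already in $G$, and using that $\mu(A)$ spans so that we can move in all $n-1$ spherical directions — one shows $S$ is open in $\mathbb{S}^{n-1}$; since $S$ is also visibly closed (limits of reflections in $G$ need not lie in $G$ a priori, so here one must instead argue $S$ is open \emph{and} its complement is open, i.e. that $S$ is a union of connected components), connectedness of $\mathbb{S}^{n-1}$ (for $n \geq 2$) gives $S = \mathbb{S}^{n-1}$.

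The main obstacle I anticipate is exactly this last point: producing \emph{enough} reflections to conclude that $S$ is all of the sphere, \emph{without} taking a topological closure of $G$ — this is precisely where the result strengthens Eaton--Perlman, and the naive argument only yields that $\overline{G} = \mathrm{O}(n)$. The fix should be to exploit that the angles $\angle(\mu_\psi, \mu_{\psi_0})$ vary over a full \emph{interval} (not just a discrete set), so that the rotation subgroups obtained are genuine one-parameter families sitting inside $G$, and then to run an induction-on-dimension argument restricting to the orthogonal complement of a vector in $S$: if $v \in S$, then $G$ contains $I - 2v\otimes v$ and conjugating the remaining generators by it, together with projecting $\mu(A)$ suitably, lets one apply the inductive hypothesis in $\mathrm{O}(n-1)$ acting on $v^\perp$ to get all reflections there too. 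Combining the $\mathrm{O}(n-1)$'s attached to a spanning family of vectors $v$ forces $G = \mathrm{O}(n)$. Assembling these pieces carefully, and checking the spanning hypothesis is preserved under the relevant projections, is the technical heart of the argument.
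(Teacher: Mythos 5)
Your proposal correctly isolates the difficulty --- passing from $\overline{G}=\mathrm{O}(n)$ (which is what Eaton--Perlman gives) to $G=\mathrm{O}(n)$ without taking closures --- but it does not overcome it. The strategy of showing that the set $S=\{v\in\mathbb{S}^{n-1}\,:\,I-2v\otimes v\in G\}$ equals the whole sphere by proving it open with open complement has two unjustified steps. First, openness of $S$: the products $s_{\psi}s_{\psi_{0}}$ form a one-parameter \emph{family} of rotations with varying axes, not a one-parameter subgroup, and the orbit of $\mu_{\psi_{0}}$ under such a family is only a curve; iterating produces orbits that are continuous images of $A^{k}$ under word maps, and since $\mu$ is merely continuous there is no rank or invariance-of-domain argument available to show such an image contains an open set. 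Second, openness of the complement of $S$ is asserted but never argued, and there is no reason for it in general: for a dense non-closed subgroup the set of reflections it contains can perfectly well be dense without being open or closed (already in $\mathrm{O}(2)$, two reflections whose axes meet at an angle that is an irrational multiple of $\pi$ generate a group whose reflection axes form a countable dense subset of $\mathbb{S}^{1}$, so there $S$ is neither open nor closed and its complement is not open). Any successful argument must therefore use the connectedness of the generating family in an essential, global way. Your inductive step is also problematic: a reflection $s_{\psi}$ with $\mu_{\psi}\notin v^{\perp}$ does not preserve $v^{\perp}$, and ``projecting $\mu(A)$'' onto $v^{\perp}$ does not produce reflections of $v^{\perp}$ that lie in $G$.

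The paper's proof proceeds quite differently, and the difference is exactly where the content lies. After invoking Eaton--Perlman to conclude that $G$ is dense in $\mathrm{O}(n)$, it considers $K=G\cap\mathrm{SO}(n)$, the subgroup of even words in the generators, and observes that $K$ is \emph{arcwise connected}: since $A$ is connected, any product $s_{\psi_{1}}\cdots s_{\psi_{2p}}$ can be deformed inside $K$ to $s_{\psi_{0}}^{2p}=I$ by moving each index continuously to a basepoint. By the Kuranishi--Yamabe--Goto theorem a connected subgroup of a Lie group is a Lie subgroup, and a maximal connected subgroup of $\mathrm{SO}(n)$ is necessarily closed; a dense connected subgroup therefore cannot be contained in a proper maximal connected subgroup, so $K=\mathrm{SO}(n)$ and hence $G=\mathrm{O}(n)$. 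If you wish to repair your argument, this connectedness-of-$K$ observation, combined with the Lie-theoretic facts above, is the missing ingredient; the open-and-closed argument on the sphere cannot be made to work as stated.
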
 

Note that if the image of $\mu$ is contained in a hyperplane $\Pi$, the orthogonal set $\Pi^\perp$ is invariant by all elements of the group generated by $\{s_{\psi}\,:\,\psi\in A\}\subseteq\mathrm{O}(n)$ and thus the associated action on $\mathbb{S}^{n-1}$ cannot be transitive.

The following result has been proved in \cite{MR0463329}: if the group $G$ is infinite, then its closure is equal to $\mathrm{O}(n)$. But since the map $s$ is non constant, the group generated by the elements $s_{\psi}$ is necessarily infinite. 
We may thus assume $G$ is dense in $\mathrm{O}(n)$. 

We note that the hyperplane symmetries $s_{\psi}$ have determinant $-1$. It will be useful to consider the group $K$, the intersection of $G$ with all proper rotations of 4-space $\mathrm{SO}(4)$. 
Since every element of $K$ can be written as the product of an even number of matrices $s_{\psi}$, we have the following: 

\begin{lem} The group $K$ is arcwise connected
\end{lem}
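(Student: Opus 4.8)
The plan is to realise $K$ as the subgroup generated by a single path-connected set that contains the identity, and then to produce paths in $K$ to arbitrary elements by translating and concatenating. Fix once and for all a base angle $\psi_{0}\in\mathbb{S}^{1}$. Since each reflection is an involution, $s_{\psi}^{2}=I$, one has $(s_{\psi_{0}}s_{\psi})^{-1}=s_{\psi}s_{\psi_{0}}$ and, for any $\phi_{1},\phi_{2}\in\mathbb{S}^{1}$,
\[
s_{\phi_{1}}s_{\phi_{2}}=s_{\phi_{1}}\bigl(s_{\psi_{0}}s_{\psi_{0}}\bigr)s_{\phi_{2}}=(s_{\psi_{0}}s_{\phi_{1}})^{-1}\,(s_{\psi_{0}}s_{\phi_{2}}).
\]
Grouping the factors of an arbitrary element $s_{\phi_{1}}s_{\phi_{2}}\cdots s_{\phi_{2m}}$ of $K$ (which has even length, as already noted) into consecutive pairs and applying this identity shows that every element of $K$ is a finite product of elements of the form $(s_{\psi_{0}}s_{\psi})^{\pm 1}$ with $\psi\in\mathbb{S}^{1}$; since each $s_{\psi_{0}}s_{\psi}$ is plainly itself in $K$, this gives $K=\langle\, s_{\psi_{0}}s_{\psi}\,:\,\psi\in\mathbb{S}^{1}\,\rangle$.

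Next I would check that the generating set, together with its inverses, forms a path-connected subset of $K$ through $I$. The map $\psi\mapsto s_{\psi}=I-2\mu_{\psi}\otimes\mu_{\psi}$ is continuous because $\mu$ is continuous and matrix multiplication is continuous; hence $\psi\mapsto s_{\psi_{0}}s_{\psi}$ and $\psi\mapsto s_{\psi}s_{\psi_{0}}$ are continuous maps of the path-connected space $\mathbb{S}^{1}$ into $\mathrm{SO}(4)$, each carrying $\psi_{0}$ to $I$ and each taking values in $K$. Therefore, for every generator $c=s_{\psi_{0}}s_{\psi}$, both $c$ and $c^{-1}=s_{\psi}s_{\psi_{0}}$ can be joined to the identity by a continuous path lying entirely in $K$, namely the image of a suitable arc in $\mathbb{S}^{1}$ under one of these two maps.

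It then remains to join an arbitrary $g\in K$ to $I$ inside $K$. Write $g=c_{1}c_{2}\cdots c_{r}$ with each $c_{i}$ either a generator $s_{\psi_{0}}s_{\psi}$ or its inverse, and let $\alpha_{i}:[0,1]\to K$ be a path from $I$ to $c_{i}$ as just constructed. The concatenation of $\alpha_{1}$, then $t\mapsto c_{1}\,\alpha_{2}(t)$, then $t\mapsto c_{1}c_{2}\,\alpha_{3}(t)$, and so on up to $t\mapsto c_{1}\cdots c_{r-1}\,\alpha_{r}(t)$, is a well-defined path: consecutive pieces agree at their endpoints, and each piece stays in $K$ because left multiplication by an element of $K$ maps $K$ into $K$. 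This path runs from $I$ to $g$, so any two elements of $K$ are joined by a path through $I$; hence $K$ is path-connected, and since $\mathrm{SO}(4)$ is Hausdorff this is the same as arcwise connected. The only step requiring any care is the first one — absorbing all the base angles appearing in a word into a single common $\psi_{0}$, which is exactly what exhibits $K$ as generated by a connected set through $I$; everything afterwards is the standard translate-and-concatenate argument. (In the general setting of the preceding proposition the same proof applies verbatim as soon as the parameter space $A$ is path-connected, as it is here; for a merely connected $A$ the argument still yields connectedness of $K$.)
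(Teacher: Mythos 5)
Your proof is correct and is essentially the same argument as the paper's: both rest on the continuity of $\psi\mapsto s_{\psi}$, the arcwise connectedness of $\mathbb{S}^{1}$, and the fact that every element of $K$ is an even-length word in the reflections, so that continuously moving the angles to a basepoint keeps the word inside $K$. The paper merely performs the homotopy on all $2p$ factors simultaneously, deforming $g=s_{\psi_{1}}\cdots s_{\psi_{2p}}$ to $s_{1}^{2p}=I$, whereas you pair the factors into generators $s_{\psi_{0}}s_{\psi}$ and concatenate paths one generator at a time --- a cosmetic difference.
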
 
\begin{proof}  As $\mathbb{S}^1$ is arcwise connected, we have $g=s_{\psi_{1}}s_{\psi_{2}}...s_{\psi_{2p-1}}s_{\psi_{2p}}$ is homotopic to $s_{1}s_{1}...s_{1}s_{1}=s_{1}^{2p}=I$ in $\mathrm{SO}(4)$ for any $g\in K$, where $1$ denotes the identity element of $\mathbb{S}^{1}$.
 \end{proof} 
 
We shall also need the following theorem.

 \begin{thm} [Kuranishi-Yamabe-Goto]\label{K-Y-G} Let $H$ be any connected subgroup of a Lie group $G$. Then $H$ is a Lie group. Moreover, there is a Lie subalgebra $\mathfrak h$ of  $\mathfrak g$ such that there exists a neighbourhood $V$ of the identity $e$ in $H$ and $U\subset \mathfrak h$ with $V= \exp(\mathfrak h\cap U)$
\end{thm}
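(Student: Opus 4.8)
The plan is to prove this as a version of Yamabe's theorem on arcwise-connected subgroups, reducing the general statement to the closed case handled by \'{E}. Cartan. First I would replace $G$ by the closure $\overline{H}$: by Cartan's closed subgroup theorem, $\overline{H}$ is an embedded Lie subgroup of $G$ with Lie algebra $\mathfrak{g}_{1}=L(\overline{H})\subseteq\mathfrak{g}$, so after replacing $\mathfrak{g}$ by $\mathfrak{g}_{1}$ we may assume $H$ is dense in a connected Lie group $G$. The candidate subalgebra is $\mathfrak{h}:=\{X\in\mathfrak{g}\,:\,\exp(tX)\in H\ \text{for all}\ t\in\mathbb{R}\}$, the union of all one-parameter subgroups lying in $H$; it is visibly stable under scaling, and the goal is to show (i) $\mathfrak{h}$ is a Lie subalgebra, and (ii) $\exp(\mathfrak{h}\cap U)$ is a neighbourhood of $e$ in $H$ for a suitable neighbourhood $U$ of $0$, which together equip $H$ with the structure of an immersed Lie group as claimed. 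I will treat the substantive arcwise-connected case, which is what the application provides and what carries the content of the theorem.

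The analytic heart of the argument is a direction lemma: if $h_{k}\in H\setminus\{e\}$ with $h_{k}\to e$, and if, writing $h_{k}=\exp(X_{k})$ with $X_{k}\to 0$, the normalised directions $X_{k}/\|X_{k}\|$ converge to some $X$, then $X\in\mathfrak{h}$. I would prove this by choosing integers $n_{k}$ with $n_{k}\|X_{k}\|\to t$ and using Baker--Campbell--Hausdorff estimates to show $h_{k}^{n_{k}}=\exp(n_{k}X_{k}+O(\|X_{k}\|))\to\exp(tX)$. Each $h_{k}^{n_{k}}$ lies in $H$, so the limit lies a priori only in $\overline{H}=G$; to confine it to $H$ I would argue by induction on $\dim G$, recognising the relevant one-parameter subgroup inside a strictly smaller ambient Lie group where the inductive hypothesis applies. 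Closure of $\mathfrak{h}$ under addition and bracket then follows from the Trotter product formula $\exp(t(X+Y))=\lim_{n}(\exp(\frac{t}{n}X)\exp(\frac{t}{n}Y))^{n}$ and the analogous commutator formula for $\exp(t^{2}[X,Y])$, feeding the approximating products (which lie in $H$) into the direction lemma.

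With $\mathfrak{h}$ a subalgebra, I would establish the local description (ii) by a transversality argument. Fix a complement $\mathfrak{m}$ with $\mathfrak{g}=\mathfrak{h}\oplus\mathfrak{m}$, so that $(X,Y)\mapsto\exp(X)\exp(Y)$ is a diffeomorphism from a neighbourhood of $0$ in $\mathfrak{h}\oplus\mathfrak{m}$ onto a neighbourhood of $e$. If $\exp(\mathfrak{h}\cap U)$ failed to be a neighbourhood of $e$ in $H$, there would be elements of $H$ with nonzero $\mathfrak{m}$-component accumulating at $e$; applying the direction lemma to such a sequence produces a direction in $\mathfrak{h}$ with nontrivial $\mathfrak{m}$-part, contradicting $\mathfrak{h}\cap\mathfrak{m}=\{0\}$. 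Here arcwise connectedness is used to guarantee a genuine supply of such nearby elements along paths issuing from $e$, and to identify the component of $H$ through $e$ with the integral subgroup of $\mathfrak{h}$. Transporting the resulting chart by left translations equips $H$ with a manifold structure making multiplication and inversion smooth, which proves $H$ is a Lie group with the stated exponential neighbourhood.

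The main obstacle I anticipate is precisely the step of confining the limits $\exp(tX)$ to $H$ rather than merely to its closure $\overline{H}=G$: since $H$ is not assumed closed, the product and commutator formulas place only the approximants, not their limits, inside $H$. Overcoming this is the entire force of the theorem, and my strategy meets it through the reduction to $\overline{H}$ together with the dimension induction, so that at each stage one need only recognise one-parameter subgroups inside a strictly smaller Lie group where the hypothesis can be reapplied. The arcwise-connected hypothesis (available in the application, where the group $K$ was shown to be arcwise connected) is indispensable, since totally disconnected dense subgroups such as $\mathbb{Q}^{n}\subset\mathbb{R}^{n}$ show the conclusion fails without it.
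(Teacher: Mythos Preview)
The paper does not prove this theorem at all: it simply cites the literature (\cite{MR0233923}, together with \cite{MR2047118}), treating the Kuranishi--Yamabe--Goto result as a classical black box to be invoked. So there is no proof in the paper to compare against; your proposal is an attempt to supply what the paper deliberately omits.

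Your sketch is broadly the standard route to Yamabe's theorem: define $\mathfrak{h}$ as the set of directions of one-parameter subgroups lying in $H$, prove it is a subalgebra via the direction lemma together with Trotter/commutator formulas, and then use a complementary decomposition $\mathfrak{g}=\mathfrak{h}\oplus\mathfrak{m}$ to obtain the local chart. You correctly identify the crux --- that the limits $\exp(tX)$ must land in $H$ and not merely in $\overline{H}$ --- and you are right that this is precisely where arcwise connectedness does the work. Your proposed ``dimension induction'' is more hand-wavy than the classical argument, which instead uses arcs in $H$ directly: a path $\gamma:[0,1]\to H$ with $\gamma(0)=e$ furnishes, via difference quotients and the direction lemma, genuine one-parameter subgroups inside $H$, and one bootstraps from there. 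You also note, correctly, that the statement as written says ``connected'' while the standard Yamabe argument needs ``arcwise connected''; the application in the paper (the group $K$) is explicitly shown to be arcwise connected, so this suffices for the purpose at hand, and the attribution to Kuranishi and Goto is presumably meant to cover the extension. In short: your outline is sound for what the paper actually needs, though since the paper itself only cites, a reference would have been an equally acceptable answer.
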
 
 \begin{proof} 
We refer the reader to \cite{MR0233923} (see also \cite{MR2047118} Theorem 11 p. 292, and p. 196). 
 \end{proof} 
Finally if $G$ is a Lie group and $H$ a connected subgroup, there is a (proper) maximal connected subgroup of $G$ containing $H$. We do not require Zorn's lemma, since we may simply take a subgroup of maximal dimension strictly less than $\dim(G)$ containing $H$. 
 
\begin{lem} 
A connected maximal subgroup of $\mathrm{SO}(n)$ is necessarily closed, hence compact. 
\end{lem}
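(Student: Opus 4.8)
The plan is to dispose of the lemma by controlling the closure of the subgroup. Let $M$ be a connected maximal subgroup of $\mathrm{SO}(n)$, in the sense made precise above: $M$ is connected, proper, and no connected subgroup lies strictly between $M$ and $\mathrm{SO}(n)$. First I would observe that the closure $\overline{M}$ is again a connected subgroup of $\mathrm{SO}(n)$, since the closure of a subgroup of a topological group is a subgroup and the closure of a connected set is connected. As $M \subseteq \overline{M} \subseteq \mathrm{SO}(n)$, maximality of $M$ forces either $\overline{M} = M$, so that $M$ is closed and we are done, or $\overline{M} = \mathrm{SO}(n)$, so that $M$ is dense. The entire content of the lemma is therefore to exclude the possibility that $M$ is a proper \emph{dense} connected subgroup, and I would argue by contradiction, assuming from now on that $M$ is dense and $M \neq \mathrm{SO}(n)$.

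By the Kuranishi--Yamabe--Goto theorem (Theorem \ref{K-Y-G}), $M$ carries the structure of an immersed Lie subgroup with Lie subalgebra $\mathfrak{m} \subseteq \mathfrak{so}(n)$, and a neighbourhood of the identity in $M$ is $\exp(\mathfrak{m} \cap U)$, so $M$ is generated by $\exp(\mathfrak{m})$. For each generator $\exp X$ with $X \in \mathfrak{m}$ one has $\mathrm{Ad}(\exp X) = e^{\mathrm{ad}\, X}$, and $\mathrm{ad}\, X$ preserves the subalgebra $\mathfrak{m}$; hence $\mathrm{Ad}(h)\mathfrak{m} = \mathfrak{m}$ for every $h \in M$. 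Now the set $N := \{ g \in \mathrm{SO}(n) : \mathrm{Ad}(g)\mathfrak{m} = \mathfrak{m} \}$ is a closed subgroup (it is the stabiliser of the point $\mathfrak{m}$ in the relevant Grassmannian under the continuous adjoint action), and it contains $M$; being closed it contains $\overline{M} = \mathrm{SO}(n)$, so $N = \mathrm{SO}(n)$. In other words, $\mathfrak{m}$ is an ideal of $\mathfrak{so}(n)$.

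The next step exploits the structure of $\mathfrak{so}(n)$: for $n \geq 3$ it is semisimple of compact type, so the ideal $\mathfrak{m}$ is a direct sum of some of its simple ideals. I would then invoke the standard fact that in a compact connected Lie group the analytic subgroup attached to an ideal which is a sum of simple ideals is closed: pulling back through the universal cover $\widetilde{\mathrm{SO}(n)} = \prod_i \widetilde{G_i}$, with each $\widetilde{G_i}$ compact by Weyl's theorem, the group $M$ is the image of a product of some of these compact factors, hence compact, hence closed in $\mathrm{SO}(n)$. The extreme cases are included: $\mathfrak{m} = 0$ gives $M = \{e\}$, and $\mathfrak{m} = \mathfrak{so}(n)$ makes $M$ open and hence equal to $\mathrm{SO}(n)$. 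But a closed dense subgroup of $\mathrm{SO}(n)$ is all of $\mathrm{SO}(n)$, contradicting that $M$ is proper. This contradiction shows $M$ cannot be dense, so $\overline{M} = M$; being a closed subgroup of the compact group $\mathrm{SO}(n)$, $M$ is compact.

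The step I expect to be the main obstacle is the implication "$\mathfrak{m}$ an ideal $\Rightarrow$ $M$ closed". This genuinely fails for general Lie groups — the irrational winding line in a torus is a dense connected subgroup whose Lie algebra is an ideal — so one must really use that $\mathfrak{so}(n)$ is semisimple with no toral part and that the simply connected compact semisimple groups are compact. Once that structural input is available, the argument is essentially formal; the only other point requiring mild care is the identification, via Kuranishi--Yamabe--Goto, of the abstract subgroup $M$ with an analytic subgroup so that its Lie algebra and the compactness argument above apply.
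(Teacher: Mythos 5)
Your proof is correct, but it does not follow the route of the paper's own proof. The paper settles the dichotomy ``closed or dense'' by citing an external result (Theorem 1.3 of the reference given there): a Lie group homomorphism with dense image into a connected Lie group whose centre has trivial connected component is surjective, which kills the dense case immediately when applied to the inclusion of the maximal connected subgroup into $\mathrm{SO}(n)$. You instead run the Lie-algebra argument to completion: density together with closedness of the stabiliser of $\mathfrak{m}$ under the adjoint action forces $\mathfrak{m}$ to be an ideal, and the structure theory of compact semisimple algebras (every ideal is a sum of simple ideals, Weyl's theorem makes the corresponding analytic subgroup compact, hence closed) then excludes a proper dense integral subgroup. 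This is exactly the argument the paper relegates to the remark following the lemma and only claims for $n\neq 4$, where $\mathfrak{so}(n)$ is simple and one concludes $\mathfrak{m}=0$ or $\mathfrak{m}=\mathfrak{so}(n)$ outright; your observation that semisimplicity rather than simplicity suffices --- every ideal of $\mathfrak{so}(4)=\mathfrak{su}(2)\oplus\mathfrak{su}(2)$ still integrates to a compact, hence closed, subgroup of $\mathrm{SO}(4)$ --- upgrades that remark to cover the case $n=4$ that the paper actually needs, and makes the lemma self-contained modulo Theorem \ref{K-Y-G} and standard structure theory. You also correctly isolate the genuine danger point: an ideal's integral subgroup need not be closed in a general Lie group (the irrational winding on a torus), and it is the absence of a toral part in $\mathfrak{so}(n)$ that rescues the argument. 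The only (harmless) caveat is that your semisimplicity claim requires $n\geq 3$; for $n\leq 2$ the lemma is trivial, and the paper only needs $n=4$.
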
 
\begin{proof} A maximal subgroup is either closed or dense. 
We could use (\cite{MR1297165} Theorem 1.3, p. 628) applied to the special case of $\mathrm{SO}(n)$, which has the property the the connected component of its center is trivial. This result states the following:
if $G$ is a connected Lie group and $h: G \rightarrow \mathrm{SO}(n)$ is a Lie group homomorphism with dense image, then $h(G)=\mathrm{SO}(n)$. 
\end{proof} 
\begin{rem}
In the case $n\neq 4$ when the group $\mathrm{SO}(n)$ is simple, we have a simpler proof. Indeed, according to Theorem \ref{K-Y-G}, such a subgroup corresponds to a Lie algebra of $\mathrm{so}(n)$.
Let then $\mathfrak h$ be a Lie subalgebra of $\mathrm{so}(n)$ corresponding to a dense subgroup $H$. Since $\mathrm{Ad}(g)\mathfrak h =\mathfrak h$ for all $g\in H$, we have by density that this still holds for any $g\in \mathrm{SO}(n)$, hence $\mathfrak h$ is an ideal of $\mathrm{so}(n)$ and $H$ is a connected normal subgroup of $\mathrm{SO}(n)$. But this is impossible, since $\mathrm{so}(n)$ is a simple Lie algebra. 
\end{rem}
\begin{proof} [Proof of Proposition \ref{Prop-1.1}]
The group $K$ is dense, connected, and contained in a maximal connected subgroup which is of course dense. Thus $K=\mathrm{SO}(n)$. It is then follows at once that $G=\mathrm{O}(n)$. 
\end{proof}

\bibliography{biblio}

\end{document}